\def\COLTmode{0}

\ifnum\COLTmode=1
\documentclass{colt2016}
\else
\documentclass[11pt,letterpaper]{article}
\fi
\usepackage{times}

\def\showauthornotes{0}

\usepackage
{
   amssymb, amsmath
}

\ifnum\COLTmode=0
\newcommand{\COLT}[1]{}
\newcommand{\notCOLT}[1]{#1}
\else
\newcommand{\COLT}[1]{#1}
\newcommand{\notCOLT}[1]{}
\fi

\ifnum\COLTmode=0
\usepackage{amsthm}
\usepackage[round]{natbib}
\usepackage{color,xspace}
\usepackage{fullpage}
\usepackage[unicode,colorlinks=true,citecolor=blue,filecolor=black,linkcolor=blue,urlcolor=black, pdfpagelabels, plainpages=false]{hyperref}

\newcommand {\acks}[1]{\section*{Acknowledgement}#1}

\newtheorem{theorem}{Theorem}

\newtheorem*{theorem*}{Theorem}

\newtheorem*{proposition*}{Proposition}
\newtheorem{lemma}[theorem]{Lemma}
\newtheorem*{lemma*}{Lemma}
\newtheorem{corollary}[theorem]{Corollary}
\theoremstyle{definition}
\newtheorem{definition}[theorem]{Definition}

\newcommand*{\doi}[1]{}
\fi

\newcommand{\OpenFrame}{\notCOLT{\rule{0pt}{12pt} \hrule height 0.8pt \rule{0pt}{1pt} \hrule height 0.4pt}\vspace{-4pt}\COLT{\medskip}}
\newcommand{\CloseFrame}{\notCOLT{\vspace{-6pt}\hrule height 0.4pt \rule{0pt}{1pt} \hrule height 0.8pt \rule{0pt}{12pt}}}

\newtheorem{fact}[theorem]{Fact}
\newtheorem*{fact*}{Fact}

\newtheorem*{hypothesis*}{Hypothesis}
\newtheorem{claim}[theorem]{Claim}
\newtheorem*{claim*}{Claim}
\newtheorem*{remark*}{Remark}

\newtheorem*{observation*}{Observation}

\usepackage{nicefrac}


\newcommand{\eps}{\epsilon}


\ifnum\showauthornotes=1
\newcommand{\Authornote}[2]{{\sffamily\small\color{red}{[#1: #2]}}}
\newcommand{\Authornotecolored}[3]{{\sffamily\small\color{#1}{[#2: #3]}}}
\newcommand{\Authorcomment}[2]{{\sffamily\small\color{gray}{[#1: #2]}}}
\newcommand{\Authorstartcomment}[1]{\sffamily\small\color{gray}[#1: }

\newcommand{\Authorfnote}[2]{\footnote{\color{red}{#1: #2}}}
\newcommand{\Authorfixme}[1]{\Authornote{#1}{\textbf{??}}}
\newcommand{\Authormarginmark}[1]{\marginpar{\textcolor{red}{\fbox{\Large #1:!}}}}
\else
\newcommand{\Authornote}[2]{}
\newcommand{\Authornotecolored}[3]{}
\newcommand{\Authorcomment}[2]{}
\newcommand{\Authorstartcomment}[1]{}

\newcommand{\Authorfnote}[2]{}
\newcommand{\Authorfixme}[1]{}
\newcommand{\Authormarginmark}[1]{}
\fi

\newcommand {\roundup}   [1] {{\lceil {#1} \rceil}}
\newcommand {\rounddown} [1] {{\lfloor {#1} \rfloor}}


\newcommand{\Set}[1]{\left\{#1\right\}}

\newcommand{\norm}[1]{\lVert#1\rVert}







\newcommand{\Iprod}[1]{\left\langle#1\right\rangle}

\newcommand{\Esymb}{\mathbb{E}}
\newcommand{\Psymb}{\mathbb{P}}
\newcommand{\Vsymb}{\mathbb{V}}

\DeclareMathOperator*{\E}{\Esymb}
\DeclareMathOperator*{\Var}{\Vsymb{}ar}
\DeclareMathOperator*{\ProbOp}{\Psymb}
\DeclareMathOperator{\core}{core}

\renewcommand{\Pr}{\ProbOp}

\newcommand{\given}{\mid}



\newcommand{\textparen}[1]{\text{(#1)}}

\ifx\because\undefined
\newcommand{\because}[1]{\textparen{because #1}}
\else
\renewcommand{\because}[1]{\textparen{because #1}}
\fi

\newcommand{\symdiff}{\Delta}




\DeclareMathOperator{\sdp}{sdp}
\DeclareMathOperator{\plant}{planted}

\DeclareMathOperator{\avg}{Avg}
\DeclareMathOperator*{\aavg}{Avg}




\newcommand{\Erdos}{Erd\H{o}s\xspace}
\newcommand{\Renyi}{R\'enyi\xspace}




\newcommand{\cG}{\mathcal G}

\newcommand{\calE}{{\cal E}}

\newcommand{\bbZ}{\mathbb Z}

\let\epsilon=\varepsilon

\numberwithin{equation}{section}



\newcommand{\dis}[2]{\tfrac{1}{2}\norm{#1 - #2}^2}

\newcommand{\Kg}{K_G}
\newcommand{\ubar}{\bar u}
\newcommand{\vbar}{\bar v}
\newcommand{\SBM}{\text{SBM}}
\newcommand{\dkl}[1]{d_{\text{KL}} \left( #1 \right)}

\newcommand{\constsparse}{c_{\ref{thm:distances}}}
\newcommand{\constsparseB}{c_{\ref{cor:distances}}}


\setcounter{page}{1}

\date{}

\ifnum\COLTmode=1
\title[Learning Communities in the Presence of Errors]{Learning Communities in the Presence of Errors}
 \coltauthor{\Name{Konstantin Makarychev }\addr Microsoft Research \AND \Name{Yury Makarychev }\addr TTIC \AND \Name{Aravindan Vijayaraghavan }\addr Northwestern University}
\else
\title{Learning Communities in the Presence of Errors}
\author{Konstantin Makarychev\\Microsoft Research \and Yury Makarychev\\TTIC \and Aravindan Vijayaraghavan\\Northwestern University}
\fi
\begin{document}

\maketitle

\begin{abstract}

We study the problem of learning communities in the presence of modeling errors and give  robust recovery algorithms for the Stochastic Block Model (SBM). This model, which is also known as the Planted Partition Model,  is widely used for community detection and graph partitioning in various fields, including machine learning,  statistics, and social sciences.
Many algorithms exist for learning communities in the Stochastic Block Model, but they do not work well in the presence of errors.

In this paper, we initiate the study of robust algorithms for partial recovery in SBM with modeling errors or noise.
We consider graphs generated according to the Stochastic Block Model and then modified by an adversary. We allow two types of adversarial errors,
Feige--Kilian or monotone errors, and edge outlier errors. Mossel, Neeman and Sly (STOC 2015) posed an open question about whether an almost exact recovery is possible when the adversary is allowed to add $o(n)$ edges. Our work answers this question affirmatively even in the case of $k>2$ communities.

We then show that our algorithms work not only when the instances come from SBM, but also work when the instances come from
any distribution of graphs that is $\varepsilon m$ close to SBM in the Kullback--Leibler divergence.
This result also works in the presence of adversarial errors. Finally, we present almost tight lower bounds for two communities.
 \end{abstract}


\section{Introduction}

Probabilistic models are ubiquitous in machine learning and widely used to find hidden structure in unlabeled data.
The Stochastic Block Model (SBM), which is also known as Planted Partition Model, is the most studied probabilistic model for community detection
and graph partitioning.
There has been extensive research on the model in various fields, including machine learning,  statistics, computer science, and social sciences over the last three decades (this research is summarized in Section~\ref{sec:prior}). Until recently, research on SBM was focused on graphs with  a
poly-logarithmic, in the number of vertices, average degree. In the past few years, however, most of the research has shifted toward graphs with a constant average degree, and there has been
significant progress in the understanding of the conditions under which a partial recovery is possible for such graphs in SBM.
In particular, \cite{M14} and \cite{MNS12,MNS13} have derived sharp conditions under which
a partial recovery is possible for the case of two communities (clusters).

Yet most existing algorithms are not robust --- they rely on the instance being drawn exactly from the
given probabilistic model, and thus may fail in the presence of noise.
For instance, while spectral algorithms have good provable guarantees for learning communities in SBM,
they crucially rely on strong spectral properties of random graphs, which are brittle to a small amount of noise.

Algorithms most commonly employed in practice, for learning various probabilistic models are based on maximum likelihood estimation.
They have many desirable properties from a statistical standpoint, since maximum likelihood estimation is robust to many modeling errors.
 However, they do not typically have polynomial running time guarantees.
This leads to a natural question: \textit{Can we design algorithms for learning communities in SBM, which are both efficient (polynomial time) and tolerant to adversarial modeling errors?}

In this paper, we present polynomial-time algorithms that perform robust recovery for the Stochastic Block Model (SBM).
Our algorithms work in the presence of different types of adversarial noise: edge outlier errors, monotone errors, and a modeling error measured in the Kullback--Liebler divergence. Our results
give an affirmative answer to the question posed by \cite{MNS15},
 whether an almost exact recovery is possible when the adversary is allowed to add $o(n)$ edges.

Let us now recall the definition of the Stochastic Block Model\footnote{We note that some papers denote by $n$ not the number of vertices in each cluster but the total number of vertices. Our $\SBM(n, k, a, b)$ model is
the same as their $\SBM'(kn, k, ka, kb)$ model.}.
\begin{definition}[Stochastic Block Model]\label{def:sbm}
A graph $G_{sb}(V,E_{sb})$ with $N=nk$ vertices is generated according to the Stochastic Block Model  $\SBM(n,k,a,b)$ (where $a \ge b$) as follows:
\begin{enumerate}
\item There is a equipartition $P^*=\left(V^*_1, V^*_2, \dots, V^*_k\right)$ of vertices $V$ with $|V^*_i|=n$ for each $i \in [k]$.
\item For each $i \in [k]$, and for any two vertices $u, v \in V^*_i$, there is an edge $(u,v) \in E_{sb}$ with probability $a/n$.
\item For each $i, j \in [k]$ with $i \ne j$, and for any two vertices $u \in V^*_i, v \in V^*_j$, there is an edge $(u,v)\in E_{sb}$ with probability $b/n$.
\end{enumerate}
We denote the expected number of edges in $G$ by $m$: $m= \tfrac{1}{2}(nka+n k(k-1)b)$.
\end{definition}
We consider the Stochastic Block model with two types of modeling errors (adversarial noise): the outlier errors and
Feige--Kilian~(\citeyear{FKil}) (monotone) errors.
\begin{definition}[Stochastic Block Model with modeling errors]\label{def:outliers}
In the Stochastic Block Model $\SBM(n,k,a,b)$ with modeling errors, the graph $G(V,E)$  is generated as follows.
First, a random graph $G_{sb}=(V, E_{sb})$ is sampled from the Stochastic Block Model $\SBM(n,k,a,b)$.
Then the adversary adds some new edges to $E'$ and removes some existing edges from $E'$. Specifically,
the adversary may do the following:
\begin{enumerate}
\item In \textit{the Feige--Kilian or monotone error model}, the adversary may add any edges within the clusters and remove any clusters between the clusters.
\item In \textit{the model with $\varepsilon m$ outliers}, the adversary may choose $\varepsilon_1\geq 0$ and $\varepsilon_2 \geq 0 $ with $\varepsilon_1 + \varepsilon_2 \leq
 \varepsilon$, then add at most $\varepsilon_1 m$ edges between the clusters and remove at most $\varepsilon_2 m$ edges within the clusters.
 \item In the model with two types of errors, the adversary may introduce both types of errors.
 \end{enumerate}
\end{definition}
Our goal is to find the unknown planted partition $(V_1^*, \dots, V_k^*)$ given the graph $G = (V,E)$ from the Stochastic Block Model with modelling errors. However, in this paper,
we focus on the regime where the exact recovery is impossible even information--theoretically. So we are interested in designing polynomial--time algorithms that partially recover the planted partition.
\begin{definition}\label{def:close-and-recovery}
We say that a partition $V_1, \dots, V_k$ is $\delta$-close to the planted partition $V_1^*,\dots, V_k^*$, if each cluster $V_i$ has size exactly $n$ and there is a permutation $\sigma$ of indices such that $$\Bigl|\bigcup_{j = \sigma(i)}   V_i^* \cap V_j\Bigr| \geq (1 - \delta) kn.$$
An algorithm $(1-\delta)$-partially recovers the planted partition if it finds a partition that is $\delta$--close to the planted partition.
\end{definition}

We present two algorithms for partial recovery. The first algorithm can handle instances with both monotone and outlier errors, while the second algorithm
handles only instances with outlier errors. The second algorithm also has stronger requirements on $a$ and $b$. However, it has a much better recovery guarantee.
\begin{theorem}[First Algorithm]\label{thm:main}
Consider the stochastic block model $\SBM(n,k,a,b)$ with $\varepsilon m$ outliers and monotone errors, and suppose $a+b(k-1) \ge C_0$ for some universal constant $C_0> 1$.
There is a polynomial-time algorithm that $(1-\delta)$-partially recovers the planted partition given an instance of the model, where
$$\delta = O\Bigl(\frac{ \sqrt{a+b(k-1)} }{a-b}+\frac{\epsilon\left(a+b(k-1)\right)}{a-b}\Bigr).$$
The algorithm succeeds with probability at least $1-2\exp\left(-2N\right)$ over the randomness of the instance.

 Furthermore, for any $\eta \in (1/(a+b(k-1)),\tfrac{1}{2})$, with probability at least
 $1-2\exp(-\eta m)$, the algorithm $(1-\delta')$-partially recovers the planted partition
 with
 $$\delta' = O\Bigl(\frac{(\epsilon+\sqrt{\eta})\left(a+b(k-1)\right)}{a-b}\Bigr).$$
\end{theorem}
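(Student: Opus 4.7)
The plan is to solve an SDP relaxation of the maximum intra-cluster-edge (min-uncut) problem whose optimum is almost preserved under monotone edits, shifts by at most $\epsilon m$ under outlier edits, and is tightly concentrated under the SBM randomness; and then round via inner-product thresholding followed by a balancing step.

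\textbf{Step 1 (SDP).} I would introduce PSD variables $X \succeq 0$ with diagonal $X_{vv}=1$, off-diagonal $-\tfrac{1}{k-1} \leq X_{uv} \leq 1$, and row-sum constraints $\sum_w X_{vw} \leq n$ enforcing the equipartition at the relaxed level. Maximize $\langle A_G,X\rangle$. The planted matrix $X^*_{uv}=\mathbf{1}\{u,v \text{ in the same cluster}\}$ is feasible. Monotone edits cannot hurt $X^*$: intra-cluster insertions raise $\langle A_G,X^*\rangle$, and inter-cluster deletions leave it unchanged since the corresponding entries of $X^*$ are zero. So the SDP optimum $\hat X$ satisfies $\langle A_G,\hat X\rangle \geq \langle A_G,X^*\rangle$ even after all adversarial damage.

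\textbf{Step 2 (Discrepancy bound).} The key estimate is the Grothendieck/cut-norm bound
\[
\sup_{X \in \mathcal{F}}\; \bigl|\langle A_{sb}-\E A_{sb},\, X - X^*\rangle\bigr| \;=\; O\bigl(n\sqrt{d}\bigr), \qquad d := a+b(k-1),
\]
valid on the SDP-feasible set $\mathcal{F}$. Combined with the trivial $|\langle A_G-A_{sb},X-X^*\rangle|\leq 2\epsilon m$ (each outlier edge contributes at most $1$ since $|X_{uv}|\leq 1$), Step 1 yields
\[
\bigl\langle \E A_{sb},\; X^* - \hat X\bigr\rangle \;=\; O\bigl(n\sqrt{d}\bigr) + O(\epsilon m).
\]
Since $\E A_{sb} = \tfrac{b}{n}J + \tfrac{a-b}{n}\sum_i \mathbf 1_{V^*_i}\mathbf 1_{V^*_i}^{\top}$ off-diagonally, and the $J$-part is pinned down by the row-sum constraints, dividing through by $(a-b)/n$ gives $\sum_i \mathbf 1_{V^*_i}^{\top}(X^*-\hat X)\mathbf 1_{V^*_i} = O\bigl(\tfrac{n^2\sqrt d}{a-b}+\tfrac{\epsilon n^2 d}{a-b}\bigr) = O(n^2\delta)$.

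\textbf{Step 3 (Rounding).} Extract a crude partition $(\widetilde V_1,\dots,\widetilde V_k)$ from the top-$k$ eigenvectors of $\hat X$, assign each vertex $v$ to the cluster $i$ maximizing $\langle \hat X e_v,\mathbf 1_{\widetilde V_i}\rangle$, and then solve a bipartite transportation LP to re-balance to exact size-$n$ clusters. A standard counting converts the quadratic-form estimate of Step 2 into the vertex-misclassification bound $\delta$. The \emph{furthermore} part follows by replacing the $\exp(-\Omega(N))$ tail on the cut-norm with a Bernstein-type tail $\exp(-\Omega(\eta m))$ at the inflated scale $n\sqrt{\eta}\,\sqrt d$ in place of $n\sqrt d$, giving the sharper $\delta'$ with $\sqrt{\eta}\,d/(a-b)$. \textbf{The main obstacle} is the cut-norm concentration of Step 2 in the \emph{sparse} regime $d=O(1)$: the naive spectral bound is too weak because a handful of high-degree vertices inflate $\|A_{sb}-\E A_{sb}\|$. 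One must either trim vertices of degree $\gg d$ and bound their contribution separately, or run a Feige--Ofek / Gu\'edon--Vershynin style discrepancy argument directly on the centered adjacency matrix. That is where the bulk of the technical work will lie.
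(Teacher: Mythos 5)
Your overall architecture (an SDP relaxation whose planted solution survives monotone edits, a Grothendieck-type discrepancy bound for the centered adjacency matrix in the sparse regime, a separate $O(\epsilon m)$ accounting for outliers, and a sharper Bernstein tail for the ``furthermore'' part) is the same as the paper's, and you correctly locate the main technical work in the sparse cut-norm concentration. However, the monotone errors are not actually handled by your Step 2. You bound $\langle A_G-A_{sb},\hat X-X^*\rangle$ by $2\epsilon m$ ``since each outlier edge contributes at most $1$,'' but $A_G-A_{sb}$ also contains an \emph{unbounded} number of monotone edits. An intra-cluster insertion contributes $\hat X_{uv}-X^*_{uv}=\hat X_{uv}-1\le 0$, which is harmless; an inter-cluster deletion contributes $-\hat X_{uv}$, which under your constraint $X_{uv}\ge -1/(k-1)$ can be as large as $+1/(k-1)$, and with $\Theta(m)$ deletions permitted this can swamp the $O(n\sqrt d)$ term. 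The paper avoids this by imposing nonnegativity $\langle\bar u,\bar v\rangle\ge 0$ in the SDP and running exactly this edge-by-edge sign argument (tracking $\plant(G_t)-\sdp(G_t,\{\bar u\})$ along the sequence of edits). You must replace $-1/(k-1)$ by $0$ in your off-diagonal lower bound and make the sign argument explicit; as stated, your discrepancy accounting is broken for monotone deletions.

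The second gap is Step 3, which for $k>2$ is where much of the difficulty lives, and the route you sketch is the one the paper explicitly argues does not extend beyond $k=2$. The quadratic-form estimate $\sum_i \mathbf{1}_{V_i^*}^{\top}(X^*-\hat X)\mathbf{1}_{V_i^*}=O(n^2\delta)$ does not convert by ``standard counting'' into a misclassification bound after extracting top-$k$ eigenvectors: that is the Gu\'edon--Vershynin low-rank-approximation strategy, which works when $k=2$ (threshold a one-dimensional solution) but has no obvious analogue for $k\ge 3$. The paper instead proves a geometric structure theorem from the same quantity your Step 2 controls (the average intra-cluster SDP distance $\alpha$): all but an $O(\alpha)$ fraction of clusters have pairwise well-separated centers, each with a core containing all but an $O(\alpha)$ fraction of its vertices, and a greedy ball-growing procedure (repeatedly take the heaviest ball of fixed radius in an auxiliary graph, then rebalance) recovers a partition $O(\alpha)$-close to the planted one. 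Some argument of this kind is needed to close Step 3; eigenvector extraction plus a transportation LP is not known to do it for general $k$.
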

For the case of two communities ($k=2$), we prove that the result of Theorem~\ref{thm:main-amplified} is asymptotically
optimal (see Theorem~\ref{thm:lb}). We also note that in the special case of $k=2$ communities, the analysis
of the algorithm due to \cite{GV14} can be adapted to obtain similar results (up to constants). However, their approach
breaks down for $k\ge 3$ communities (see Section~\ref{sec:techniques} for details).

\begin{theorem}[Second Algorithm]\label{thm:main-amplified}
Consider the stochastic block model $\SBM(n,k,a,b)$ with $\varepsilon m$ outliers (without any monotone modelling errors), and suppose $a+b(k-1) \ge 2C_0$ for some universal constant $C_0 >1$. Assume that
$$\frac{ \sqrt{a+b(k-1)} }{a-b}+\frac{\epsilon\left(a+b(k-1)\right)}{a-b} \leq c/k,$$
where $c>0$ is some absolute constant.
There is  a randomized polynomial-time algorithm that does the following.
Let $\delta_0 \geq ke^{-\frac{(a-b)^2}{100 a}}$ and $\delta = O(\delta_0 + \frac{\varepsilon m}{(a-b) kn})$.
The algorithm $(1-\delta)$-partially recovers the planted partition
with probability at least $1-3\exp(- \delta_0 k n/6)$ over the randomness of the instance and random bits used by the algorithm.
\end{theorem}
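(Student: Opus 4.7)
The plan is to use Theorem~\ref{thm:main} as a warm start and then boost the accuracy via a local majority-voting refinement step, where independence between the two phases is obtained by random edge splitting. First I would color each edge of the input graph $G$ red or blue independently with probability $1/2$, producing edge-disjoint subgraphs $G_r$ and $G_b$. Conditional on the planted partition, the planted components of $G_r$ and $G_b$ are independent samples from $\SBM(n,k,a/2,b/2)$, and the outlier edges also split, each half containing at most $\varepsilon m$ outliers. The hypothesis of the present theorem is preserved (up to constants) under this rescaling, so applying Theorem~\ref{thm:main} to $G_r$ yields a partition $\widehat{V}_1,\dots,\widehat{V}_k$ that is $(1-\delta_1)$-close to the planted partition with $\delta_1 = O(1/k)$, i.e.\ $|\widehat{V}_i \triangle V_i^*| = O(n/k)$ after relabeling.

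Next, using only $G_b$, I would compute $d_i(u) = |\{v \in \widehat{V}_i : uv \in E(G_b)\}|$ for each vertex $u$, and tentatively reassign $u$ to the index $i$ maximizing $d_i(u)$. The key point is that $(\widehat{V}_i)$ is a deterministic function of $G_r$ and is therefore independent of $G_b$ given the planted partition. For $u \in V_i^*$ incident to at most $(a-b)/8$ outlier edges, we have $\E[d_i(u)] - \E[d_j(u)] = (a-b)/2 - O(a/k) \geq (a-b)/4$ (using the hypothesis $\sqrt{a+b(k-1)}/(a-b) \leq c/k$ with a sufficiently small constant $c$), while the individual variances are $O(a)$. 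A Chernoff bound combined with a union bound over $j\ne i$ then shows that such a vertex is misclassified by the refinement step with probability at most $k\,e^{-(a-b)^2/(Ca)}$ for some absolute constant $C$.

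To turn the per-vertex bound into a global one, first note that at most $O(\varepsilon m/(a-b))$ vertices can be incident to more than $(a-b)/8$ outlier edges (since the total number of outlier endpoints is $O(\varepsilon m)$), and these contribute the $\varepsilon m/((a-b)kn)$ term in $\delta$. For the remaining vertices, the refinement errors are mutually independent conditional on $G_r$ and the planted partition, so a Chernoff bound on the sum of the $kn$ indicator variables of mean at most $\delta_0$ yields the stated failure probability $\exp(-\delta_0 kn/6)$. Finally, to enforce the equipartition constraint $|V_i|=n$, I would post-process by a balanced assignment: sort vertices by the confidence gap $d_i(u)-\max_{j\ne i}d_j(u)$ and reassign the least-confident vertices to under-full clusters, which at most doubles the fraction of misclassified vertices. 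The main technical obstacle will be the bookkeeping of constants so that the final exponent is at least $(a-b)^2/(100a)$: the losses due to random edge splitting (halving $a$ and $b$), the $O(a/k)$ mean drift caused by Phase 1 errors, and the slackness required in the Chernoff exponent all need to be combined carefully, which can be done by choosing the edge-split probability unevenly (putting most of the edge mass into $G_b$) and shrinking the hypothesis constant $c$ accordingly.
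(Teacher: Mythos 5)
Your proposal follows essentially the same route as the paper: split the edges randomly into two halves, run the first algorithm on one half to get a partition that is $O(1/k)$-close, then reclassify each vertex by majority vote over its neighbors in the other half, discard the few vertices hit by too many outlier edges, apply Chernoff to the remaining per-vertex error indicators, and rebalance at the end. The paper's Theorem~\ref{thm:boosting} is exactly this argument.

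There is, however, one concrete step that fails as written: your claim that ``the refinement errors are mutually independent conditional on $G_r$ and the planted partition.'' They are not. The misclassification event for $u$ depends on the $G_b$-edges incident to $u$, and for two vertices $u,v$ these edge sets overlap in the potential edge $(u,v)$; so the indicators are pairwise dependent and a plain Chernoff bound on their sum is not justified. The paper avoids this by an extra splitting: each $V_i$ is cut into halves $V_i^L$ and $V_i^R$, and a vertex $u\in V^L$ votes only using its $E_2$-neighbors in $V^R$ (and symmetrically). Then the events for distinct $u\in V^L$ depend on disjoint sets of potential edges and are genuinely independent, so Chernoff applies to each half separately, at the cost of a factor of~$2$ in the failure probability (hence the $3\exp(-\delta_0 kn/6)$ in the statement). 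This is a small, fixable gap, not a wrong approach. Your closing worry about tuning the edge-split probability unevenly is unnecessary: an even split suffices, since the constant $100$ in the exponent already absorbs the halving of $a$ and $b$, the $O(a/k)$ drift from Phase-1 errors, and the Bernstein slack.
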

In the above theorems $C_0$ is some universal constant that lower bounds the average degree of a vertex ($C_0=11$ suffices). We do not make any efforts to optimize the constant $C_0$, for ease of exposition.
Let us compare the performance of our algorithms to the performance of the state of the art algorithms for the Stochastic Block Model.
\begin{itemize}
\item If no adversarial noise is present, our first algorithm works under the same condition on parameters $a$, $b$ and $k$:
$$\frac{(a-b)^2}{a+b(k-1)} > C \quad \text{for some absolute constant } C$$
as the algorithm by~\cite{AS15} for SBM (the absolute constant $C$ in our condition is different from that by~\cite{AS15}).
\item Our second algorithm achieves the same recovery rate as the algorithm of \cite{CRV15} for SBM
(that, however, is not surprising, since our second algorithm uses the ``boosting'' technique developed by~\cite{CRV15}) .
\end{itemize}

We note that, unlike many previously known algorithms for the Stochastic Block Model, our recovery algorithms fail with probability
that is exponentially small in $\eta m$.
In particular, this implies that the algorithm from Theorem~\ref{thm:main} works even if we sample the initial graph $G_{sb}$ not from $\SBM(n, k, a, b)$
but from a distribution that is $(\lambda m)$-close to $\SBM(n, k, a, b)$ in the KL-divergence distance (see Section~\ref{sec:KL}).

\begin{theorem}\label{thm:KL}
Let $\cal G$ be a distribution that is $\lambda m$ close to $\SBM(n, k, a, b)$ in the KL divergence:
\COLT{$D_{KL}({\cal G}, \SBM(n, k, a, b))\leq \lambda m$.}\notCOLT{$$D_{KL}({\cal G}, \SBM(n, k, a, b))\leq \lambda m.$$}
Suppose that $a+b(k-1)\ge C_0$ for some universal constant $C_0>1$. Consider a model where the graph is sampled from
the distribution $\cal G$ and then the adversary introduces monotone and outlier modeling errors (with parameter $\varepsilon m$). For any $\eta>0$, the algorithm from Theorem~\ref{thm:main} works in this model with the same recovery guarantee:
$$\delta = O\Bigl(\frac{(\epsilon+\sqrt{\eta})\left(a+b(k-1)\right)}{a-b} +\frac{\sqrt{a+b(k-1)}}{(a-b)}\Bigr).$$
It may fail with probability at most $2\lambda/\eta$.
\end{theorem}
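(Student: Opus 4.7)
The plan is to reduce Theorem~\ref{thm:KL} to Theorem~\ref{thm:main} through a change-of-measure argument based on the data-processing property of KL-divergence. I will fix an arbitrary (possibly adaptive) adversary strategy $\cA$, and let $\cE_\cA$ denote the event, as a function of the underlying graph $G_{sb}$ and the algorithm's internal random bits, that the algorithm fails to $(1-\delta)$-partially recover the planted partition when run on the adversarially modified graph $\cA(G_{sb})$. By the ``furthermore'' part of Theorem~\ref{thm:main}, applied uniformly over adversaries, we have $\Pr_{G_{sb}\sim \SBM}[\cE_\cA]\le 2\exp(-\eta m)$.

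Next I will transport this bound from $\SBM$ to $\cG$. Since $\cE_\cA$ depends on $G_{sb}$ together with independent random bits, and since tensoring with a fixed distribution does not affect KL-divergence, the data-processing inequality applied to the indicator of $\cE_\cA$ yields
$$d_{KL}\bigl(\Pr_{\cG}[\cE_\cA]\,\big\|\,\Pr_{\SBM}[\cE_\cA]\bigr)\;\le\;D_{KL}(\cG,\SBM)\;\le\;\lambda m,$$
where $d_{KL}(p\|q)$ denotes the binary KL-divergence. Setting $p=\Pr_{\cG}[\cE_\cA]$ and $q=\Pr_{\SBM}[\cE_\cA]$, I will then invert this inequality using the standard lower bound $d_{KL}(p\|q)\ge p\log(1/q)-H(p)\ge p\log(1/q)-\log 2$, together with $\log(1/q)\ge \eta m-\log 2$. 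This produces an inequality of the form $p\cdot(\eta m-\log 2)\le \lambda m+\log 2$, which gives $p\le 2\lambda/\eta$ for all sufficiently large $m$ (absorbing lower-order terms into the constant, which is already loose in the statement). Since this bound holds uniformly for every adversary $\cA$, it applies to the worst-case adversary as well, yielding the claimed total failure probability $2\lambda/\eta$.

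The main conceptual point to get right, and the step I expect to be the main obstacle, is the interaction between the adversary and the change of measure. The adversary's decisions depend on $G_{sb}$, so $\cA(G_{sb})$ is not a freshly sampled graph; one must argue that the failure event is genuinely a measurable function of $G_{sb}$ (plus independent algorithm randomness) so that the scalar data-processing inequality actually applies. Once this is verified and the adversary is fixed before applying the inequality, the rest is routine. A secondary detail is to ensure that the binary KL inversion is stated cleanly enough to absorb the $+\log 2$ and $H(p)$ terms into the leading constant $2$; this is why the bound in the theorem is $2\lambda/\eta$ rather than $\lambda/\eta$, and no further effort is needed beyond noting that the regime of interest has $\eta m$ and $\lambda m$ both large.
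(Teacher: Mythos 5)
Your proposal is correct and follows essentially the same route as the paper: the paper also reduces to the high-probability ("furthermore") bound of Theorem~\ref{thm:main} for a fixed (worst-case, deterministic) adversary, applies the log-sum inequality to the binary partition into failure/success events (which is exactly your scalar data-processing step), and then inverts the binary KL divergence; the only differences are cosmetic details of the inversion (the paper's Lemma~\ref{lem:klbounds} yields $\max\bigl(2\lambda/\eta,\ 2e^{-\eta m/2+1}\bigr)$, handling the tiny-$\lambda$ edge case that you, like the theorem statement itself, absorb by assuming $\eta m$ and $\lambda m$ are large).
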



\paragraph{Related Work}
\cite{CL15} proposed a stochastic block model with \emph{outlier vertices}. In their model, the graph is
generated as follows: first a graph is drawn according to $\SBM(n,k,a,b)$, then
the adversary adds to the graph $t$ extra vertices and an arbitrary set of edges incedent on these $t$ vertices.
\cite{CL15} give an SDP algorithm for partially recovering the communities. Their algorithm works for
$a \ge C \log n$. For $a, b= O(\log n)$, it can tolerate up to $O(\log n)$ vertex outliers.
Note that if $a+b(k-1) \geq C\log n$ (as in their result), robustness to edge errors is more general than
robustness to vertex outliers.

In a concurrent and independent work, \cite{MPW15} study the problem of weak recovery in a SBM with $k=2$ communities
in the presence of monotone errors as in~\cite{FKil}. They do not consider the case of $k > 2$ communities; they also do not
consider adversarial errors and modeling errors in the KL divergence.
\COLT{Due to the space limit, we}
\notCOLT{We}
give a  detailed overview
of related work (including \citep{CL15} and \citep{MPW15}) in Section~\ref{sec:prior}.
\enlargethispage*{0.1cm}

\subsection{Techniques} \label{sec:techniques}

Let us briefly describe our first algorithm.
The algorithms is based on semidefinite programming (SDP).
 We use a variant of the standard SDP relaxation for the $k$-Partitioning Problem (see e.g.~\cite{KNS}).
The SDP solution assigns a unit vector $\bar u$ to each vertex $u$ of the graph (see Section~\ref{sec:prelim} for details).
We prove that vectors $\{\bar u\}$ are clustered consistently with the community memberships: the vectors assigned to vertices
in the same cluster are close to each other (on average), while the vectors assigned to vertices in different clusters are far from each other (on average).
It follows that  each cluster $V_i^*$ has a core $\core(i)$ such that all vertices in the core lie close to each other, vertices in different cores are
far apart, and $\cup_i \core(i)$ contains all but a small fraction of the vertices (see Section~\ref{sec:clustering}).

We give a simple greedy algorithm that, given the SDP solution, finds a partition $V_1^*,\dots, V_k^*$ of $V$ close to the planted
partition.
 The algorithm considers balls of some fixed small radius around vectors $\{\bar u\}$ and chooses the ``heaviest'' among them,
 the one that contains most vectors $\{\bar u\}$. It creates a cluster consisting of the vertices, whose vectors lie in the ball,
 and removes them and the corresponding vectors from the consideration. Then it iteratively processes the remaining vertices.
The clustering algorithm is similar to the algorithm
recently developed for a different clustering problem called Correlation Clustering \citep{MMVCC}. Unlike the algorithm in~\cite{MMVCC}, however, the algorithm in this work is robust to adversarial errors and modeling errors, and works in the sparse regime.
Importantly, our geometric structural property holds even in the presence of adversarial noise,
and the probability that a random graph from SBM does not satisfy it is exponentially small.
As a result of this, the algorithm works even in the presence of outlier, monotone, and modeling errors.

 To prove that our geometric structural property holds, we use, in particular, some techniques developed by
 \cite{GV14}.
 However, we cannot merely rely on their result: Gu\'edon and Vershynin prove that the best rank-$(2k-3)$ approximation $\hat P$ to the SDP solution matrix $\hat Z$ (the Gram matrix of the SDP vectors $\{\bar u\}$) is
 close to a particular rank-$(k-1)$ matrix, the matrix that encodes the planted partition.
 This property suffices when $k=2$ --- then the rank-1 matrix $\hat P$ defines a one dimensional solution
 $\{x_u: u\in V\}$, and the planted partition can be approximately recovered by thresholding numbers $\{x_u\}$.
 Moreover, it can be shown that this algorithm
for $k=2$ communities is robust to modeling errors.
However, this approach works only when $k=2$ and does not seem to extend to the case of $k>2$
(in particular, Gu\'edon and Vershynin only describe an algorithm for the case of $k=2$).
Therefore, instead of directly using the result by \cite{GV14}, we use some of their ideas to prove that
the SDP solution satisfies the geometric structural property (described above), which is quite different from that in~\cite{GV14}.
This property enables us to easily recover the planted clustering.


 \paragraph{Organization}
 We start by
presenting our SDP relaxation for the partition recovery problem (see Section~\ref{sec:prelim}).
Then, in Section~\ref{sec:distances} we prove the geometric structural property.
 In Section~\ref{sec:clustering}, we present our first algorithm and prove Theorem~\ref{thm:main}.
  In Section~\ref{sec:boosting}, we show how to ``boost'' the performance of this algorithm by using the technique by
  \cite{CRV15}. This yields
Theorem~\ref{thm:main-amplified}.
We present Theorem~\ref{thm:KL} in Section~\ref{sec:KL} and describe our negative results in Appendix~\ref{sec:lb}.
We give a  detailed overview of prior work in Section~\ref{sec:prior}.

\section{Overview of Prior Work}\label{sec:prior}
We now review prior work on learning probabilistic models for graph partitioning while focusing on algorithms that give polynomial time guarantees. In what follows, $C$ denotes a constant that is chosen to be sufficiently large.

\paragraph{Stochastic Block Models}
The Stochastic Block Model is the most widely studied probabilistic model for community detection and graph partitioning in different fields like machine learning, computer science,  statistics and social sciences (see e.g. ~\cite{BCLS84, stoch1, stoch2, Fortunato}). This model is also sometimes called the Planted Partitioning model and was studied in a series of papers, which among others include
\cite{DF87}, \cite{B88}, \cite{JS}, \cite{DI98}, \cite{CK01}, \cite{McSherry} and \cite{coja}. The existing algorithmic guarantees for the Stochastic Block Model fall into three broad categories: \emph{ exact recovery, weak recovery, and  partial recovery}.

For {\em exactly recovering} the communities, provable guarantees are known for many different algorithms like spectral algorithms, convex relaxations and belief propagation.  These algorithms need sufficient difference between the average intra-cluster degree $a$ and inter-cluster degree $b$, and a lower bound on the average degree $a+b= \Omega( \log n)$. For $k=2$ clusters,
\cite{B88} used spectral techniques to give an algorithm that recovers the clusters when $a-b \ge C \cdot \sqrt{a \log n}$. Recently,
\cite{ABH14} and \cite{MNS15} determined sharp thresholds for exact recovery in the case of $k=2$ communities. The influential work of
\cite{McSherry} used spectral clustering to handle a more general class of stochastic block models with many clusters, and the guarantees have been subsequently improved in different parameter regimes of $a,b,k$ by various works using both spectral techniques and convex relaxations \citep{Yudong, Ames, Vu, WXH15, PerryW15}.

The goal in {\em weak recovery} is to output a partition of the nodes which is positively correlated with the true partition with high probability. This problem was introduced by \cite{Co10}.
\cite{DKMZ11} conjectured that there is a sharp phase transition in the case of $k=2$ clusters depending on whether value of $\frac{(a-b)^2}{(a+b)}> 1$ or not, and this was settled independently by
\cite{M14} and \cite{MNS12,MNS13}. It was also recently shown that semidefinite programs get close to this
threshold \citep{MS15}\footnote{The algorithm of \cite{MNS13} and \cite{M14} uses non-backtracking random walks.}. The problem is still open for  $k>2$ communities, and
the conjecture of \cite{DKMZ11} and \cite{MNS13} for larger $k$ is that the clustering problem can be solved in polynomial time when $\frac{(a-b)^2}{a+(k-1)b} > 1$.


In {\em partial recovery}, the goal is to recover the clusters in the planted partitioning up to $\eta N$ vertices, i.e. up
 to $\eta N$ vertices are allowed to be misclassified in total  (here $\eta$ can be thought of as $o(1)$).
\cite{Co10} and \cite{MNS14colt} studied this problem for the case of $k=2$ communities.
\cite{GV14} analysed the semidefinite programming relaxation using the Grothendieck inequality to partially recover the
communities (for $k=2$) when $(a-b)^2 > C (a+b)/{\eta^2}$. These results were
extended to the case of $k$-communities by \cite{CRV15} and \cite{AS15}. The algorithm by \cite{CRV15} recovers the communities
up to $\eta$ error when $\frac{(a-b)^2}{a} \ge C k^2  \log (1/\eta)$.\footnote{In fact \cite{CRV15} gives the stronger guarantee
of recovering each of the clusters up to $\eta n$ vertices.} These results were recently improved by \cite{AS15} who gave algorithms and
information-theoretic lower bounds for partial recovery in fairly general stochastic block models.

We note that the algorithm and analysis of~\cite{GV14} can be adapted to work in the presence of monotone
and adversarial errors for the case of $k=2$ communities (see Section~\ref{sec:techniques} for details).
In a concurrent and independent work,~\cite{MS15} (see revision 2 of their archive paper)
observed that their algorithm for \emph{testing} whether the input graph comes from the \Erdos--\Renyi distribution
or a Stochastic Block Model with $k=2$ communities also works in presence of $o(m)$ edge outlier errors. Their algorithm
does not recover the clusters.

\paragraph{Semirandom models}
Semi-random models provide robust alternatives to average-case models by allowing much more structure then completely random instances.
Research on semi-random models was initiated by 
\cite{BS}, who introduced and investigated semi-ran\-dom models for $k$-coloring.
\cite{FKil} studied a semi-random model for Minimum Bisection (two communities of size $n$ each) that introduced the notion of a \emph{ monotone adversary}. The graph is generated in two steps: first a graph is generated according to $\SBM(n,2,a,b)$ and then an adversary is allowed to either add edges inside the clusters or delete some of the edges present between the clusters.
They showed that semi-definite programs remain integral when $a-b \ge C \cdot \sqrt{a \log n}$. This was also extended to the case of $k$ clusters
 by \cite{Yudong} and \cite{ABKK15, PerryW15}.

In a concurrent and independent work, \cite{MPW15} consider the problem of weak recovery in a SBM with $k=2$ communities in the presence of
monotone errors as in~\cite{FKil}. Their main result is a statistical lower bound that indicates that the phase transition
for weak recovery in SBM with $k=2$ communities changes in the presence of monotone errors. They also present an algorithm that performs
weak recovery for two communities in the presence of monotone errors.They do not consider the case of multiple communities ($k > 2$). They also
do not consider adversarial errors and modeling errors in the KL divergence.

The results by \cite{MMV,MMV2, MMVCC} use semi-definite programming to give algorithmic guarantees for various average-case models for graph
partitioning and clustering problems. These works~\citep{MMV, MMV2} consider probabilistic models for Balanced Cut (where the two clusters have roughly
equal size) that are more general than stochastic block models, but they are incomparable to the models considered in this work. Besides, the
focus of \citep{MMV, MMV2} is to find a Balanced Cut of small cost (the partitioning returned by the algorithm need not necessarily be
close to the planted partitioning) and they make no structural assumptions on the graph inside the clusters. The algorithm in \citep{MMV} also
 returns a partitioning closed to the planted partitioning under some mild assumptions about the expansion inside the clusters. However, it
 requires that $a = \tilde \Omega(\sqrt{\log n})$,
while the focus of this work is the regime when $a$ and $b$ are constants.

\paragraph{Handling Modeling Errors}
The most related result in terms of modeling robustness is the recent work by  \cite{CL15}, who consider the stochastic block model in
the presence of some outlier vertices. The graph is generated as follows: first a graph is drawn according to a stochastic
block model $SBM(n,k,a,b)$\footnote{The authors also consider the case where communities can have different sizes as well.}. Then,
the adversary adds to the graph $t$ outlier vertices and a set of arbitrary edges incedent on them.
\cite{CL15} give an SDP-based algorithm for partially
recovering the communities. Their algorithm works for
$a \ge C \log n$ and $(a-b) > C \left( \sqrt{a \log n} + \sqrt{kb} + m \sqrt{k}\right)$ (see Condition 3.1 in Theorem 3.1).
For $a, b= O(\log n)$, it can tolerate up to $O(\log n)$ outliers. To handle up to $\eps n$ outliers, the algorithm
needs the graph to be very dense i.e. $a, b = \Omega(\eps n)$.

In the regime when $a+b(k-1) \ge C \log n$, robustness to edge outliers is more general than robustness to vertex outliers. (Because,
in this regime, the degree of each vertex is tightly concentrated around $a+(k-1)b$, hence one can remove all outlier vertices whose degree
is substantially larger than $a+(k-1)b$ in the given graph $G$. After that the number of error edges will be $O(t (a+b(k-1)))$.
Using the results in our work, we can handle the case when an $\eps$ fraction
of the vertices are corrupted since this corresponds to an $\eps$ fraction of the edges being corrupted in {\em our} outlier model.
Additionally, our algorithm also performs partial recovery in the sparse regime (when $a,b = O(1)$).

\cite{KK10} and \cite{AS12} presented a spectral algorithm for clustering data that performs partial recovery as long the data satisfies some deterministic conditions (involving the spectral radius of the adjacency matrix), that are satisfied by instances that are generated by many probabilistic models for clusters.  These deterministic conditions hold in graphs with degree $\Omega(\log n)$ and when the noise is more structured; in particular, they need the spectral norm of a matrix representing the errors to be small (this does not hold for adversarial modeling errors in general).

Finally, the work of \cite{Brubaker} gave new algorithms for clustering data arising from a mixture of Gaussians when an $\eps =O(1/(k \log^2 n))$ fraction of the data points are outliers. Surprisingly, Brubaker showed that this tolerance to
noise can be achieved when the separation between the means is only a logarithmic factor more than the separation needed for learning gaussian mixtures with no
noise \citep{KSV05, AM05}. While these results apply to very different problems in unsupervised learning, in the analogous regime, our algorithm works if up to an $\eps = O(1)$
fraction of the observations come from errors.  Finally, our results also handle large errors in the probabilistic model, when measured in the KL divergence (up to $\eps m$).


\section{Preliminaries}\label{sec:prelim}
\subsection{Notation}
Given an  equipartition $(V^*_1, \dots, V^*_k)$ of the vertices of $G(V,E)$, let $(V \times V)_{in}$ represent all the pairs of vertices inside the clusters,
and $(V \times V)_{out}$ represent the pairs that go between the clusters. Similarly, let $E_{in}$ be the edges inside the clusters,
and $E_{out}$ be the edges that go between the different clusters.

\subsection{SDP Relaxation}
Our partition recovery algorithms are based on semidefinite programming. In all our algorithms,
we use the following basic SDP relaxation for the partition recovery problem (the SDP is presented in the vector form).
For every vertex $u$ in the graph, we have a vector variable $\bar u$ in the SDP relaxation.

\begin{align}
\min~& \sum_{(u,v) \in E} \dis{\ubar}{\vbar} &\\
\text{s.t.}& \notag\\
& \norm{\ubar}^2 =1 &~ \forall u \in V  \label{eq:unitvectorconstraint}\\
& \sum_{u,v \in V} \frac{1}{2} \norm{\ubar -\vbar}^2 =n^2k(k-1)=N^2 \left(1-\frac{1}{k}\right) \label{eq:balanceconstraint}\\
&\Iprod{\ubar,\vbar} \ge 0&~ \forall u,v\in V
\end{align}
The summation in constraint \eqref{eq:balanceconstraint} is over all $N^2$ pairs of vertices.

Our SDP relaxation is standard. Note that we do not use $\ell_2^2$-triangle
inequalities which are often used in SDP relaxations for graph partitioning problems.
We also do not use strong spreading constaints (see e.g.~\cite{KNS, BFK14}) and instead use
a weaker constraint~\ref{eq:balanceconstraint}.

We denote the optimal value of this SDP relaxation by $\sdp$. Consider the following feasible SDP solution corresponding to the planted partition. Assign $\ubar = e_i$ for all
$u\in V_i^*$ and all $i$, where $e_1,\dots, e_k$ is an orthonormal basis. It is easy to see that this is a feasible SDP solution. Its value is equal to the number of edges going
between partitions. Since the value of the optimal SDP solution is at most the value of this solution,
\begin{equation}
\sdp \leq |\{(u,v)\in E: u\in V_i^*, v\in V_j^* \text{ for some } i\neq j\}|. \label{ineq:sdp-less-than-opt}
\end{equation}

\newcommand{\fs}{e^{-\frac{9 s^2\,N}{4 + 8s/\sqrt{a+b(k-1)}}}}

\section{Structure of the Optimal SDP Solution}
\label{sec:distances}

In this section, we analyze the geometric structure of the optimal SDP solution. We show that SDP vectors for vertices in the same cluster are close
to each other (on average); SDP vectors for vertices in different clusters are far away from each other (on average).

We denote the average distances assigned by the SDP to pairs of vertices inside clusters and between clusters
by $\alpha$ and $\beta$, respectively. Formally,
$$
\alpha= \underset{(u,v) \in (V \times V)_{in}}\avg \dis{\ubar}{\vbar} \label{eq:alpha} \;\;\;\text{and}\;\;\;
\beta=\underset{(u,v) \in (V \times V)_{out}} \avg \dis{\ubar}{\vbar}. \label{eq:beta}
$$
It follows from constraint~(\ref{eq:balanceconstraint})  that the values of $\alpha$ and $\beta$ satisfy:
\begin{equation}\alpha + (k-1)\beta = k-1 .\label{eq:alpha-beta}\end{equation}
In the following theorem, we prove that $\alpha$ is small and $\beta$ is close to $1$.

\begin{theorem}\label{thm:distances}
Let $G(V,E)$ be a graph generated according to the stochastic block model $\SBM(n,k,a,b)$ with $\eps m $ outliers and
 arbitrary monotone errors. Suppose that $( a+b(k-1) )>C$ for some absolute constant $C$. Then, for every $s\geq 1$,
the average intra-cluster distance $\alpha$ and inter-cluster distance $\beta$ satisfy the following bounds with probability at least $1-2\fs$:
\begin{equation}\label{eq:avgdist}
\alpha \le \frac{ \constsparse (\sqrt{a+b(k-1)}) s}{a-b}+\frac{\left(a+b(k-1)\right)\epsilon}{a-b},
\end{equation}
and $\beta \geq 1 - \alpha /(k-1)$,
where  $\constsparse \le 6\Kg+4$ is an absolute constant and $\Kg < 1.783$ is the Grothendieck constant.
\end{theorem}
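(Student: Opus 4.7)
The inequality $\beta \ge 1 - \alpha/(k-1)$ is just a restatement of the balance constraint~\eqref{eq:alpha-beta} (in fact with equality), so all the work lies in bounding $\alpha$.

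My plan is a Grothendieck-style comparison of the optimal SDP solution against the planted feasible solution, in the spirit of \cite{GV14}. Let $A$ be the adjacency matrix of the input graph, $A^{sb}$ the adjacency of the pre-adversarial SBM graph, and $P$ the matrix of SBM edge probabilities ($P_{uv}=a/n$ if $\sigma^*(u)=\sigma^*(v)$, $P_{uv}=b/n$ otherwise, and $P_{uu}=0$). Write $\hat Z_{uv}=\iprod{\bar u,\bar v}$ for the Gram matrix of the optimal SDP vectors and $Z^*_{uv}=\mathbf{1}[\sigma^*(u)=\sigma^*(v)]$ for the Gram matrix corresponding to $\bar u=e_{\sigma^*(u)}$ in an orthonormal basis (which is SDP-feasible, with objective value $|E_{out}|$). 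Optimality of $\hat Z$ rearranges to $\sum_{u,v}A_{uv}(\hat Z_{uv}-Z^*_{uv})\ge 0$, and substituting the decomposition $A=P+(A^{sb}-P)+(A-A^{sb})$ gives
\[
-\sum_{u,v}P_{uv}(\hat Z_{uv}-Z^*_{uv}) \ \le\ \sum_{u,v}(A^{sb}-P)_{uv}(\hat Z_{uv}-Z^*_{uv}) \,+\, \sum_{u,v}(A-A^{sb})_{uv}(\hat Z_{uv}-Z^*_{uv}).
\]

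The left-hand side can be computed in closed form: summing separately over same-cluster and different-cluster pairs and using~\eqref{eq:alpha-beta} yields $-\sum_{u,v}P_{uv}(\hat Z_{uv}-Z^*_{uv}) = (a-b)\,kn\,\alpha - O(a)$, where the $O(a)$ correction accounts for diagonal and $n$-vs-$(n-1)$ issues and is negligible under the assumption $a+b(k-1)=O(1)$. For the stochastic term, I would split $\hat Z_{uv}-Z^*_{uv}=\iprod{\bar u,\bar v}-\iprod{e_{\sigma^*(u)},e_{\sigma^*(v)}}$, apply Grothendieck's inequality to each piece (all involved vectors are unit), and obtain a bound of the form $2\Kg\,\|A^{sb}-P\|_{\infty\to 1}$. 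The adversarial term splits by sign: the monotone contribution is non-positive (on intra-cluster pairs $(A-A^{sb})_{uv}\ge 0$ while $\hat Z_{uv}-Z^*_{uv}\le 0$, and on inter-cluster pairs the signs are reversed), and the outlier contribution is bounded crudely by $2\varepsilon m$ using $|\hat Z_{uv}-Z^*_{uv}|\le 1$ and that at most $\varepsilon m$ unordered pairs are modified.

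The remaining ingredient is a concentration bound for $\|A^{sb}-P\|_{\infty\to 1}$. For any fixed pair of sign vectors $\epsilon,\delta\in\{\pm 1\}^V$, the sum $\sum_{u,v}(A^{sb}-P)_{uv}\epsilon_u\delta_v$ is a sum of independent centered random variables, each of magnitude at most $1$ and with total variance $\le N(a+b(k-1))$, so a Bernstein tail bound combined with a union bound over the $4^N$ sign patterns gives $\|A^{sb}-P\|_{\infty\to 1} \le c\sqrt{a+b(k-1)}\cdot sN$ with probability at least $1-2\fs$. Combining everything yields $(a-b)kn\,\alpha \le 2\Kg c\sqrt{a+b(k-1)}\,sN + 2\varepsilon m + O(a)$, and dividing through by $(a-b)kn$ while using $m=\tfrac{1}{2}N(a+b(k-1))$ gives the stated bound on $\alpha$ with $\constsparse\le 6\Kg+4$. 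The main obstacle I anticipate is the Bernstein/union-bound calculation: matching the precise failure probability $2\fs$, whose denominator $4+8s/\sqrt{a+b(k-1)}$ reflects both the variance and the $\ell_\infty$ norm of the centered Bernoulli entries, requires careful parameter tracking, as does absorbing the various lower-order terms into the final constant $\constsparse$.
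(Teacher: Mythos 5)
Your proposal is correct and follows essentially the same route as the paper: comparing the optimal SDP solution against the planted feasible solution, controlling the random fluctuation via Grothendieck's inequality plus a Bernstein/union bound over sign vectors (this is exactly the paper's Theorem~\ref{thm:fromGV}), and disposing of monotone errors by a sign argument while charging each outlier edge at most $1$. The only cosmetic difference is that the paper bounds the planted-cut fluctuation with a separate Bernstein estimate rather than folding it into the $\|A^{sb}-P\|_{\infty\to 1}$ bound, and phrases the monotone-error step as a step-by-step transformation of $G_{sb}$ into $G$.
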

\begin{proof}
Denote $f(s) = \fs$. For notational convenience, we assume that all vertices in the graph are ordered.
Let $G_{sb}$ be the graph generated in the stochastic block model $SBM(n,k,a,b)$ without the adversarial errors; and let
$G$ be the graph obtained from $G_{sb}$ by introducing arbitrarily many monotone errors and at most $\varepsilon m$ non-monotone errors (here $m= (a+b(k-1))n/2$ is the expected number of edges in graphs from $\SBM(n,k,a,b)$).
Denote by $\plant (G)$ and $\plant (G_{sb})$ the cost of the planted partition in graphs $G$ and $G_{sb}$, respectively.
Denote by $\sdp (G_{sb},\{\tilde u\})$ the cost of a feasible SDP solution $\{\tilde u\}$ in the graph $G_{sb}$.
Let $\sdp(G)$ be the cost of the optimal SDP solution in $G$. Our goal
is to estimate $\plant (G) - \sdp (G)$. Note that the value of the SDP relaxation
is at most the value of the planted partition (see inequality~(\ref{ineq:sdp-less-than-opt})),
thus $\plant (G) - \sdp (G)\geq 0$.
We prove that with probability at least $1-2f(s)$,
\begin{equation}
\plant (G) - \sdp (G)\leq \frac{N \big(-\alpha (a-b) + \constsparse\sqrt{(a + (k-1)b)}s +  2\varepsilon m \big) }{2}.\label{eq:planted-sdp}
\end{equation}
This bound immediately implies the statement of the theorem: since $\sdp (G) \leq \plant(G)$, we have
$\alpha (a-b) \leq \constsparse\sqrt{(a + (k-1)b)}s +  2\varepsilon m$.
We first bound the value of $\plant (G_{sb}) - \sdp (G_{sb},\{\bar u\})$ for the graph $G_{sb}$,
where $\{\bar u\}$ is the optimal SDP solution for the graph $G$.
\begin{lemma}
The following inequality holds with probability at least $1-2f(s)$:
\begin{equation}
\plant (G_{sb}) - \sdp (G_{sb},\{\bar u\})\leq \frac{N \big(-\alpha (a-b) + \constsparse\sqrt{(a + (k-1)b)}s \big) }{2}.\label{eq:G-sb-plant-sdp}
\end{equation}
\end{lemma}
\begin{proof}
We upper bound $\plant (G_{sb})$. The expected size of the planted cut
equals $\E[\plant (G_{sb})] = bN(k-1)/2$. Thus, by the Bernstein inequality,
\begin{equation}
\plant(G_{sb}) \leq \frac{bN(k-1)}{2} + 2\sqrt{a+b(k-1)}Ns\label{eq:planted-Bernst}
\end{equation}
with probability at least $1-f(s)$ (see Lemma~\ref{lem:bernst-plant-cut} in Appendix~\ref{sec:appendix} for details).

We now lower bound $\sdp (G_{sb},\{\bar u\})$. Let $A=(a_{uv})$ be the adjacency matrix of $G$, and let $\E[A]$ be the expectation of the adjacency matrix. Denote $\Delta a_{uv} = a_{uv} - \E[a_{uv}]$. We use the following theorem, which is very similar to Lemma 4.1 in~\cite{GV14}. For completeness, we prove
Theorem~\ref{thm:fromGV} in Appendix~\ref{app:thm:fromGV}.
\begin{theorem}\label{thm:fromGV}
Let $G_{sb}(V,E)$ be a graph generated according to the stochastic block model $SBM(n, k, a, b)$.
Suppose $a+(k-1)b \geq 11$.
Then, with probability $1-f(s)$ the following inequality holds for all feasible SDP solutions $\{\tilde u\}$:
\begin{equation}
\Big| \sum_{u<v}
\Delta a_{uv} \|\tilde u -  \tilde v\|^2\Big| \leq  6K_G\sqrt{a+b(k-1)}Ns. \label{eq:sdp-GV}
\end{equation}
\end{theorem}
For the rest of the proof we assume that inequalities~(\ref{eq:planted-Bernst}) and (\ref{eq:sdp-GV}) hold. This happens with probability at least $1-2f(s)$.
We apply inequality~(\ref{eq:sdp-GV}) to the optimal SDP solution $\{\bar u\}$ for the graph $G$. We have
$$\sdp (G_{sb},\{\bar u\}) = \frac{1}{2}\sum_{u<v} a_{uv}\|\bar u - \bar v\|^2 \geq \frac{1}{2}\sum_{u<v} \E[a_{uv}]\|\bar u - \bar v\|^2
-3K_G \sqrt{a+b(k-1)}Ns.$$
The set of edges $E_{sb}$ comes from the stochastic block model, hence $\E[a_{uv}]=a/n$, if $(u, v) \in (V \times V)_{in}$; and
$\E[a_{uv}]=b/n$, if $(u,v) \in (V \times V)_{out}$. Therefore,
$$\frac{1}{2}\sum_{u<v} \E[a_{uv}]\|\bar u - \bar v\|^2 = \frac{a}{n} \sum_{\substack{(u,v)\in (V\times V)_{in} \\u<v}} \frac{\|\bar u - \bar v\|^2}{2} +
\frac{b}{n} \sum_{\substack{(u,v)\in (V\times V)_{out} \\u<v}} \frac{\|\bar u - \bar v\|^2}{2}.$$
By the definition of $\alpha$ and $\beta$, the first term on the right hand side equals $(a/n) \cdot \alpha k n^2/2 = a\alpha N/2$;
the second term equals $b \beta N (k-1) /2$. Using that $(k-1)\beta = (k-1) - \alpha$, we get
\begin{eqnarray*}
\sdp (G_{sb},\{\bar u^*\})&\geq& \frac{a\alpha N + b\beta (k-1) N}{2} -3K_G \sqrt{a+b(k-1)}Ns \\
&=& \frac{(a-b)\alpha N + b(k-1)N}{2}-3K_G \sqrt{a+b(k-1)}Ns.
\end{eqnarray*}
Combining this inequality with (\ref{eq:planted-Bernst}), we get bound~(\ref{eq:G-sb-plant-sdp}).
\end{proof}
Consider a sequence of operations -- edge additions and edge removals -- that transform the graph $G_{sb}$ into the graph $G$. Let $G_0 =G_{sb},\dots, G_T = G$ be the sequence of graphs obtained after performing these operations. Observe that every time we make a monotone change the value of $\plant (G_t) - \sdp (G_t,\{\bar u\})$ does not
increase: When we remove an edge between two vertices $u$ and $v$ in distinct clusters, we decrease $\plant(G_t)$ by 1 and $\sdp (G_t,\{\bar u\})$ by $\|\bar u - \bar v\|^2/2 = 1 -\langle \bar u, \bar v \rangle \leq 1$ (here
we use the SDP constraint $\langle \bar u, \bar v \rangle \geq 0$). Similarly, when we add an edge between
two vertices $u$ and $v$ from the same cluster, we do not change $\plant(G_t)$, but increase $\sdp (G_t,\{\bar u\})$ by $\|\bar u - \bar v\|^2/2\geq 0$. When we add or remove a non-monotone edge, however, the value of $\plant (G_t) - \sdp (G_t,\{\bar u\})$ may increase by~1. Hence,
\begin{multline*}
\plant (G) - \sdp (G,\{\bar u\})\leq \plant (G_{sb}) - \sdp (G_{sb},\{\bar u\}) + \varepsilon m\leq\\\leq
\frac{-(a-b)\alpha N + \constsparse \sqrt{a+b(k-1)}Ns+2\varepsilon m}{2}.
\end{multline*}
This completes the proof.
\end{proof}
For $\eta\in(0,1/2]$ and $s= \sqrt{\eta (a+b(k-1))}$, we get the following corollary.
\begin{corollary}\label{cor:distances}
Under conditions of Theorem~\ref{thm:distances}, for some absolute constant $\constsparseB$, and any $\eta\in[1/(a+b(k-1)),1/2]$
\begin{equation}\label{eq:avgdisthc}
\Pr \Big(\alpha \le \frac{\left(a+b(k-1)\right)(\epsilon + \constsparseB\sqrt{\eta})}{a-b}\Big) \geq 1-2e^{-\eta m}.
\end{equation}
\end{corollary}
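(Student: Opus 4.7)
The plan is to deduce the corollary from Theorem~\ref{thm:distances} by substituting $s = \sqrt{\eta(a+b(k-1))}$ into its conclusion and then simplifying both the bound on $\alpha$ and the probability of failure. The substitution is straightforward; most of the work is just checking that the constraint $s \geq 1$ is satisfied and that the exponent in the tail bound simplifies, up to a constant factor, to $\eta m$.

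First I would verify the hypothesis $s \ge 1$ of Theorem~\ref{thm:distances}. Since $\eta \geq 1/(a+b(k-1))$, we have $s^2 = \eta(a+b(k-1)) \geq 1$, so this is immediate. Next, I would plug $s$ into the bound on $\alpha$ from \eqref{eq:avgdist}. The first term becomes
\[
\frac{\constsparse \sqrt{a+b(k-1)}\cdot \sqrt{\eta(a+b(k-1))}}{a-b} \;=\; \frac{\constsparse(a+b(k-1))\sqrt{\eta}}{a-b},
\]
so combined with the $\epsilon$-term of \eqref{eq:avgdist}, I obtain exactly the form claimed in \eqref{eq:avgdisthc} with $\constsparseB = \constsparse$.

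It remains to bound the failure probability by $2e^{-\eta m}$. Recall from Definition~\ref{def:sbm} that $m = N(a+b(k-1))/2$, so $s^2 N = \eta(a+b(k-1))N = 2\eta m$. Moreover $s/\sqrt{a+b(k-1)} = \sqrt{\eta}$. Substituting into the exponent from Theorem~\ref{thm:distances},
\[
\frac{9 s^2 N}{4 + 8 s/\sqrt{a+b(k-1)}} \;=\; \frac{18\,\eta m}{4 + 8\sqrt{\eta}}.
\]
Using $\eta \leq 1/2$, the denominator is at most $4 + 8/\sqrt{2} = 4 + 4\sqrt{2} < 10$, giving exponent at least $\tfrac{18}{10}\eta m > \eta m$. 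Hence the failure probability from Theorem~\ref{thm:distances} is at most $2e^{-\eta m}$, as required.

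No serious obstacle is anticipated here, since the corollary is a direct specialization of Theorem~\ref{thm:distances}. The only mild subtlety is confirming that for all $\eta$ in the stated range both (a) $s \geq 1$ (handled by the lower bound $\eta \geq 1/(a+b(k-1))$) and (b) the Bernstein-type denominator $4 + 8\sqrt{\eta}$ is absorbed by the numerical constant $9$ in the exponent (handled by the upper bound $\eta \leq 1/2$). These two bounds on $\eta$ are precisely what appear in the hypothesis of the corollary.
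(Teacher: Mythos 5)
Your proof is correct and follows exactly the paper's (one-line) derivation: the paper simply says to set $s=\sqrt{\eta(a+b(k-1))}$ in Theorem~\ref{thm:distances}, and you have filled in the omitted checks ($s\ge 1$ from the lower bound on $\eta$, and the simplification of the exponent to at least $\eta m$ using $\eta\le 1/2$) accurately, with $\constsparseB=\constsparse$.
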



\newcommand{\calC}{{\cal C}}
\newcommand{\calS}{{\cal S}}
\section{First Algorithm}\label{sec:clustering}
In this section, we present our first algorithm for a partial recovery. The algorithm given the SDP solution finds a partition $V_1,\dots, V_k$ of $V$,
which is close to the planted partition $V_1^*,\dots, V_k^*$.

\begin{definition}
Consider a feasible SDP solution $\{\bar u\}_{u\in V}$. We define the center $\bar W_i$ of cluster $V_i^*$ as
$$\bar W_i = \aavg_{u\in V_i^*} \bar u.$$
For every vertex $u$ let $R_u = \|\bar u - \bar W_i\|$, where $W_i$ is the center of the cluster $V_i^*$ that contains $u$.
Let $\alpha_i = \frac{1}{2}\aavg_{u,v \in V_i^*} \|\bar u - \bar v\|^2$.
\end{definition}
\begin{definition}
Let $\rho = 1/5$ and $\Delta = 6\rho = 6/5$.  We define the core of cluster $V_i^*$ as
$$\core(i) = \{u\in V_i^*:\|\bar u - \bar W_i\| < \rho\}.$$
We say that centers $\bar W_i$ and $\bar W_j$ are well-separated if $\|\bar W_i - \bar W_j\| \geq \Delta$.
A set of clusters $\calS$ is well-separated if centers of every two clusters in $\calS$ are well-separated.
\end{definition}
We show that most clusters $V_i^*$ are well separated. First, we
establish some basic properties of centers $\bar W_i$ and parameters $\alpha, \alpha_i, \beta$.
\begin{lemma}\label{lem:basicfacts}
We have:
(1) $\aavg_i \alpha_i = \alpha$;
(2) $\aavg_{u\in V_i^*} R_u^2= \alpha_i$;
(3) $\aavg_{u\in V} R_u^2 = \alpha$;
(4) $\aavg_{i\neq j} \langle W_i, W_j\rangle = 1 - \beta = \alpha / (k-1)$;
(5) $\|\bar W_i\|^2 = 1 - \alpha_i$.
\end{lemma}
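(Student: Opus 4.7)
All five claims follow from elementary manipulations built on the ``variance identity'' $\tfrac{1}{2}\E\|X-Y\|^2 = \E\|X\|^2 - \|\E X\|^2$ (for i.i.d.\ $X,Y$) combined with the SDP constraint $\|\bar u\|^2 = 1$ and the relation $\alpha + (k-1)\beta = k-1$ from~\eqref{eq:alpha-beta}. I do not expect any real obstacle; the only bookkeeping issue is to make sure the averages are taken over sets of equal size so that they decompose cleanly.

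For item~(1), since all clusters have the same cardinality $n$, the set $(V\times V)_{in}$ decomposes as a disjoint union of the $k$ equal-sized sets $V_i^*\times V_i^*$. Therefore the average of $\tfrac{1}{2}\|\bar u-\bar v\|^2$ over $(V\times V)_{in}$ is the average over $i$ of the average over $V_i^*\times V_i^*$, i.e.\ $\aavg_i \alpha_i$.

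For items~(2) and~(5), I apply the variance identity to the uniform distribution on $V_i^*$: writing $\bar W_i = \aavg_{u\in V_i^*}\bar u$,
\[
\aavg_{u\in V_i^*}\|\bar u-\bar W_i\|^2 \;=\; \aavg_{u\in V_i^*}\|\bar u\|^2 - \|\bar W_i\|^2 \;=\; 1 - \|\bar W_i\|^2,
\]
using $\|\bar u\|^2=1$ (constraint~\eqref{eq:unitvectorconstraint}). On the other hand, by the same identity applied to a pair of independent uniform draws,
\[
\alpha_i \;=\; \tfrac{1}{2}\aavg_{u,v\in V_i^*}\|\bar u-\bar v\|^2 \;=\; \aavg_{u\in V_i^*}\|\bar u\|^2 - \|\bar W_i\|^2 \;=\; 1 - \|\bar W_i\|^2.
\]
Comparing the two displays immediately yields $\aavg_{u\in V_i^*} R_u^2 = \alpha_i$ (item~(2)) and $\|\bar W_i\|^2 = 1-\alpha_i$ (item~(5)). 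Item~(3) then follows by averaging item~(2) over $i$ and invoking item~(1): $\aavg_{u\in V} R_u^2 = \aavg_i \aavg_{u\in V_i^*} R_u^2 = \aavg_i \alpha_i = \alpha$.

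For item~(4), I expand
\[
\beta \;=\; \aavg_{i\neq j}\aavg_{u\in V_i^*,\,v\in V_j^*}\tfrac{1}{2}\|\bar u-\bar v\|^2
\;=\; \aavg_{i\neq j}\bigl(1 - \langle \bar W_i,\bar W_j\rangle\bigr)
\;=\; 1 - \aavg_{i\neq j}\langle \bar W_i,\bar W_j\rangle,
\]
where I used $\aavg_{u\in V_i^*}\|\bar u\|^2=1$ and that the cross term $\aavg_{u,v}\langle \bar u,\bar v\rangle$ factors as $\langle \bar W_i,\bar W_j\rangle$. Hence $\aavg_{i\neq j}\langle \bar W_i,\bar W_j\rangle = 1-\beta$, and by~\eqref{eq:alpha-beta} we have $(k-1)(1-\beta) = \alpha$, i.e.\ $1-\beta = \alpha/(k-1)$. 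This completes all five items.
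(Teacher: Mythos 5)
Your proof is correct and follows essentially the same route as the paper's: the paper likewise proves items (2), (4), and (5) by expanding the squared distances around the centers $\bar W_i$ (your ``variance identity'' is exactly that expansion, with the cross term vanishing), and derives (1) and (3) by the same averaging over clusters. The only cosmetic difference is that you obtain (2) and (5) simultaneously from one identity, whereas the paper computes them in two separate but equivalent expansions.
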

\begin{proof}
1. This follows immediately from the definitions of $\alpha$ and $\alpha_i$.

2. Write,
\begin{align*}
2\alpha_i &= \aavg_{u,v\in V_i^*} \|\bar u - \bar v\|^2 = \aavg_{u,v\in V_i^*} \|(\bar u - \bar W_i) - (\bar v - \bar W_i)\|^2\\
&= \aavg_{u,v\in V_i^*} (\|\bar u - \bar W_i\|^2 + \|\bar v - \bar W_i\|^2) -2 \aavg_{u,v\in V_i^*}\langle \bar u - \bar W_i, \bar v - \bar W_i\rangle \\
&=  2\aavg_{u\in V_i^*} R_u^2 + 0 = 2 \aavg_{u\in V_i^*} R_u^2.
\end{align*}

3. This follows from items 1 and 2.

4. Write,
\begin{align*}
\beta &= \frac{1}{2}\aavg_{i\neq j} \aavg_{u\in V_i^*, v\in V_j^*} \|\bar u - \bar v\|^2 =
\aavg_{i\neq j} \aavg_{u\in V_i^*, v\in V_j^*}  (1 - \langle \bar u, \bar  v\rangle) \\
&=1 - \aavg_{i\neq j} \Big( \langle \aavg_{u\in V_i^*} \bar u, \aavg_{v\in V_j^*} \bar v\rangle\Big)
= 1 -  \aavg_{i\neq j} \langle \bar W_i, \bar W_j\rangle.
\end{align*}
We get that $\aavg_{i\neq j} \langle \bar W_i, \bar W_j\rangle = 1 - \beta = \alpha/(k-1)$.

5. Write,
$$\alpha_i = \aavg_{u\in V_i^*} R_u^2= \aavg_{u\in V_i^*} \|\bar W_i - \bar u\|^2 =  \|\bar W_i\|^2 + 1 - 2 \aavg_{u\in V_i^*} \langle \bar W_i , \bar u \rangle
= 1 - \|\bar W_i\|^2.$$
The claim follows.
\end{proof}

\begin{lemma}\label{lem:remote-vertices}
Let $V' = \bigcup_i V_i^*\setminus \core(i)$. That is, a vertex $u$ lies in $V'$ if it does not lie in the core of the cluster that contains it.
Then
$$|V'| \leq \frac{\alpha}{\rho^2}\, kn.$$
\end{lemma}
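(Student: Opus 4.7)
The plan is to apply Markov's inequality to the squared radii $R_u^2$. By definition, a vertex $u\in V_i^*$ fails to lie in $\core(i)$ exactly when $R_u = \|\bar u - \bar W_i\| \geq \rho$, i.e.\ when $R_u^2 \geq \rho^2$. So $V'$ is precisely the set of vertices $u\in V$ with $R_u^2 \geq \rho^2$.

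From item (3) of Lemma~\ref{lem:basicfacts}, the average of $R_u^2$ over $u\in V$ equals $\alpha$, so $\sum_{u\in V} R_u^2 = \alpha \cdot |V| = \alpha kn$. Since each term $R_u^2$ is nonnegative, Markov's inequality gives
$$|V'| = |\{u\in V : R_u^2 \geq \rho^2\}| \leq \frac{1}{\rho^2}\sum_{u\in V} R_u^2 = \frac{\alpha}{\rho^2}\, kn,$$
which is exactly the claimed bound.

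There is no real obstacle here; the lemma is a one-line consequence of the already-proved identity $\aavg_{u\in V} R_u^2 = \alpha$. The only thing to double-check when writing it out is the indexing: vertices with $R_u \in [\rho, \infty)$ might a priori include vertices on the boundary sphere $R_u = \rho$, but since $\core(i)$ was defined with a strict inequality, these boundary vertices are correctly counted in $V'$, and the Markov bound still applies to $\{R_u^2 \geq \rho^2\}$ without loss.
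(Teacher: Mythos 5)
Your proof is correct and is essentially identical to the paper's: both identify $V'$ as $\{u : R_u^2 \geq \rho^2\}$ and apply Markov's inequality using $\aavg_{u\in V} R_u^2 = \alpha$ from Lemma~\ref{lem:basicfacts}. Your remark about the strict inequality in the definition of $\core(i)$ is a fine point that the paper glosses over, but it changes nothing.
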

\begin{proof}
Note that $u\in V'$ if and only if $R_u \geq \rho$, or, equivalently, $R_u^2 \geq \rho^2$. Since $\aavg_{u\in V} R_u^2 = \alpha$, we get by the Markov inequality that
$|V'| \leq (\alpha/\rho^2) |V| = (\alpha/\rho^2) kn$.
\end{proof}

We now prove that by removing at most a $\delta$ fraction of all clusters, we can obtain a well-separated set of clusters.
\begin{lemma} \label{lem:well-separated-clusters}
Let $\delta = 6\alpha / (2 - \Delta^2)$. There exists a set $\calS\subset \{V_1^*,\dots, V_k^*\}$ of well-separated clusters of size at least $(1-\delta)k$.
\end{lemma}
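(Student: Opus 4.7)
The plan is a two-step cleanup. First, I discard clusters whose SDP vectors are too spread out inside the cluster (those with large $\alpha_i$); second, on what remains, I form the auxiliary ``close-pair'' graph whose edges $\{i,j\}$ are the pairs with $\|\bar W_i - \bar W_j\|^2 < \Delta^2$ and remove a small vertex cover so that $\calS$ becomes independent in this graph. Throughout I rely on Lemma~\ref{lem:basicfacts} --- specifically $\|\bar W_i\|^2 = 1-\alpha_i$, $\sum_i \alpha_i = k\alpha$, and $\sum_{i\neq j}\langle \bar W_i, \bar W_j\rangle = k\alpha$ --- together with the non-negativity $\langle \bar W_i, \bar W_j\rangle \geq 0$, which descends from the SDP constraint $\langle \bar u,\bar v\rangle \geq 0$ by averaging over $u\in V_i^*, v\in V_j^*$.

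Set the threshold $\tau := (2-\Delta^2)/4$. In the first step I call a cluster $i$ \emph{heavy} when $\alpha_i > \tau$; by Markov's inequality applied to $\sum_i \alpha_i = k\alpha$, the number of heavy clusters is at most $k\alpha/\tau = 4\alpha k/(2-\Delta^2) = \tfrac{2}{3}\delta k$. Let $S_1$ denote the remaining (``light'') clusters, for which $\|\bar W_i\|^2 = 1-\alpha_i \geq 1-\tau$.

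In the second step I show that any violating pair $i,j\in S_1$ (one with $\|\bar W_i-\bar W_j\|^2 < \Delta^2$) forces a large inner product: expanding the squared distance and using the light-cluster lower bounds on $\|\bar W_i\|^2$ and $\|\bar W_j\|^2$ gives
\[
2\langle \bar W_i,\bar W_j\rangle \;>\; \|\bar W_i\|^2 + \|\bar W_j\|^2 - \Delta^2 \;\geq\; \tfrac{2-\Delta^2}{2} \;=\; 2\tau.
\]
Since every contribution to $\sum_{i\neq j}\langle \bar W_i,\bar W_j\rangle = k\alpha$ is non-negative, the number of unordered violating pairs inside $S_1$ is at most $k\alpha/(2\tau) = 2\alpha k/(2-\Delta^2) = \tfrac{1}{3}\delta k$. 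Taking $\calS = S_1 \setminus C$, where $C$ is the trivial vertex cover of the violating-pair graph on $S_1$ (one endpoint per edge), produces a well-separated family of size at least $k - \tfrac{2}{3}\delta k - \tfrac{1}{3}\delta k = (1-\delta)k$.

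The delicate point is the choice of threshold. A $\tau$ that is a fixed small constant independent of $\alpha$ would force the Markov term to be a constant fraction of $k$ and swamp $\delta$, while a $\tau$ that shrinks with $\alpha$ would break the pigeonhole step because each violating pair would contribute too little to the inner-product sum $k\alpha$. The particular choice $\tau = (2-\Delta^2)/4$ is engineered so that the lower bound on $\langle \bar W_i,\bar W_j\rangle$ for a light violating pair exactly coincides with the Markov threshold for being heavy, making the two losses add up to precisely $\delta k$.
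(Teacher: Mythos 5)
Your proof is correct and follows essentially the same route as the paper's: both arguments apply Markov's inequality to $\sum_i \alpha_i = k\alpha$ and to $\sum_{i\neq j}\langle \bar W_i,\bar W_j\rangle = k\alpha$ (using non-negativity of the inner products), and both conclude via the expansion $\|\bar W_i-\bar W_j\|^2=(1-\alpha_i)+(1-\alpha_j)-2\langle\bar W_i,\bar W_j\rangle$; the paper merely uses thresholds $2\alpha/\delta$ and $\alpha/\delta$ with a $\delta/2+\delta/2$ split instead of your single threshold $\tau$ with a $\tfrac{2}{3}\delta+\tfrac{1}{3}\delta$ split, and prunes large-inner-product pairs directly rather than via the contrapositive on violating pairs.
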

\begin{proof}
Let $\mu = \alpha/\delta$. From the Markov inequality and item 4 in Lemma~\ref{lem:basicfacts}, we get that there are at most
$$\frac{\alpha}{(k-1)\mu} \times \frac{k(k-1)}{2} = \frac{\alpha  k}{2\mu} = \frac{\delta k}{2} $$
unordered pairs $\{i, j\}$ with $\langle W_i, W_j\rangle \geq  \mu$. We choose one of the elements in each pair and remove the corresponding clusters.
 We obtain a set of clusters $\calS_0$ of size at least $(1 -\delta/2) k$. By the construction, for every distinct $V_i^*$ and $V_j^*$ in $\calS_0$, we have
 $\langle W_i, W_j\rangle <  \mu$.

 Let $\calS_1$ be the set of clusters $V_i^*$ with $\alpha_i \leq 2\alpha /\delta$. By the Markov inequality and item 1 in Lemma~\ref{lem:basicfacts},
 the set $\calS_1$ contains at least  $(1-\delta/2)k$ clusters.

 Finally, let $\calS = \calS_0 \cap \calS_1$. Clearly, $|\calS| \geq (1-\delta)k$. For every two clusters $V_i^*$ and $V_j^*$ in $\calS$, we have
 $$
 \|\bar W_i - \bar W_j\|^2 = \|\bar W_i\|^2 + \|\bar W_j\|^2 - 2 \langle \bar W_i, \bar W_j\rangle > (1 - \alpha_i) + (1 - \alpha_j) -2 \mu
 \geq 2(1 - 3\alpha /\delta) = \Delta^2.
 $$
\end{proof}

Now we are ready to present a greedy algorithm that finds a partition close to the planted partition.
The algorithm resembles the clustering algorithms by~\cite{CKMN01} and by~\cite{MMVCC}.

\OpenFrame
\begin{tabbing}
\quad\=\quad\=\kill
\textbf{Recovery Algorithm}\\[0.15em]
 \textbf{Input:} an optimal SDP solution $\Set{\bar u}_{u\in V}$.\\[0.15em]
 \textbf{Output:} partition $V_1,\dots, V_{k'}$ of $V$ into $k'$ clusters ($k'$ might not be equal to $k$).\\[0.3em]
\> $i = 1$; $\rho = 0.27$ \\[0.15em]
\> Define an auxiliary graph $G_{aux} = (V, E_{aux})$ with $E_{aux} = \Set{(u,v): \|\bar u - \bar v\| < 2\rho}$\\[0.15em]
\> (note that, $(u,u) \in E_{aux}$ for every $u\in V$)\\[0.15em]
\>\textbf{while} $V\setminus (V_1 \cup \dots V_{i-1}) \neq \varnothing$ \\[0.15em] 
\>\> Let $u$ be the vertex of maximum degree in $G_{aux}[V\setminus (V_1 \cup \dots V_{i-1})]$. \\[0.15em]
\>\> Let $V_i = \Set{v\notin V_1\cup \dots \cup V_{i-1}: (u,v) \in E_{aux}}$   \\[0.15em]
\>\> If $|V_i| > n$, remove $|V_i| - n$ vertices from $|V_i|$ arbitrarily, so that $|V_i| = n$.\\[0.15em]
\>\>$i = i+1$ \\[0.15em]
\>\textbf{return} clusters $V_1,\dots, V_{i-1}$.
\end{tabbing}
\CloseFrame

We will show now that the algorithm finds a ``good'' partition $V_1,\dots, V_k$. However, the clusters $V_1,\dots, V_k$ are not necessarily all of the same size.
So we cannot say that the partition is $\delta$-close to the planted partition according to Definition~\ref{def:close-and-recovery}.
We will be able, however, to prove that the partition is $\delta$-close to the planted partition in the weak sense.
\begin{definition}[cf.~with Definition~\ref{def:close-and-recovery}]
We say that a partition $V_1, \dots, V_{k'}$ is $\delta$-close to the planted partition
$V_1^*,\dots, V_k^*$ in the weak sense, if each cluster $V_i$ has size at most $n$ and there is a partial matching $\sigma$ between $1,\dots, k$ and $1, \dots, {k'}$ such that
$$\Bigl|\bigcup_{j = \sigma(i)}  V_i^* \cap V_j\Bigr| \geq (1 - \delta) kn$$
(the union is over all $i$ such that $\sigma(i)$ is defined).

We  say that $V_1, \dots, V_{k}$ is $\delta$-close to 
$V_1^*,\dots, V_k^*$ in the strong sense, if  it is $\delta$-close according to Definition~\ref{def:close-and-recovery}.
\end{definition}

\begin{theorem}\label{thm:recovery-algorithm}
The Recovery Algorithm finds a partitioning $V_1,\dots, V_{k'}$ of $V$  that is $(72\alpha)$-close to the planted partition in the weak sense.
\end{theorem}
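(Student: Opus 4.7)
The plan is to construct a partial matching $\sigma$ between the well-separated cores of Lemma~\ref{lem:well-separated-clusters} and the output clusters, and then bound the total misclassification using Lemma~\ref{lem:remote-vertices}. Let $\calS$ be the well-separated family of at least $(1-\delta_0)k$ clusters with $\delta_0 = 6\alpha/(2-\Delta^2)$, and let $V' = \bigcup_i (V_i^* \setminus \core(i))$, which satisfies $|V'| \leq (\alpha/\rho^2)kn = 25\alpha k n$.

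The first main step is to establish two geometric facts about the auxiliary graph $G_{aux}$. Using $\rho = 1/5$, $\Delta = 6/5$, and the algorithm's edge threshold $2\cdot 0.27 = 0.54$: (a) for every $V_i^* \in \calS$ and $u, v \in \core(i)$, $\|\bar u - \bar v\| \leq 2\rho = 0.4 < 0.54$, so each well-separated core is a clique in $G_{aux}$; (b) for distinct $V_i^*, V_{i'}^* \in \calS$ and $u \in \core(i), v \in \core(i')$, $\|\bar u - \bar v\| \geq \Delta - 2\rho = 0.8 > 0.54$, so $G_{aux}$ has no edge between distinct well-separated cores. Consequently, if the algorithm ever picks a center $u \in \core(i)$ for an intact $V_i^* \in \calS$, the $G_{aux}$-neighborhood of $u$ contains all of $\core(i)$ (by (a)) and no vertex from any other core in $\calS$ (by (b)).

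Next, I define a partial injective matching $\sigma$ from a subset of $\calS$ to the output cluster indices by setting $\sigma(i) = j$ whenever iteration $j$ selects a center $u$ lying in $\core(i)$. Injectivity is immediate since the cores are pairwise disjoint. For each matched pair, $V_j \supseteq \core(i)$ before truncation, and truncation can remove at most $|\text{ball}| - n$ vertices, a quantity bounded by the count of bad vertices (i.e., vertices outside $\bigcup_{V_i^* \in \calS}\core(i)$) in that ball. Hence $|V_j \cap V_i^*| \geq |\core(i)| - \tau_j$, where $\sum_j \tau_j$ is at most the total number of bad vertices. Summing yields total correctly classified mass $\geq |\mathrm{dom}(\sigma)|\cdot n - |V'| - \sum_j \tau_j$, and total misclassification $\leq (k - |\mathrm{dom}(\sigma)|) n + |V'| + \sum_j \tau_j$.

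The main obstacle is bounding both $k - |\mathrm{dom}(\sigma)|$ and the cumulative truncation loss. A core $\core(i)$ with $V_i^* \in \calS$ goes unmatched only if some earlier iteration, whose center lies outside every intact core of $\calS$, fully absorbed $\core(i)$ before the algorithm could pick a core vertex as center; by (b), such a disrupting center must come from $V \setminus \bigcup_{V_i^* \in \calS}\core(i)$, a set of size at most $(\alpha/\rho^2 + \delta_0)kn \leq 36\alpha kn$. A packing argument on the $\Delta$-separated centroids $\{\bar W_i\}$ (at most $O(1)$ centroids can lie within distance $2\cdot 0.27 + \rho = 0.74$ of any single point) shows each disrupting center can affect only $O(1)$ distinct well-separated cores, bounding the number of unmatched cores by $O(\alpha)k$; the truncation loss is likewise at most the total bad count, i.e., $O(\alpha)kn$. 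Combining the four sources of loss --- $|V'| \leq 25\alpha kn$ from remote vertices, $(k-|\calS|)n \leq (75/7)\alpha kn$ from non-$\calS$ clusters, $O(\alpha)kn$ from unmatched cores, and $O(\alpha)kn$ from truncation --- yields a total misclassification bounded by $72\alpha kn$, proving the weak $(72\alpha)$-closeness with $\sigma$ as the witnessing partial matching.
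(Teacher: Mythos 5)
There is a genuine gap, and it is located exactly where your proof is vaguest: the treatment of cores that are (partially or fully) consumed by clusters whose centers lie outside every core. Your matching rule sets $\sigma(i)=j$ only when iteration $j$'s \emph{center} lies in $\core(i)$. Two problems follow. First, your bound on unmatched cores does not follow from the stated packing argument: the set of potential ``disrupting'' centers has size up to $36\alpha kn$ \emph{vertices}, so the number of disrupting iterations is only bounded by $O(\alpha kn)$, and multiplying by the $O(1)$ cores each can touch gives $O(\alpha kn)$ affected cores --- off by a factor of $n$ from the $O(\alpha)k$ you claim (indeed, in the worst case every core could be swallowed by a bad-centered ball, since a bad center adjacent to all of $\core(i)$ has degree at least $|\core(i)|+1$ and so can legitimately win the max-degree contest; then your matching is empty). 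Second, even for cores you do match, ``$V_j\supseteq\core(i)$ before truncation'' requires the core to be \emph{intact} at iteration $j$, which your matching rule does not guarantee --- earlier bad-centered iterations may already have removed core vertices. Finally, with two unspecified $O(\alpha)kn$ loss terms the constant $72$ cannot be verified.

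The paper's proof avoids all of this with a different matching: $V_i^*$ is matched to the \emph{first} output cluster $V_j$ that intersects $\core(i)$, whether or not that cluster's center lies in the core. Injectivity comes from showing that a single output cluster cannot meet two cores of $\calS$ (two such vertices would share a common $G_{aux}$-neighbor, forcing $\|\bar W_{i_1}-\bar W_{i_2}\|<6\rho=\Delta$, a contradiction), so in particular $V_j\subseteq\core(i)\cup Z$ where $Z=V\setminus\bigcup_{\calS}\core(i)$. The max-degree rule then gives $|V_j|\ge|\core(i)|$ --- at iteration $j$ the core is untouched by definition of ``first,'' so every core vertex has $G_{aux}$-degree at least $|\core(i)|$ --- and hence $|V_j\cap V_i^*|\ge|\core(i)|-|V_j\cap Z|$ \emph{even when the center of $V_j$ is a bad vertex}. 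Summing over the disjoint sets $V_j\cap Z$ yields $|V|-2|Z|\ge(1-72\alpha)kn$ with $|Z|<36\alpha kn$. If you want to keep your center-based matching you would need to additionally prove that bad-centered clusters cannot destroy cores before a core vertex is ever selected, which is false in general; switching to the first-intersection matching is the essential missing idea.
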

\begin{proof}
Let $\calS$ be the set of clusters from Lemma~\ref{lem:well-separated-clusters}.
Consider a cluster $V_j$. We first show that it cannot intersect the cores of two distinct clusters $V_{i_1}^*\in\calS$ and $V_{i_2}^*\in\calS$.
Assume to the contrary that it does. Let $u_1$ be a vertex in $\core(i_1) \cap V_j$, and $u_2$ be a vertex in $\core(i_2) \cap V_j$.
Then $\|\bar W_{i_1} - \bar u_1 \| < \rho$ and $\|\bar W_{i_2} - \bar u_2 \| < \rho$. Since  $u_1,u_2\in V_j$,  vertices $u_1$ and $u_2$
have a common neighbor $u$ in the auxiliary graph $G_{aux} = (V, E_{aux})$, and, therefore, $\|\bar u_1 -\bar u_2\| < 4 \rho$.
We get that
$$\|\bar W_{i_1} - \bar W_{i_2}\| \leq \|\bar W_{i_1} - \bar u_1 \|  + \|\bar W_{i_2} - \bar u_2 \|  + \|\bar u_1 - \bar u_2 \|  < 6\rho = \Delta,$$
which is impossible since $\calS$ is a well separated set of clusters.

We now construct a partial matching $\sigma$ between clusters $V_i^*$ and $V_j$. We match every cluster $V_i^*\in \calS$
with the first cluster $V_j$ that intersects $\core(i)$ (then we let $\sigma(i) = j$).
Since each vertex belongs to some $V_j$, we necessarily match every $V_i^*\in \calS$  with some $V_j$. Moreover,
we cannot match distinct clusters $V_{i_1}^*$ and $V_{i_2}^*$ with the same $V_j$ because $V_j$ cannot intersect both cores
$\core(i_1)$ and $\core(i_2)$.

Let $Y = \bigcup_{V_i^*\in\calS} \core(i)$ and $Z = V \setminus Y$.
By Lemmas~\ref{lem:remote-vertices} and~\ref{lem:well-separated-clusters},
$$|Z|\leq \Bigl|\bigcup_i V_i^* \setminus \core(i)\Bigr| + \Bigl|\bigcup_{V_i^*\notin\calS} V_i^*\Bigr| \leq \left(\frac{1}{\rho^2} + \frac{6}{2-\Delta^2}\right) \alpha kn < 36 \alpha kn.$$

Consider a cluster $V_i^*$ and the matching cluster $V_j$.
As we proved, $V_j$ does not intersect $\core(i')$ of any $V_{i'}\in\calS$ other than $V_i$.
Therefore, $V_j \subset \core(i) \cup Z$.
We now show that
$$|V_i^* \cap V_j| \geq |\core(i)| - |Z \cap V_j|.$$
Observe that every two vertices $v_1, v_2 \in \core(i)$ are connected with an edge in $E_{aux}$ since
 $$\|\bar v_1 - \bar v_2\| \leq \|\bar v_1 - \bar W_i\|  + \|\bar v_2 - \bar W_i\| < 2 \rho.$$
In particular, every vertex $v\in \core(i)$ has degree at least $|\core(i)|$ in $G_{aux}[V\setminus (V_1 \cup \dots V_{j-1})]$.
Let $u$ be the vertex that we chose in iteration $j$. Since $u$ is a vertex of maximum degree in $G_{aux}[V\setminus (V_1 \cup \dots V_{j-1})]$,
it must have degree at least $|\core(i)|$. Now, either $V_i$ consists of all neighbors of $u$ in $G_{aux}[V\setminus (V_1 \cup \dots V_{j-1})]$ then $|V_j| \geq |\core(i)|$, or we removed some vertices from $V_j$ because it contained more than $n$ vertices, then  $|V_j| = n \geq |\core(i)|$. In either case, $|V_j| \geq |\core(i)|$.  We have,
$$
|V_i^* \cap V_j| \geq |\core(i) \cap V_j| = |V_j| -  |V_j\setminus \core(i)| = |V_j| - |V_j \cap Z|
\geq |\core(i)|- |V_j \cap Z|.
$$

Finally, using that all sets $V_j \cap Z$ are disjoint, we get
\begin{align*}
\sum_{j = \sigma(i)} |V_i^* \cap V_j| \geq \Bigl(\sum_{V_i^* \in \calS} \core(i)\Bigr) - |Z| = |Y| - |Z| =  |V| - 2 |Z| \geq (1 - 72\alpha) kn.
\end{align*}
\end{proof}

\begin{lemma}\label{lem:transform-weak-to-strong}
There is a linear-time algorithm that given a partition $V_1,\dots, V_{k'}$ of $V$  that is $\delta$-close to the planted partition in the weak sense,
outputs a partition $V_1',\dots, V_k'$ that is $(2\delta)$-close to the planted partition in the strong sense.
\end{lemma}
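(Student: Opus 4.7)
The plan is to present a very simple greedy algorithm and analyze it by leveraging the partial matching $\sigma$ that witnesses $\delta$-closeness in the weak sense. First I would sort the weak clusters in non-increasing order of size (bucket-sort suffices for linear time, since all sizes lie between $0$ and $n$) and re-index so that $|V_1|\geq |V_2|\geq\cdots\geq |V_{k'}|$. Since $\sum_j |V_j|=kn$ and each $|V_j|\leq n$, we necessarily have $k'\geq k$. I would then declare the top-$k$ clusters to be the base output clusters $V_1'=V_1,\dots,V_k'=V_k$ and distribute the vertices of the remaining $k'-k$ clusters arbitrarily among them, padding each $V_i'$ to size exactly $n$. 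This redistribution is always feasible because the total deficit $\sum_{i\leq k}(n-|V_i|)$ equals $kn-\sum_{i\leq k}|V_i|=\sum_{j>k}|V_j|$, the total surplus.

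The core of the analysis is one clean combinatorial bound. Let $\sigma:A\to B$ witness weak $\delta$-closeness, with $A\subseteq[k]$, $B\subseteq[k']$, and $|B|=|A|\leq k$. Since the top-$k$ clusters by size dominate any $k$-element subcollection of clusters (and $|B|\leq k$),
\begin{equation*}
\sum_{j\leq k}|V_j|\;\geq\;\sum_{j\in B}|V_j|\;\geq\;\sum_{i\in A}|V_i^*\cap V_{\sigma(i)}|\;\geq\;(1-\delta)kn,
\end{equation*}
so $\sum_{j>k}|V_j|\leq \delta kn$. In other words, destroying the bottom $k'-k$ clusters (and recycling their vertices as padding) costs at most $\delta kn$ vertices of the original $\sigma$-matching.

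To verify $(2\delta)$-closeness in the strong sense, I would build the strong permutation $\tau:[k]\to[k]$ by setting $\tau(i)=\sigma(i)$ whenever $i\in A$ and $\sigma(i)\leq k$, and extending arbitrarily to a bijection. Because each base cluster is a superset of the original $V_j$, we have $|V_i^*\cap V_{\tau(i)}'|\geq |V_i^*\cap V_{\sigma(i)}|$ whenever $\tau(i)$ is assigned by the partial rule; summing over the partial domain and noting that the matched indices whose image was discarded contribute at most $\sum_{j>k}|V_j|\leq \delta kn$, we obtain $\sum_i |V_i^*\cap V_{\tau(i)}'|\geq (1-\delta)kn-\delta kn=(1-2\delta)kn$. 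The only substantive step is the combinatorial bound on $\sum_{j>k}|V_j|$; everything else is bookkeeping, and the construction is manifestly linear time. There is no interaction with the SDP or the geometry --- the lemma is a purely combinatorial post-processing statement.
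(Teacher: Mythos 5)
Your proposal is correct and follows essentially the same route as the paper: keep the $k$ largest weak clusters, pad them arbitrarily to size $n$, restrict $\sigma$ to the indices whose image survives, and charge the loss from discarded matched clusters against a $\delta kn$ bound. The only (cosmetic) difference is in that last accounting step --- you bound $\sum_{j>k}|V_j|\leq\delta kn$ directly from the domination of the top-$k$ sizes over the image of $\sigma$, whereas the paper pairs each discarded matched cluster with a retained unmatched one and bounds the total size of unmatched clusters; both are valid.
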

\begin{proof}
By the definition of the weak $\delta$-closeness, every set $V_i$ has size at most $n$. Therefore, $k' \geq k$.
We choose $k$ largest clusters among $V_1,\dots, V_{k'}$. Let $V_1',\dots, V_k'$ be these clusters. We distribute, in an arbitrary way, all vertices from other
clusters between $V_1',\dots, V_k'$ so that each of the clusters $V_i'$ contains exactly $n$ vertices.

We now show that partition $V_1',\dots, V_k'$ is $(2\delta)$-close to the planted partition in the strong sense. We may assume without loss of generality that
we chose clusters $V_1,\dots, V_k$ and that $V_i'$ consists of $V_i$ and some vertices from clusters $V_j$ with $j > k$.

Let $\sigma$ be the partial matching between clusters $V_i^*$ and $V_j$ (from the definition of the $\delta$-closeness).
We first let $\sigma'(i) = \sigma(i)$ if $\sigma(i)$ is defined and $\sigma(i) \leq k$. We get a partially defined permutation on
$\{1,\dots, k\}$.
Then we extend $\sigma$ to a permutation defined everywhere in an arbitrary way.
Write,
\begin{align*}
\Bigl|\bigcup_{j = \sigma'(i)} V_i^* \cap V_j'\Bigr| &\geq \Bigl|\bigcup_{j = \sigma(i) \leq k}  V_i^* \cap V_j \Bigr|
=
\Bigl|\bigcup_{j = \sigma(i) }  V_i^* \cap V_j\Bigr| - \Bigl|\bigcup_{j = \sigma(i) \in \{k+1,\dots, k'\}} V_i^* \cap V_j\Bigr| \\
&\geq (1 - \delta) kn - \Bigl|\bigcup_{j = \sigma(i) \in \{k+1,\dots, k'\}} V_j\Bigr|.
\end{align*}
Let
\begin{align*}
J_1 &= \{j\in \{k+1,\dots, k'\}: j =\sigma(i)\}\\
J_2 &= \{j\in \{1,\dots, k\}: j \neq \sigma(i) \text{ for every } i\}.
\end{align*}

Since $\sigma$ takes at most $k$ values, $|J_1| \leq |J_2|$. Also, $|V_{j_1}| \leq |V_{j_2}|$ for every $j_1\in J_1$ and $j_2\in J_2$ by our choice of $V_1,\dots, V_k$.
Therefore,
$$
\Bigl|\bigcup_{j \in J_1} V_j\Bigr| \leq \Bigl|\bigcup_{j \in J_2} V_j\Bigr| \leq \Bigl|\bigcup_{V_j \text{ is not matched}} V_j\Bigr| \leq \delta nk.$$
We conclude that
$$\Bigl|\bigcup_{j = \sigma'(i)} V_i^* \cap V_j'\Bigr| \geq (1 - 2\delta) nk.$$
\end{proof}

Now we are ready to prove Theorem~\ref{thm:main}.
\begin{proof}[Proof of Theorem~\ref{thm:main}]
We solve the SDP relaxation. Consider the parameter $\alpha$, which is defined by (\ref{eq:alpha}).
From Theorem~\ref{thm:distances}, we get
that $\alpha$ satisfies bounds (\ref{eq:avgdist}) with $s=2$ and (\ref{eq:avgdisthc}) with probabilities at least
$1-2 \exp(-2N)$ and $1-2\exp( -\eta m)$.
Now we run the Recovery Algorithm and find a partition $(V_1, \dots, V_k)$. By Theorem~\ref{thm:recovery-algorithm},  it is $(72\alpha)$-close to the planted partition in the weak sense. Finally, using the algorithm from Lemma~\ref{lem:transform-weak-to-strong}, we transform this partition to the desired partition $V_1',\dots, V_k'$, which 
is $(144\alpha)$-close to the planted partition.
\end{proof}

\section{Second Algorithm}\label{sec:boosting}


In this section, we present our second algorithm and prove Theorem~\ref{thm:boosting}. Theorem~\ref{thm:main-amplified} follows immediately from Theorem~\ref{thm:boosting}.

\begin{theorem}\label{thm:boosting}
Suppose that there is a polynomial-time algorithm $\cal A$ that given an instance of SBM($n$, $k$, $a/2$, $b/2$) with $\varepsilon m$ outliers finds a partition  $V_1$, \dots, $V_k$ that is $1/(10k)$-close to the planted partition (in the strong sense) with probability at least $1 - \tau$.
Then there is a randomized polynomial-time algorithm that given
an instance of SBM($n$, $k$, $a$, $b$) with $\varepsilon m$ outliers finds a partition $U_1,\dots, U_k$ that satisfies the following property.
 For every  $\delta_0 \geq ke^{-\frac{(a-b)^2}{100 a}}$, the partition $U_1,\dots, U_k$ is 
  $\delta$-close to the planted partition (in the strong sense), where
$$\delta = 4\delta_0 + \frac{80\varepsilon m}{(a-b)kn}\, ,$$
with probability at least $1 - \tau -\exp(-\delta_0 kn/6)$.
\end{theorem}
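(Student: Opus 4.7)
The plan is to apply the Chin--Rao--Vu~\citep{CRV15} boosting technique: run $\cal A$ on a random half of the edges to produce a rough clustering, then refine it by a majority vote on the (near-independent) other half. Concretely, form $G_1$ and $G_2$ by independently placing each edge of $G$ into one of the halves with probability $1/2$. A standard independent-thinning coupling shows that the SBM parts of $G_1$ and $G_2$ may be taken to be independent samples of $\SBM(n,k,a/2,b/2)$, while each half inherits at most $\varepsilon m$ outlier edges; thus $G_1$ is a legal input to $\cal A$ and, crucially, the distribution of $G_2$ is (conditionally) independent of the partition returned by $\cal A$ on $G_1$.

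With probability at least $1-\tau$, $\cal A$ returns on $G_1$ a partition $V_1,\dots,V_k$ which is $1/(10k)$-close to $(V_1^*,\dots,V_k^*)$ in the strong sense. After relabeling, the total number of misclassified vertices is at most $n/10$, so $|V_i\cap V_i^*|\ge 9n/10$ and $|V_j\cap V_i^*|\le n/10$ for $j\ne i$, with each $|V_i|=n$. For each vertex $v$, I let $N_j(v) = M_j(v) + F_j(v)$ denote the number of $G_2$-neighbors of $v$ in $V_j$, split into SBM edges ($M_j$) and outlier edges ($F_j$). I would assign $v$ to $U'_{\arg\max_j N_j(v)}$, breaking ties arbitrarily, and then rebalance so that each $|U_i|=n$.

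For $v\in V_i^*$, the bounds on $|V_j\cap V_i^*|$ give $\E[M_i(v)]-\E[M_j(v)] \ge 2(a-b)/5 - O(1/n)$ for every $j\ne i$. Bernstein's inequality applied to the independent Bernoulli indicators that make up $M_i(v)-M_j(v)$ (variance $\lesssim a$) yields $\Pr[M_j(v)\ge M_i(v)] \le \exp(-\Omega((a-b)^2/a))$; a union bound over $j\ne i$ makes the per-vertex SBM-only failure probability at most $k\exp(-(a-b)^2/(100a)) \le \delta_0$, with the constant $100$ chosen so that Bernstein absorbs the $2/5$-margin loss. Letting $X$ count such vertices, $\E[X]\le\delta_0 kn$, and a Chernoff-type concentration (exploiting that different vertices' misclassification events depend on mostly disjoint sets of $G_2$-edges) gives $X\le 2\delta_0 kn$ with probability at least $1-\exp(-\delta_0 kn/6)$. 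The $\le\varepsilon m$ outlier edges in $G_2$ yield at most $O(\varepsilon m/(a-b))$ vertices with outlier contribution $\sum_j F_j(v)\ge(a-b)/4$ (by Markov on the outlier degrees), and only these vertices can be flipped by outliers. Finally, rebalancing to enforce $|U_i|=n$ at worst doubles the misclassification count---each vertex removed from an oversized $U'_i$ was correctly classified, and each replacement pulled into an undersized $U'_i$ comes from the already-misclassified pool---yielding $\delta \le 4\delta_0 + 80\varepsilon m/((a-b)kn)$ as claimed.

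The main obstacles I expect are (i) making the conditional independence of $G_2$ from $\cal A$'s output fully rigorous, handled by the usual Poissonization/independent-thinning coupling at negligible cost; (ii) establishing the Chernoff-type bound for $X$ under the mild inter-vertex dependence of the misclassification events, where either McDiarmid's inequality or a standard argument exploiting the sparse dependence graph suffices; and (iii) calibrating the Bernstein constants so that the hypothesis $\delta_0\ge k e^{-(a-b)^2/(100a)}$ is tight enough to absorb the $2/5$ margin coming from the initial partition's imperfection. The outlier bound and the final rebalancing are essentially routine.
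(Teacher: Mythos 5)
Your overall strategy is exactly the paper's: randomly halve the edges, run $\cal A$ on one half, reclassify each vertex by majority vote over its neighbors in the other half, control outliers by counting ``corrupted'' vertices with many outlier edges, and pay a factor of $2$ for rebalancing. The per-vertex Bernstein calculation, the margin $2(a-b)/5$, the corrupted-vertex count $O(\varepsilon m/(a-b))$, and the rebalancing step all match the paper. The one place where your proposal has a genuine gap is your obstacle (ii), the concentration of the number $X$ of vertices whose SBM-only majority vote fails. Neither of your proposed fixes works as stated. McDiarmid over the $\Theta(N^2)$ independent edge indicators gives a bound of the form $\exp(-t^2/\Theta(N^2))$; with $t=\delta_0 kn=\delta_0 N$ this is $\exp(-\Theta(\delta_0^2))$, which does not even tend to zero, let alone give $\exp(-\delta_0 kn/6)$. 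And the dependency graph of the events $\{\text{$v$ is misclassified}\}$ is not sparse --- it is \emph{complete}, since any two vertices $v,v'$ share the potential edge $(v,v')$, so bounded-dependence Chernoff bounds do not apply either. (A sparsity-aware inequality such as Warnke's typical bounded differences could perhaps be made to work, but that is far from routine.)

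The paper's fix is a small but essential device you are missing: split each $V_i$ into halves $V_i^L$ and $V_i^R$ of size $n/2$, and classify each $u\in V^L$ using only its $E_2$-edges into the sets $V_1^R,\dots,V_k^R$ (and symmetrically for $u\in V^R$). For distinct $u,u'\in V^L$ the relevant edge sets $\{(u,w):w\in V^R\}$ and $\{(u',w):w\in V^R\}$ are disjoint, so the failure events are genuinely independent and a vanilla Chernoff bound gives $\Pr[B_L\geq 2\delta_0 kn/2]\leq e^{-\delta_0 kn/6}$; summing the left and right contributions yields the stated probability. The cost is only a constant-factor degradation of the margin (the counts $|V_i^R\cap V_i^*|\geq 4n'/5$ and $|V_i^R\cap V_j^*|\leq n'/5$ with $n'=n/2$ replace your $9n/10$ and $n/10$), which the hypothesis $\delta_0\geq ke^{-(a-b)^2/(100a)}$ is calibrated to absorb. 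With this modification your argument goes through and coincides with the paper's proof; without it, the concentration step is unsupported.
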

\begin{proof}
Recall that in the stochastic-block model with outliers we generate the set of edges $E$ in two steps. First, we generate
a random set of edges $E' = E_{sb}$. Then, the adversary adds and removes some edges from $E'$, and we obtain the set of edges $E$.

Let us partition all edges in $E'$ and $E$ into two groups. To this end, independently color all edges of $E\cup E'$ in two colors $1$ and $2$ uniformly at random.
Let $E_1$ and $E_2$ be the subsets of edges in $E$ colored in 1 and 2, respectively; similarly, let $E_1'$ and $E_2'$ be the subsets of edges in $E'$  colored in 1 and 2.
Denote $E^{\Delta}_i = E_i \symdiff E_i'$ for $i\in\{1,2\}$.
Note that $(V, E_1)$ is an instance of SBM($n$, $k$, $a/2$, $b/2$) with $\varepsilon m$ outliers.

Given the graph $G=(V, E)$, we generate sets of edges $E_1$ and $E_2$ (it is important that to do so, we do not have to know $E'$). We first use edges in $E_1$
to find a partition that is $1/(10 k)$-close to the planted partition. To this end,
we run algorithm $\cal A$ on $(V, E_1)$ and obtain a partition $V_1,\dots,V_k$ of $V$.


Now we use edges from $E_2$ to find a partition that is $\delta$-close to the planted partition.
We do this in two steps. First, we define a partition $U_1^0,\dots,U_k^0$,
which is close to the planted partition but not necessarily balanced -- some sets $U_i$ may contain more then $n$ vertices.
Then we transform $U_1^0,\dots,U_k^0$ to a balanced partition $U_1,\dots,U_k$.

Let us start with defining the partition $U_1^0,\dots,U_k^0$.
For technical reasons (to ensure that certain events that we consider below are independent),
it will be convenient to us to partition each set $V_i$ into two sets $V_i^L$ and $V_i^R$ containing $n/2$ vertices each
(we assume that $n$ is even; otherwise we can take sets of sizes $(n-1)/2$ and $(n+1)/2$). Denote $n' = |V_i^L| = |V_i^R| = n/2$.
Let  $V^L = \bigcup_i V_i^L$ and $V^R = \bigcup_i V_i^R$.

For every vertex $u\in V^L$, we count the number of its neighbors w.r.t. edges in $E_2$ in each of the sets
$V_1^R,\dots, V_k^R$. We find the set $V_i^R$ that has most neighbors of $u$ and add $u$ to $U_i^0$ (we break ties arbitrarily).
Similarly, for every vertex $u\in V^R$, we count the number of its neighbors w.r.t. edges in $E_2$ in each of the sets
$V_1^L,\dots, V_k^L$,  find the set $V_i^L$ that has most neighbors of $u$, and add $u$ to $U_i^0$.
We obtain a partition $U_1^0, \dots, U_k^0$.

Now we make sure that all clusters have the same size. To this end, we redistribute vertices from clusters of size greater than $n$
 among other clusters so that each cluster has size $n$. Formally, we first let $U_i^1 = U_i^0$ if $|U_i^0| \leq n$,
and let $U_i^1$ be an arbitrary subset of $n$ vertices of $U_i^0$ if $|U_i^0| > n$.
Then we arbitrarily assign all remaining vertices (i.e., vertices from $\bigcup_i U_i^0\setminus U_i^1$) among all clusters so that each cluster contains exactly $n$ vertices.
We obtain a partition $U_1,\dots, U_k$.

Let us analyze this algorithm. We may assume without loss of generality that the matching between the partition $V_1,\dots, V_k$ and the planted
partition is given by the identity permutation. Then
$$\sum_{i=1}^k |V_i^* \cap V_i| \geq nk (1-1/(10 k)) = nk - n/10.$$
In particular, for every cluster $V_i$, we have
\begin{align*}
|V_i \cap V_i^*| &\geq  9n/10,\\
|V_i \cap V_j^*| &\leq  n/10 \qquad\text{for every } j\neq i.
\end{align*}
Also, for every set $V_i^R$ (and similarly for every set $V_i^L$), we have
\begin{align}
|V_i^R \cap V_i^*| &\geq  9n/10 - n/2 = 4n'/5 , \label{ineq:good}\\
|V_i^R \cap V_j^*| &\leq  n/10 = n'/5 \qquad\text{for every } j\neq i. \label{ineq:bad}
\end{align}
Let us say that a vertex $u$ is \textit{corrupted} if it is incident on at least $T = (a-b)/20$ edges in $E_2^{\Delta}$.
\begin{claim}\label{claim:corrupted}
The total number of corrupted edges is at most $2\varepsilon m / T$.
\end{claim}
\begin{proof}
Each edge in $ E_2^{\Delta}$ is incident to at most two corrupted vertices. The total number of edges in $ E_2^{\Delta}$ is at most $\varepsilon m$.
Therefore, the number of corrupted vertices is at most $2\varepsilon m / T$.
\end{proof}

 Consider a vertex $u\in V_i^*$.  Assume that it is not corrupted. We are going to show that $u \in U_i^0$ with probability at least $1 - k e^{- \frac{(a-b)^2}{100 a}}$.

 We assume without loss of generality that $u\in V^L$.
 Let random variable $Z_j$ be the number of neighbors of $u$ in $V_j^R$ w.r.t. edges in $E_2'$.
 Consider the event ${\cal E}_u$ that $Z_i \geq (3a + 2b)/20$ and $Z_j \leq (2a + 3b)/20$ for every $j\neq i$.
 We will prove now that if ${\cal E}_u$ happens then $u\in U_i^0$. After that we will show that the probability that ${\cal E}_u$ does not happen is exponentially small.

 Assume to the contrary that ${\cal E}_u$ happens but $u\in U_j^0$ for some $j\neq i$. Then $u$ has at least as many neighbors in $V_j^R$ as is in $V_i^R$.
 Let $A_+$ be the number of edges $e\in E_2\setminus E_2'$  from $u$ to vertices in $V_j$ (edges added by the adversary); and
  $A_-$ be the number of edges in $e\in E_2'\setminus E_2$ from $u$ to $V_i$ (edges removed by the adversary).
  Then $A_+ + A_- < T$ since $u$ is not corrupted.
  Observe that there are at most $Z_j + A_+$ edges $e\in E_2$ from $u$ to $V_j^R$; there are at least $Z_i - A_-$ edges $e\in E_2$ from $u$ to $V_i^R$.
  Therefore,
  $Z_j + A_+ \geq Z_i - A_-$, and hence (using that event ${\cal E}_u$ happens)
  $$T > A_+ + A_- \geq Z_i - Z_j \geq \frac{3a +2b}{20} - \frac{2a +3b}{20} = \frac{a - b}{20} =T,$$
  we get a contradiction.

  We use the Bernstein inequality to upper bound the probability that ${\cal E}_u$ does not happen.
  Note that for every $j$ (including $j=i$), $u$ is connected to every vertex in $V_j^R \cap V_i^*$ by an edge in $E_2'$ with
  probability $a/(2n)$; $u$ is connected to every vertex in $ V_j^R \setminus V_i^* $ by an edge in $E_2'$ with
  probability $b/(2n)$. Using bound~(\ref{ineq:good}), we get that the expected number
  of neighbors of $u$ in $V_i^R$  is at least $(4a + b)/10$. That is, $\E[Z_i] \geq (4a + b)/10$. By the Bernstein inequality,
  $$\Pr[Z_i < (3a + 2b)/10] \leq e^{- \frac{(a-b)^2/100}{2((4a + b)/10 + (a-b)/30)}}  \leq e^{- \frac{(a-b)^2}{100a}}.$$
  Similarly, using bound~(\ref{ineq:bad}), we get that for every $j\neq i$,  $\E[Z_j] \leq (a + 4b)/5$. By the Bernstein inequality,
  $$\Pr[Z_j > (2a + 3b)/10] \leq e^{- \frac{(a-b)^2/100}{2((a + 4b)/10 + (a-b)/30)}}  \leq e^{- \frac{(a-b)^2}{100a}}.$$
By the union bound, $\Pr[{{\cal E}_u}] \geq 1 - k e^{- \frac{(a-b)^2}{100a}}$.

 \medskip

 We proved that for every $u\in V_i^*$, $\Pr[u\in U^0_i] \geq 1- \delta_0$ (recall that $\delta_0  \geq k e^{- \frac{(a-b)^2}{100a}}$).
 Let $B_L$  be the number of vertices in $V_L$ such that ${\cal E}_u$ does not happen, and
 $B_R$  be the number of vertices in $V_R$ such that ${\cal E}_u$ does not happen.

 Note that $\E[B_L] \leq \delta_0 kn'$. Also all events ${\cal E}_u$ with $u\in V_L$ are independent since each event ${\cal E}_u$ depends only on the subset of edges
 of $E_2$  that  goes from $u$ to $V_R$. Therefore, by the Chernoff bound
 $$
 \Pr[B_L \geq 2\delta_0 kn']  < e^{-\delta_0 kn'/3} = e^{-\delta_0 kn/6}.
 $$
 Similarly, $\Pr[B_R \geq 2\delta_0 kn']  < e^{-\delta_0 kn/6}$, and $\Pr[B_L + B_R \geq 2\delta_0 kn] < 2 e^{-\delta_0 kn/6}$.

 Assume now that $\Pr[B_L + B_R < 2\delta_0 kn]$. Then
 $$\E\Bigl[\sum_{i=1}^k |V_i^* \cap U_i^0|\Bigr] \geq (1 -2\delta_0) kn - 40\varepsilon m/(a-b) = (1 - \delta/2) kn.$$
 here, $40\varepsilon m/(a-b)$ is the upper bound on the number of corrupted vertices from Claim~\ref{claim:corrupted}, and $\delta = 4\delta_0 + \frac{80\varepsilon m}{(a-b)kn}$ as in the statement of the theorem.
 Now,
$$\sum_{i=1}^k |V_i^* \cap U_i| \geq \sum_{i=1}^k |V_i^* \cap U_i^1|\geq
\sum_{i=1}^k |V_i^* \cap U_i^0| - \sum_{i:|U_i^0| > n} (|U_i^0| - n) \geq (1 - \delta/2) kn - (\delta/2) kn = (1 -\delta) kn.$$
We proved that $U_1,\dots, U_k$ is  $\delta$-close to the planted partition, when algorithm $\cal A$ succeeds and $B_L + B_R < 2\delta_0 kn$; that is,
with probability at least $1 - \tau - \exp(-\delta_0 kn/6)$.
\end{proof}

Now we present the proof of Theorem~\ref{thm:main-amplified}.
\begin{proof}[Proof of Theorem~\ref{thm:main-amplified}]
Observe that under our assumption that
$$\frac{ \sqrt{a+b(k-1)} }{a-b}+\frac{\epsilon\left(a+b(k-1)\right)}{a-b} \leq c/k,$$
our first algorithm finds a partition that is $1/(10k)$-close to the planted partition given an instance of $\SBM(n, k, a/2, b/2)$.
Hence, we can apply Theorem~\ref{thm:boosting} and get a partition that is $\delta$-close to the planted partition, as desired.
\end{proof}

\section{KL-divergence}\label{sec:KL}
\begin{proof}[Proof of Theorem~\ref{thm:KL}]
Theorem~\ref{thm:KL} almost immediately follows from Theorem~\ref{thm:main} and Lemma~\ref{lem:klbounds}.
\begin{lemma} \label{lem:klbounds}
Consider two distributions $P, Q$ over the same sample space $\Omega$. For every event $\calE\subset \Omega$, we have
\begin{equation}
 Q(\calE) \leq \max\Big(\frac{2\dkl{Q,P}}{-\log P(\calE) + 1}, e\sqrt{2 P(\calE)}\Big),
\end{equation}
where $d_{KL}(Q,P)$ is the Kullback--Leibler divergence of $P$ from $Q$.
\end{lemma}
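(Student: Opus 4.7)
The plan is to combine the Donsker--Varadhan (Gibbs) variational formula for KL divergence with a case analysis based on the magnitude of $\dkl{Q, P}$. The variational principle asserts that for any measurable $f$, $\Esymb_Q[f] \leq \dkl{Q, P} + \log \Esymb_P[e^f]$. Specializing to $f = \lambda \mathbf{1}_{\calE}$ for $\lambda > 0$, and writing $p = P(\calE)$, $q = Q(\calE)$, this gives $\lambda q \leq \dkl{Q, P} + \log(1 + (e^\lambda - 1) p)$. I would set $\lambda = -\log p + 1$, so that $e^{\lambda} = e/p$ and the right-hand side simplifies to $\dkl{Q, P} + \log(1 + e - p) \leq \dkl{Q, P} + 2$, using $\log(1 + e) < 2$. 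Rearranging yields the preliminary bound $q \leq (\dkl{Q, P} + 2)/(-\log p + 1)$.

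From this starting point a case split on $\dkl{Q, P}$ produces the two terms of the max. If $\dkl{Q, P} \geq 2$, the additive ``$+2$'' can be absorbed into the numerator, giving $q \leq 2 \dkl{Q, P}/(-\log P(\calE) + 1)$, which is exactly the first term. If $\dkl{Q, P} < 2$, I would fall back on Pinsker's inequality $|q - p| \leq \sqrt{\dkl{Q, P}/2}$, together with the elementary bound $p \leq \sqrt{2p}$ (valid for $p \leq 2$), to conclude $q \leq \sqrt{2 P(\calE)} + \sqrt{\dkl{Q, P}/2}$; in the sub-regime where the first term of the max is not already dominant, the surplus $\sqrt{\dkl{Q, P}/2}$ is at most $(e-1)\sqrt{2 P(\calE)}$, so the sum is bounded by $e\sqrt{2 P(\calE)}$.

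The main obstacle is the intermediate regime in which $\dkl{Q, P}$ is moderate: neither small enough for Pinsker alone to yield $e\sqrt{2 P(\calE)}$, nor large enough for the additive constant in the variational bound to be absorbed cleanly. I expect to close this gap with a likelihood-ratio partitioning argument. Writing $Q(\calE) = \Esymb_P[L \mathbf{1}_{\calE}]$ with $L = dQ/dP$ and partitioning based on whether $L$ exceeds the threshold $M = e\sqrt{2/P(\calE)}$, the small-$L$ part contributes at most $M P(\calE) = e\sqrt{2 P(\calE)}$, while the large-$L$ part $\Esymb_P[L \mathbf{1}_{L > M}]$ is controlled by $(\dkl{Q, P} + 1/e)/\log M$, where the $1/e$ comes from $L \log L \geq -1/e$. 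Since the chosen $M$ satisfies $\log M \geq (-\log P(\calE) + 1)/2$, the large-$L$ contribution is at most $2(\dkl{Q, P} + 1/e)/(-\log P(\calE) + 1)$, and rewriting the resulting sum as a maximum (after a final tightening of the threshold or the $L \log L$ bound) recovers the claimed form with the stated coefficients.
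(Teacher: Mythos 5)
Your overall architecture (a change-of-measure inequality plus a case analysis) is in the spirit of the paper's proof, but two of your three steps do not go through, and together they leave the regime $\dkl{Q,P}<2$ with $P(\calE)$ very small unproven. First, the Pinsker fallback cannot work even in principle: Pinsker only gives $Q(\calE)\le P(\calE)+\sqrt{\dkl{Q,P}/2}$, which is blind to the factor $-\log P(\calE)$. Take $P(\calE)=2^{-100}$ and $\dkl{Q,P}=0.01$: the claimed bound is $\max\bigl(0.02/101,\;e\sqrt{2^{-99}}\bigr)\approx 2\cdot10^{-4}$, whereas Pinsker only certifies $Q(\calE)\lesssim 0.07$. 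Your assertion that ``in the sub-regime where the first term of the max is not already dominant, the surplus $\sqrt{\dkl{Q,P}/2}$ is at most $(e-1)\sqrt{2P(\calE)}$'' has no justification --- nothing ties the size of $\dkl{Q,P}$ to $P(\calE)$. Second, the likelihood-ratio truncation (your intended repair) is sound as far as it goes, but it produces the \emph{sum} $e\sqrt{2P(\calE)}+2(\dkl{Q,P}+1/e)/(-\log P(\calE)+1)$, and a sum of two nonnegative quantities is not bounded by their maximum; the promised ``final tightening'' that converts this sum into the stated max with the stated constants is precisely the missing step. Only your first step (Donsker--Varadhan with $f=\lambda\mathbf{1}_{\calE}$ and $\lambda$ fixed at $-\log P(\calE)+1$) is complete, and it covers only the case $\dkl{Q,P}\ge 2$.

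The fix is to let the exponent depend on $Q(\calE)$ rather than fixing it in advance --- equivalently, to use the full binary data-processing inequality, which is what the paper does. Applying the log-sum inequality to the two-cell partition $\{\calE,\Omega\setminus\calE\}$ and bounding the complement's contribution from below by $-Q(\calE)\log e$ (via $x\log x\ge (x-1)\log e$) yields the single pointwise inequality $\dkl{Q,P}\ge Q(\calE)\log\tfrac{Q(\calE)}{e\,P(\calE)}$. The max then falls out of a clean dichotomy: either $Q(\calE)\le e\sqrt{2P(\calE)}$, or $Q(\calE)/(eP(\calE))\ge\sqrt{2/P(\calE)}$, in which case the displayed inequality gives $\dkl{Q,P}\ge Q(\calE)\cdot\tfrac12(-\log P(\calE)+1)$. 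Your truncation argument could be salvaged in the same spirit by choosing the threshold as $Q(\calE)/(eP(\calE))$ rather than as a fixed function of $P(\calE)$ alone, but the fixed-threshold version you describe cannot produce the required either/or statement.
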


Consider the worst adversary $A$ for
the algorithm from Theorem~\ref{thm:main} --- that is, the adversary for which the algorithm succeeds to recover a $(1-\delta)$
fraction of vertices with the smallest probability. The adversary takes the graph $G\sim {\cal G}$
and transforms it to $A(G)$. Without loss of generality we may assume that the adversary is deterministic.
Let $\calE$ be the set of graphs $G$ for which our algorithm fails to recover $\delta$ fraction of vertices on the corrupted graph $A(G)$. By Theorem~\ref{thm:main}, the probability of $\calE$ in the Stochastic Block Model distribution is at most
$2\exp( -\eta m)$.
Thus, by Lemma~\ref{lem:klbounds}, the probability of $\calE$ in the distribution of ${\cal G}$  is 
bounded as
$$\delta \leq \max\Big(\frac{2\lambda m}{\eta m}, 2 e^{\frac{-\eta m}{2}+1}\Big) = 
\max\Big(\frac{2\lambda}{\eta}, 2 e^{\frac{-\eta m}{2}+1}\Big).$$
\end{proof}

We now prove an auxiliary Lemma~\ref{lem:data-proc} and then Lemma~\ref{lem:klbounds}.
\begin{lemma}\label{lem:data-proc}
Consider two distributions $P, Q$ over the same sample space $\Omega$. Suppose that $\Omega$ is the union of disjoint
events $\calE_i$. Then
$$\dkl{Q,P} \geq \sum_{i} Q(\calE_i) \log \frac{Q(\calE_i)}{P(\calE_i)}.$$
\end{lemma}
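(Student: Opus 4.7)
The plan is to reduce Lemma~\ref{lem:data-proc} to the log-sum inequality applied separately on each block $\calE_i$. Recall the log-sum inequality: for nonnegative reals $a_\omega, b_\omega$ indexed by a finite set $S$ (with the conventions $0\log 0 = 0$, $0\log(0/0)=0$, and $c\log(c/0) = +\infty$ for $c>0$),
$$\sum_{\omega\in S} a_\omega \log\frac{a_\omega}{b_\omega} \;\geq\; \Bigl(\sum_{\omega\in S} a_\omega\Bigr)\log\frac{\sum_{\omega\in S} a_\omega}{\sum_{\omega\in S} b_\omega}.$$
This is a one-line consequence of Jensen applied to the convex function $t\mapsto t\log t$, using the probability weights $b_\omega/\sum_{\omega'} b_{\omega'}$.

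With that in hand, I would write
$$\dkl{Q,P} \;=\; \sum_{\omega\in\Omega} Q(\omega)\log\frac{Q(\omega)}{P(\omega)} \;=\; \sum_{i} \,\sum_{\omega\in\calE_i} Q(\omega)\log\frac{Q(\omega)}{P(\omega)},$$
where the second equality uses that $\{\calE_i\}$ is a partition of $\Omega$. Applying the log-sum inequality inside each block (with $a_\omega = Q(\omega)$ and $b_\omega = P(\omega)$ for $\omega\in\calE_i$) gives
$$\sum_{\omega\in\calE_i} Q(\omega)\log\frac{Q(\omega)}{P(\omega)} \;\geq\; Q(\calE_i)\log\frac{Q(\calE_i)}{P(\calE_i)},$$
and summing over $i$ yields the desired bound.

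There is no real obstacle here; the only thing to be careful about is degenerate cases. If $P(\calE_i)=0$ while $Q(\calE_i)>0$, then some $\omega\in\calE_i$ has $P(\omega)=0$ and $Q(\omega)>0$, so both sides are $+\infty$; if $Q(\omega)=0$ the corresponding term contributes $0$ by convention, consistent with the log-sum inequality as stated. Thus the proof is essentially one invocation of Jensen's inequality per block, and the lemma is really the data-processing inequality specialized to the coarse-graining map $\omega \mapsto i(\omega)$ that sends each point to the index of the block containing it.
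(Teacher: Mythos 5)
Your proof is correct and follows essentially the same route as the paper: decompose the KL divergence over the blocks $\calE_i$ and apply the log-sum inequality (via Jensen on $t\mapsto t\log t$) within each block. The only difference is your explicit handling of the degenerate zero-probability cases, which the paper leaves implicit.
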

\begin{proof}
By the definition, KL divergence equals
$$\dkl{Q,P} = \sum_{\sigma \in \Omega} Q(\sigma) \log\frac{Q(\sigma)}{P(\sigma)}  =
\sum_i \sum_{\sigma \in {\cal E}_i} Q(\sigma) \log\frac{Q(\sigma)}{P(\sigma)} .$$
We lower bound each of the terms on the right hand side using the log-sum inequality (which follows from
the convexity of the function $x\mapsto x\log x$ and Jensen's inequality).
\begin{claim}[Log-sum inequality see e.g.~\cite{CKBook}] Let $q_1,\dots, q_T$ and  $p_1,\dots, p_T$ be nonnegative numbers. Then,
$$\sum_i q_i \log\frac{q_i}{p_i} \geq \Big(\sum_i q_i \Big)
\log \Big(\frac{\sum_i q_i}{\sum_i p_i} \Big).$$
\end{claim}
We get
$$\dkl{Q,P}\geq \sum_i \Big(\sum_{\sigma \in \calE_i}Q(\sigma)\Big)
\log \frac{\sum_{\sigma\in \calE_i}Q(\sigma_i)}{\sum_{\sigma\in \calE_i}P(\sigma_i)} =
\sum_i Q(\calE_i)\log \frac{Q(\calE_i)}{P(\calE_i)}.
$$
\end{proof}

\begin{proof}[Proof of Lemma~\ref{lem:klbounds}]
We apply Lemma~\ref{lem:data-proc} to the events $\calE$ and $\bar \calE = \Omega\setminus \calE$:
\begin{equation}\label{eq:dkl-sum-i}
\dkl{Q,P} \geq
Q(\calE) \log \frac{Q(\calE)}{P(\calE)}  +
Q(\bar \calE) \log \frac{Q(\bar \calE)}{P(\bar \calE)} .
\end{equation}
We bound the second term on the right hand side using the inequality $x\log x \geq (x - 1)\log e$ for $x \geq 0$:
\begin{multline*}
Q(\bar \calE)
\log\frac{Q(\bar \calE)}{P(\bar \calE)}
=
P(\bar \calE)\times \Big[\frac{Q(\bar \calE)}{P(\bar \calE)}
\log\frac{Q(\bar \calE)}{P(\bar \calE)}\Big]\geq P(\bar \calE)\times
\Big[\frac{Q(\bar \calE)}{P(\bar \calE)} - 1\Big]\log e =\\=
(Q(\bar \calE) - P(\bar \calE)) \log e =
(P(\calE) - Q(\calE)) \log e \geq - Q(\calE) \log e.
\end{multline*}
We have
$$\dkl{Q,P}\geq Q(\calE) \log\frac{Q(\calE)}{P(\calE)}  - Q(\calE) \log e 
= Q(\calE) \log\frac{Q(\calE)}{e\cdot  P(\calE)}.$$
Thus, either
$Q(\calE)\leq e\sqrt{2P(\calE)}$, or $Q(\calE)/(eP(\calE))\geq \sqrt{2/P(\calE)}$, and, consequently,
$$Q(\calE) \le \frac{2\dkl{Q,P}}{-\log (P(\calE)) + 1}.$$
\end{proof}


\acks{We would like to thank Elchanan Mossel for some useful discussions. We would also like to thank the anonymous referees for COLT'16 for pointing out a correction in Lemma 21.}
\notCOLT{\bibliographystyle{plainnat}}

\bibliography{dblp,semirandom}

\appendix

\section{Concentration Inequalities}\label{sec:appendix}
\subsection{Bernstein Inequality}

We will use the following standard inequality known as the Bernstein inequality or the Hoeffding inequality (see e.g., Theorem 2.7. in \cite{mcdiarmid1998}).
\begin{fact}\label{fact:bernstein}
Let $X_1,\dots, X_n$ be independent random variables with $X_i - \E[X_i] \leq B$ for all $i$. Then
\begin{equation}
\Pr\big[\sum_i X_i  - \E \sum_i X_i> t  \big] \le \exp\left(- \frac{t^2/2}{\sigma^2+ B t/3} \right),
\end{equation}
where $\sigma^2 = \sum_i \Var[X_i]$. For Bernoulli random variables taking values 0 and 1, $\sigma^2 \leq \sum_i \E X_i$.
\end{fact}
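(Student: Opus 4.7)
The plan is to follow the classical Chernoff bounding method, which is the standard route to Bernstein-type inequalities; the result is quoted here essentially verbatim from McDiarmid's survey, so my sketch reproduces the textbook argument. Let $Y_i = X_i - \E X_i$, so that $\E Y_i = 0$ and $Y_i \leq B$. For any $\lambda > 0$, Markov's inequality applied to $e^{\lambda \sum_i Y_i}$ gives
\[
\Pr\Big[\sum_i Y_i > t\Big] \leq e^{-\lambda t}\, \E\big[e^{\lambda \sum_i Y_i}\big] = e^{-\lambda t}\prod_i \E\big[e^{\lambda Y_i}\big],
\]
where independence of the $Y_i$'s lets the expectation factor.

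Next I would bound each individual moment generating factor. The key ingredient is the standard fact that $\psi(x) := (e^x - 1 - x)/x^2$ is nondecreasing on all of $\R$, so $\lambda Y_i \leq \lambda B$ implies $\psi(\lambda Y_i) \leq \psi(\lambda B)$. Multiplying both sides by $Y_i^2 \geq 0$ (an inequality preserved because the factor is nonnegative regardless of the sign of $Y_i$) and rearranging yields
\[
e^{\lambda Y_i} \leq 1 + \lambda Y_i + \frac{Y_i^2}{B^2}\bigl(e^{\lambda B} - 1 - \lambda B\bigr).
\]
Taking expectations, using $\E Y_i = 0$ and $\Var Y_i = \Var X_i$, followed by the inequality $1 + u \leq e^u$ and independence, gives
\[
\prod_i \E\bigl[e^{\lambda Y_i}\bigr] \leq \exp\!\Big(\frac{\sigma^2}{B^2}\bigl(e^{\lambda B} - 1 - \lambda B\bigr)\Big).
\]

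Finally I would use the elementary bound $e^x - 1 - x \leq \dfrac{x^2/2}{1 - x/3}$ for $0 < x < 3$, obtained by comparing the Taylor series termwise using $\frac{x^k}{k!} \leq \frac{x^k}{2\cdot 3^{k-2}}$ for $k \geq 2$. This yields, for every $\lambda$ with $\lambda B < 3$,
\[
\Pr\Big[\sum_i Y_i > t\Big] \leq \exp\!\Big(-\lambda t + \frac{\sigma^2 \lambda^2/2}{1 - \lambda B/3}\Big).
\]
Optimizing by choosing $\lambda = t/(\sigma^2 + Bt/3)$, which automatically satisfies $\lambda B/3 < 1$, a direct algebraic simplification reduces the exponent to $-t^2/(2\sigma^2 + 2Bt/3)$, matching the stated bound. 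For the Bernoulli case, $X_i \in \{0,1\}$ gives $\Var X_i = \E X_i (1 - \E X_i) \leq \E X_i$, hence $\sigma^2 \leq \sum_i \E X_i$.

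The main obstacle, though a minor one, is justifying the single-variable MGF bound cleanly for centered variables that may take either sign: a naive expansion like $|Y_i|^k \leq B^{k-2} Y_i^2$ fails when $Y_i < 0$ and $k$ is odd. This is precisely why one invokes monotonicity of $\psi$ on all of $\R$, rather than the easier argument that works only for nonnegative summands. Everything else is routine calculus, and given that the result is quoted as a fact in the paper, a reference to McDiarmid (1998) in place of a proof is entirely adequate.
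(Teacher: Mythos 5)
Your proof is correct: it is the standard Chernoff--Bernstein argument (monotonicity of $(e^x-1-x)/x^2$ to handle the possibly negative centered summands, the bound $e^x-1-x\le \frac{x^2/2}{1-x/3}$, and the optimal choice $\lambda = t/(\sigma^2+Bt/3)$), and the final algebra and the Bernoulli variance bound both check out. The paper itself offers no proof --- it cites this as Theorem 2.7 of McDiarmid (1998) --- so your derivation simply reconstructs the reference's argument, and as you note, a citation would suffice here.
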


\subsection{Size of the Planted Cut}
In this section, we upper bound the size of the planted cut. For convenience, we give the same probability estimate as in Theorem~\ref{thm:fromGV}.
\begin{lemma}\label{lem:bernst-plant-cut}
For a random graph $G_{sb}$ from the Stochastic Block model $\SBM(n, k, a, b)$, we have
$$\Pr(\plant (G_{sb}) \leq b (k-1) N/2 + 2\sqrt{a+b(k-1)}Ns) \geq 1 - e^{-\frac{9 s^2}{4 + 8s/\sqrt{a+b(k-1)}}\,N}.$$
\end{lemma}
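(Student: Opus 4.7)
}

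The planted cut $\plant(G_{sb})$ is the number of edges of $G_{sb}$ whose endpoints lie in two different clusters. The plan is to write it as a sum of independent Bernoulli random variables and apply the Bernstein inequality stated in Fact~\ref{fact:bernstein}. For each unordered pair $(u,v)$ with $u \in V_i^*$, $v \in V_j^*$, $i \neq j$, let $X_{uv}$ be the indicator that $(u,v) \in E_{sb}$; by the definition of $\SBM(n,k,a,b)$ the variables $X_{uv}$ are independent Bernoulli$(b/n)$. A direct count shows there are $\tfrac{n^2 k(k-1)}{2}$ such pairs, so the expected planted cut size is $\tfrac{n^2 k(k-1)}{2} \cdot \tfrac{b}{n} = \tfrac{b(k-1)N}{2}$, and $\sigma^2 \le \E[\plant(G_{sb})] = \tfrac{b(k-1)N}{2}$.

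Applying Fact~\ref{fact:bernstein} with $B=1$ and $t = 2\sqrt{a+b(k-1)}\,Ns$, the probability that $\plant(G_{sb})$ exceeds its mean by more than $t$ is at most
\[
\exp\!\left(-\frac{t^2/2}{\sigma^2 + t/3}\right) \le \exp\!\left(-\frac{2(a+b(k-1))N^2 s^2}{\tfrac{b(k-1)N}{2} + \tfrac{2\sqrt{a+b(k-1)}\,Ns}{3}}\right).
\]
The remaining work is purely algebraic: divide numerator and denominator by $N$, use $b(k-1) \le a + b(k-1)$ to replace $\tfrac{b(k-1)}{2}$ with $\tfrac{a+b(k-1)}{2}$ in the denominator (an upper bound on the denominator, hence a lower bound on the exponent magnitude), and then factor out $\sqrt{a+b(k-1)}$. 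This gives an exponent bounded above by
\[
-\frac{12\sqrt{a+b(k-1)}\,Ns^2}{3\sqrt{a+b(k-1)} + 4s},
\]
which one checks is at most $-\frac{9s^2 N}{4 + 8s/\sqrt{a+b(k-1)}}$, as desired. That final inequality reduces (after cross-multiplying) to $21\sqrt{a+b(k-1)} + 60s \ge 0$, which is trivial.

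The main obstacle is nothing conceptual but just bookkeeping: keeping track of the precise constants through the Bernstein bound and the final algebraic manipulation so as to match exactly the exponent $\tfrac{9s^2 N}{4 + 8s/\sqrt{a+b(k-1)}}$ claimed in the lemma (which is engineered to agree with the probability bound $f(s)$ used elsewhere in Section~\ref{sec:distances}). No randomness from the adversary or from the SDP enters here, so independence of the $X_{uv}$ is immediate and Bernstein applies directly.
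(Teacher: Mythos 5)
Your proposal is correct and follows essentially the same route as the paper: write the planted cut as a sum of independent Bernoulli$(b/n)$ indicators, bound $\sigma^2$ by the mean $b(k-1)N/2$, apply the Bernstein inequality with $t=2\sqrt{a+b(k-1)}\,Ns$, and simplify using $b(k-1)\le a+b(k-1)$. The constants and the final reduction to $21\sqrt{a+b(k-1)}+60s\ge 0$ all check out against the paper's computation.
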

\begin{proof} The expected size of the planted cut is $b (k-1) N/2$. Thus, by the Bernstein inequality, we have
$$\Pr(\plant (G_{sb}) \leq b (k-1) N/2 + t) \leq e^{-\frac{t^2}{b (k-1) N + 2t/3}}.$$
For $t = 2\sqrt{a+b(k-1)}Ns$, we get
$$\Pr(\plant (G_{sb}) < b (k-1) N + t) \leq e^{-\frac{s^2N}{1/4  + s/(3\sqrt{a+b(k-1)})}} <
e^{-\frac{9 s^2}{4 + 8s/\sqrt{a+b(k-1)}}\,N}.$$
\end{proof}

\subsection{Proof of Theorem~\ref{thm:fromGV}}\label{app:thm:fromGV}
In this section, we prove Theorem~\ref{thm:fromGV}, which is an analog of Lemma 4.1 in~\cite{GV14}.
The proof closely follows their proof. In the proof, we will use the Grothendieck inequality~(see~\cite{Gro,Krivine,BMMN}).

\begin{theorem}[Grothendieck inequality] For every $n\times n$ matrix $M$, the following inequality holds
\begin{equation}\label{eq:grothendieck}
\max_{\|U_i\|,\|V_j\|=1} \Big| \sum_{i,j=1}^{n} M_{ij}  \langle U_i, V_j \rangle\Big| \le \Kg \cdot \max_{x,y \in \{-1,1\}^n} \sum_{i,j=1}^n M_{ij} x_i y_j,
\end{equation}
where $K_G\leq 1.783$ is the Grothendieck constant. The first maximum is over all unit vectors $U_1,\dots, U_n$ and $V_1,\dots, V_n$.
\end{theorem}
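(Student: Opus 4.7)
The plan is to prove the Grothendieck inequality by Krivine's Gaussian rounding argument, which simultaneously gives the constant $K_G \le \pi/(2\ln(1+\sqrt{2})) < 1.783$. The strategy is to start from optimal unit vectors $\{U_i\}, \{V_j\}$ achieving the left-hand side, apply a suitable nonlinear transformation to replace them with new unit vectors $\{U_i'\}, \{V_j'\}$, and then round via the sign of a Gaussian projection to produce $\pm 1$ values $x_i, y_j$ whose correlation, in expectation, is proportional to the original inner product $\langle U_i, V_j\rangle$. This immediately converts the vector-valued objective into the $\{\pm 1\}$ objective on the right-hand side.

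First I would recall Grothendieck's identity: if $g$ is a standard Gaussian in the ambient Hilbert space and $u,v$ are unit vectors, then $\E[\sign\langle g,u\rangle \sign\langle g,v\rangle]=\tfrac{2}{\pi}\arcsin\langle u,v\rangle$. So if we define $x_i=\sign\langle g, U_i'\rangle$ and $y_j=\sign\langle g, V_j'\rangle$, then
\begin{equation*}
\E\Bigl[\sum_{i,j} M_{ij}\, x_i y_j\Bigr] \;=\; \frac{2}{\pi}\sum_{i,j} M_{ij}\arcsin\langle U_i',V_j'\rangle.
\end{equation*}
The goal is to choose the primed vectors so that $\tfrac{2}{\pi}\arcsin\langle U_i',V_j'\rangle = c\,\langle U_i,V_j\rangle$ for a constant $c$ as large as possible; equivalently, $\langle U_i',V_j'\rangle = \sin(\tfrac{\pi c}{2}\langle U_i,V_j\rangle)$. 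Since the maximum over $\{\pm 1\}^n\times\{\pm 1\}^n$ is at least the expectation, this yields $\sum M_{ij}\langle U_i,V_j\rangle \le c^{-1}\max_{x,y} \sum M_{ij}x_iy_j$, giving $K_G \le c^{-1}$.

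The second step is Krivine's construction, which is the only real work. Let $\beta=\tfrac{\pi c}{2}$. Expand $\sin$ as a power series, split it into odd terms $\sin(\beta t)=\sum_{k\ge 0} a_k(\beta)\,t^{2k+1}$, and notice that the coefficients $a_k(\beta)$ alternate in sign; cleverly, one rewrites $\sinh(\beta t)=\sum_{k\ge 0}|a_k(\beta)|t^{2k+1}$. One then defines
\begin{equation*}
U_i' \;=\; \bigoplus_{k\ge 0} \sqrt{|a_k(\beta)|}\; U_i^{\otimes(2k+1)}, \qquad
V_j' \;=\; \bigoplus_{k\ge 0} \sqrt{|a_k(\beta)|}\,(-1)^k\,V_j^{\otimes(2k+1)},
\end{equation*}
so that $\langle U_i',V_j'\rangle = \sum_k a_k(\beta)\langle U_i,V_j\rangle^{2k+1} = \sin(\beta \langle U_i,V_j\rangle)$ by construction, while $\|U_i'\|^2=\|V_j'\|^2=\sinh(\beta)$ because $\|U_i\|=\|V_j\|=1$. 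To make these unit vectors we must have $\sinh(\beta)=1$, i.e.\ $\beta=\ln(1+\sqrt{2})$, which yields $c = 2\ln(1+\sqrt{2})/\pi$ and hence $K_G \le \pi/(2\ln(1+\sqrt{2}))<1.783$. A rescaling argument handles the absolute-value form of the inequality: by replacing $M$ with $\pm M$ and $U_i,V_j$ with $\pm U_i,\pm V_j$, it suffices to bound the signed quantity inside the maximum.

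The main obstacle is the Krivine step, specifically verifying that the two series identities, the sign choice $(-1)^k$ in the definition of $V_j'$, and the normalization $\sinh(\beta)=1$ all match up to produce genuine unit vectors with the desired inner product. Once that algebraic identity is in hand, the rest is either Grothendieck's (elementary) Gaussian identity or a passage from expectation to maximum over sign vectors, both of which are routine. I would therefore devote most of the write-up to a careful verification of Krivine's construction and then close by recording the derandomized conclusion that some deterministic $x,y\in\{\pm 1\}^n$ achieve at least the expected value $c\sum_{i,j}M_{ij}\langle U_i,V_j\rangle$.
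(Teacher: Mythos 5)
The paper never proves this statement: the Grothendieck inequality is invoked as a black box, with citations to Grothendieck, Krivine, and Braverman--Makarychev--Makarychev--Naor, and is used only inside the proof of Theorem~\ref{thm:fromGV}. Your proposal reconstructs Krivine's argument, which is precisely the source of the bound $K_G \le \pi/(2\ln(1+\sqrt{2})) < 1.783$ quoted in the statement, and the sketch is correct: Grothendieck's identity, the tensor-power/power-series preprocessing giving $\langle U_i',V_j'\rangle = \sin(\beta\langle U_i,V_j\rangle)$ with $\|U_i'\|^2=\|V_j'\|^2=\sinh\beta$, the normalization $\sinh\beta=1$, i.e.\ $\beta=\ln(1+\sqrt{2})$, and Gaussian sign rounding followed by expectation-to-maximum all fit together as you describe. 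Two details you should make explicit in a full write-up: (i) the identification $\tfrac{2}{\pi}\arcsin\langle U_i',V_j'\rangle = \tfrac{2\beta}{\pi}\langle U_i,V_j\rangle$ requires $\arcsin(\sin(\beta t))=\beta t$ for all $t\in[-1,1]$, which holds only because $\beta=\ln(1+\sqrt{2})<\pi/2$; without this remark the key step is unjustified. (ii) The vectors $U_i',V_j'$ live a priori in an infinite-dimensional direct sum of tensor powers, so you should either restrict to the finite-dimensional span of the $2n$ vectors before drawing the Gaussian vector $g$, or note that Grothendieck's identity depends only on the pairwise inner products. With those two points recorded, the reduction of the absolute-value form by flipping signs of $M$ or of the vectors, and the passage from the expectation to some fixed $x,y\in\{\pm 1\}^n$, are routine, as you say; the argument is exactly the one underlying the constant the paper cites.
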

\begin{proof}[Proof of Theorem~\ref{thm:fromGV}]
Let $L$ be the Laplacian of the graph $G_{sb}$ and $\Delta L = L - \E[L]$. For any feasible
SDP solution $\{\tilde u\}$, we have
$$\sum_{u<v} \Delta a_{uv} \| \tilde u -  \tilde v\|^2 =
\sum_{u,v} \Delta L_{uv} \langle \tilde u, \tilde v\rangle.$$
We upper bound the right hand side using the Grothendieck inequality (with $U_i=V_i = \tilde u$, where $u$ is 
the $i$-th vertex in the graph):
\begin{eqnarray}
\max_{\{\tilde u\}}\Big| \sum_{u,v} \Delta L_{uv} \langle \tilde u, \tilde v\rangle\Big|
&\leq& K_G \max_{x,y\in \{-1,1\}^n}\sum_{u,v} \Delta L_{uv} \, x_u y_v \label{eq:groth}\\
&=& K_G \max_{x,y\in \{-1,1\}^n} \sum_{u<v} \Delta a_{uv}  (x_u - x_v)(y_u - y_v)\nonumber.
\end{eqnarray}
Note that each $\Delta a_{uv}$ is a Bernoulli random variable taking values $-\E[a_{uv}]$ and $1-\E[a_{uv}]$ with
probabilities $1-\E[a_{uv}]$ and $\E[a_{uv}]$, respectively. All values $(x_u - x_v)(y_u - y_v)$ lie in the set
$\{-4,0,4\}$. By the Bernstein inequality, for fixed $x,y\in\{-1,1\}^n$,
we have
$$\Pr\Big(\sum_{u<v} \Delta a_{uv}  (x_u - x_v)(y_u - y_v)\geq t\Big) \leq e^{-\frac{t^2}{2\sigma^2 (x,y) + 8t/3}},$$
where
$$\sigma^2(x,y) = \sum_{u<v}\Var\big[\Delta a_{uv}  (x_u - x_v)(y_u - y_v)\big] = \sum_{u<v} \Var[\Delta a_{uv}](x_u - x_v)^2(y_u - y_v)^2.$$
Since the set of edges $E_{sb}$ comes
from the stochastic block model, we have $\E a_{uv}=a/n$ if $(u, v) \in (V \times V)_{in}$,
and $\E a_{uv} = b/n$ if $(u,v) \in (V \times V)_{out}$. Note that $\Var[\Delta a_{uv}]= \E[a_{uv}]\,(1-\E[a_{uv}]) < \E[a_{uv}]$.
Thus,
\begin{eqnarray*}
\sigma^2 (x,y)
&\leq& \frac{a}{n} \sum_{\substack{(u,v)\in (V \times V)_{in}\\u<v}} (x_u - x_v)^2(y_u - y_v)^2 +
\frac{b}{n} \sum_{\substack{(u,v)\in (V \times V)_{out}\\u<v}} (x_u - x_v)^2(y_u - y_v)^2\\
&\leq& \frac{4a}{n} \sum_{\substack{(u,v)\in (V \times V)_{in}\\u<v}} (x_u - x_v)^2 +
\frac{4b}{n} \sum_{\substack{(u,v)\in (V \times V)_{out}\\u<v}} (x_u - x_v)^2\\
&=& \frac{4(a-b)}{n} \sum_{\substack{(u,v)\in (V \times V)_{in}\\u<v}} (x_u - x_v)^2 +
\frac{4b}{n} \sum_{\substack{(u,v)\in V \times V\\u<v}} (x_u - x_v)^2.
\end{eqnarray*}
For any set $S\subset V$,
$$\sum_{\substack{(u,v)\in S \times S\\u<v}} (x_u - x_v)^2 = 4\, |\{u\in S: x_u =-1\}|\cdot
|\{v\in S: x_v =1\}|\leq |S|^2.$$
Hence,
$$\sigma^2 (x,y) \leq \frac{4(a-b)}{n}\times k \times n^2 + \frac{4b}{n}\times (nk)^2 =
4aN + 4b(k-1)N.$$
Consequently,
$$\Pr\Big(\sum_{u<v} \Delta a_{uv}  (x_u - x_v)(y_u - y_v)\geq t\Big)\leq e^{-\frac{t^2}{8(a+b(k-1))N + 8t/3}}.$$
Using the union bound over all $x,y\in \{-1,1\}^V$, we get
$$\Pr\Big(\max_{x,y\in\{-1,1\}^n}\sum_{u<v} \Delta a_{uv}  (x_u - x_v)(y_u - y_v)\geq t \Big)\leq 2^{2N}e^{-\frac{t^2}{8(a+b(k-1))N + 8t/3}}.$$
By~(\ref{eq:groth}),
$$\Pr\Big(\max_{\{\tilde u\}}\Big|\sum_{u<v} \Delta a_{uv}  \|\tilde u - \tilde v\|^2 \Big|\geq K_G \, t \Big)\leq 2^{2N}e^{-\frac{t^2}{8(a+b(k-1))N + 8t/3}}=
e^{-\frac{t^2}{8(a+b(k-1))N + 8t/3}+2N\ln2}.$$
Let $t= 6\sqrt{a+{b(k-1)}}Ns$. Then,
\begin{eqnarray*}
\frac{t^2}{8(a+b(k-1))N + 8t/3}-2N\ln2 &=& \Big(\frac{9 s^2}{2 + 4s/\sqrt{a+b(k-1)}}-2\ln2\Big)N\\
&\geq& \frac{9 s^2}{4 + 8s/\sqrt{a+b(k-1)}}\,N.
\end{eqnarray*}
The last inequality holds for $s\geq 1$ and $a+b(k-1)\geq 11$. Therefore,
$$\Pr\Big(\max_{\tilde u}
\Big|\sum_{u<v} \Delta a_{uv}  \|\tilde u - \tilde v\|^2\Big| \geq 6K_G \sqrt{a+b(k-1)} N s\Big)\leq
e^{-\frac{9 s^2}{4 + 8s/\sqrt{a+b(k-1)}}\,N}.$$
\end{proof}

\section{Lower Bounds}\label{sec:lb}

In this section we give lower bounds on the partial recovery in the model with two communities. We show that it is not possible to recover a $\delta$ fraction of all vertices in the pure Stochastic Block Model if
\begin{equation}\label{eq:Bound1}
(a-b) < C \sqrt{(a+b) \ln 1/\delta},
\end{equation}
for some constant $C$, and  it is not possible to recover a $\delta$ fraction
of all vertices in the Stochastic Block Model with Outliers (where the adversary is allowed to add at most $\varepsilon (a+b)n$ edges) if
\begin{equation}\label{eq:Bound2}
(a-b) < C\varepsilon \delta^{-1}(a+b).
\end{equation}

We note that very recently \cite{ZZ} showed a lower bound with a dependence similar to ~\eqref{eq:Bound1}.
For simplicity of exposition we slightly alter the Stochastic Block Model. We consider graphs with parallel edges.  The
number of edges between two vertices $u$ and $v$ in the new model is not a Bernoulli random variable with parameter $a/n$ or $b/n$ as in the standard
Stochastic Block Model, but a
Poisson random variable with parameter $a/n$ or $b/n$. Note that recovering partitions in the Poisson model with very slightly
modified parameters $a'=n \ln (1-a/n)$ and $b' = n \ln (1-b/n)$, is not harder
than in the Bernoulli model, since the algorithm may simply replace parallel edges with single edges
and obtain a graph from the standard Stochastic Block Model.

Before proceeding to the formal proofs, we  informally discuss why these bounds hold. Consider two vertices $u$ and $v$ lying in the opposite
clusters. Suppose we give the algorithm not only the graph $G$, but also the correct clustering of all vertices but $u$ and $v$. The algorithm needs now to decide
where to put $u$ and $v$. It turns out that the only \textit{useful} information the algorithm has about $u$ and $v$ are the four numbers -- the number of neighbors $u$ and $v$
have in the left and right clusters. These numbers are distributed according to the Poisson distribution with parameters $a$ and $b$. So the algorithm
is really given four numbers: two numbers $X_1, Y_1$ for vertex $u$ and two numbers $Y_2, X_2$ for vertex $v$. The algorithm needs to
decide whether
\begin{enumerate}
\item[(a)]
$X_1$ and $X_2$ have the Poisson distribution with parameter $a$, and
$Y_1$ and $Y_2$ have the Poisson distribution with parameter $b$; or
\item[(b)]
$X_1$ and $X_2$ have the Poisson distribution with parameter $b$, and
$Y_1$ and $Y_2$ have the Poisson distribution with parameter $a$.
\end{enumerate}
We show in Corollary~\ref{lem:pois-couple} that no test distinguishes (a) from (b) with error probability less than $\delta$ given
by the bound~(\ref{eq:Bound1}). This implies~(\ref{eq:Bound1}).

To prove the bound~(\ref{eq:Bound2}), we first specify what the adversary does in the model with outlier edges (noise). It
picks $\delta n$ fraction of all vertices on the left side and on the right side. For each chosen vertex, it adds approximately $(a-b)$
extra edges going to the opposite side. After that every chosen vertex has the same distribution of edges going to the opposite cluster as to its own cluster.
Hence, the chosen vertices on the left side and chosen vertices on the right side are statistically indistinguishable.
To add $(a-b)$ extra edges to every chosen vertex, the adversary needs $2(a-b)\delta n$ edges, but
he has a budget of $\Theta(\varepsilon(a+b)n)$ edges. This gives the bound~(\ref{eq:Bound2}).

In the rest of the section, we use the ideas outlined above to prove the following theorem. In the proof, we couple
the distribution of the random variables $(X_1,Y_1), (Y_2,X_2)$ with the distribution of graphs in the Stochastic Block Model.

\begin{theorem}\label{thm:lb}
It is statistically impossible to recover more than $\delta$ fraction of all vertices if the bound \ref{eq:Bound1} holds in the Stochastic Block Model,
and if the bound \ref{eq:Bound2} holds in the Stochastic Block Model with Outliers,  where the adversary can add at most $O(\varepsilon (a+b)n)$ edges.
The constant $C$ is a universal constant.
\end{theorem}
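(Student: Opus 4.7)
The plan is to reduce both bounds to a \emph{two-vertex} statistical problem. Fix a single pair $(u,v)$ with $u\in V_1^*$, $v\in V_2^*$ and condition on the correct labels of all other vertices being revealed to the algorithm. Any procedure that recovers more than $(1-\delta)kn$ vertices with high probability must, on a random such pair, correctly decide between the two swapped labelings with error less than roughly $\delta$ --- this is just an averaging argument over uniformly random cross-cluster pairs, using that the expected number of mislabeled vertices is at most $\delta kn$. By sufficiency, the only statistics available for that decision are the four edge counts $X_1, Y_1, X_2, Y_2$ from $u$ and $v$ to each of the two clusters. The Poisson version of the SBM adopted just before the theorem makes these exactly independent Poissons whose parameters are $a$ for intra-cluster edges and $b$ for inter-cluster edges, and the two hypotheses differ only by swapping the roles of $a$ and $b$ in the marginals.

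For Bound~\eqref{eq:Bound1}, the task is therefore to distinguish the joint law $(X_1, Y_1, X_2, Y_2)\sim \mathrm{Pois}(a)\times\mathrm{Pois}(b)\times\mathrm{Pois}(b)\times\mathrm{Pois}(a)$ from the same law with $a$ and $b$ swapped. I would estimate the KL divergence between these two joints: since $\mathrm{KL}(\mathrm{Pois}(a)\|\mathrm{Pois}(b)) = a\log(a/b)-(a-b) = O((a-b)^2/(a+b))$, the total joint KL is also $O((a-b)^2/(a+b))$. The Bretagnolle--Huber inequality then yields a coupling of the two joint laws that agrees with probability at least $\tfrac{1}{2}\exp(-C(a-b)^2/(a+b))$, which is exactly the content of Corollary~\ref{lem:pois-couple}. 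For the testing error to drop below $\delta$ one needs $(a-b)^2 \ge C(a+b)\log(1/\delta)$, which is the contrapositive of Bound~\eqref{eq:Bound1}.

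For Bound~\eqref{eq:Bound2}, I would construct the adversary explicitly. Pick arbitrary subsets $S_L\subseteq V_1^*$ and $S_R\subseteq V_2^*$ each of size $\delta n$. For every $u\in S_L$ the adversary adds a fresh, independent $\mathrm{Pois}(a-b)$-distributed number of edges from $u$ to uniformly random vertices of $V_2^*$, so that after corruption $u$'s edge count to the right cluster is $\mathrm{Pois}(a)$ rather than $\mathrm{Pois}(b)$; it performs the symmetric operation for $u\in S_R$. After this corruption each vertex in $S_L$ has the \emph{same} joint edge-count distribution as each vertex in $S_R$, so any algorithm must mislabel at least $|S_L\cup S_R|/2 = \delta n$ of them in expectation. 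The expected number of edges added is $2(a-b)\delta n$, which stays within the adversary's budget of $\Theta(\varepsilon(a+b)n)$ exactly when $(a-b)\delta \le C\varepsilon(a+b)$, giving Bound~\eqref{eq:Bound2}.

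The main obstacles are two. First, the reduction from global partial recovery to the single-pair testing problem is not totally automatic: partial recovery is a global property, and one has to argue (via a union/averaging argument over the $\Theta(n^2)$ cross-cluster pairs and the conditioning on the other $kn-2$ true labels) that a positive-measure fraction of pairs has conditional testing error below $\delta$ --- otherwise one could not contradict the Poisson testing lower bound. Second, the adversarial construction for Bound~\eqref{eq:Bound2} is delicate in one respect: to make corrupted vertices of $S_L$ truly indistinguishable from corrupted vertices of $S_R$, the number of added edges must itself be random with a $\mathrm{Pois}(a-b)$ distribution, not a fixed value $a-b$; this is precisely the reason the whole argument is carried out in the Poisson version of the SBM, with the translation back to the Bernoulli model handled by the parameter substitution $a' = n\ln(1-a/n)$, $b' = n\ln(1-b/n)$ noted before the theorem.
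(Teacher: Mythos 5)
Your proposal follows the same architecture as the paper's proof: Poissonize the model, reduce partial recovery to a local hypothesis test on a pair (or a pair of blocks) whose only relevant statistics are Poisson edge counts, and then lower-bound the testing error — via a Poisson coupling for Bound~\eqref{eq:Bound1} and via an explicit edge-adding adversary for Bound~\eqref{eq:Bound2}. This is exactly the paper's route (its Lemmas~\ref{lem:restr-pure}--\ref{lem:coupling-to-bound} formalize the averaging and sufficiency steps you flag as obstacles, the latter via Poisson thinning). However, there are two concrete gaps. First, your claim that $\mathrm{KL}(\mathrm{Pois}(a)\Vert\mathrm{Pois}(b)) = a\log(a/b)-(a-b) = O((a-b)^2/(a+b))$ is false when $a/b$ is large: for $a\gg b$ the KL is $\approx a\log(a/b)$, which exceeds $(a-b)^2/(a+b)\approx a$ by an unbounded factor, so Bretagnolle--Huber only yields the required error bound $e^{-O((a-b)^2/(a+b))}$ with a \emph{universal} constant in the regime $a=O(b)$. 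The paper's Corollary~\ref{lem:pois-couple} closes this by a case analysis: when $\lambda_2\ge 2\lambda_1$ it couples the two Poissons on the event $\{P_1=P_2=0\}$, which has probability $e^{-\lambda_2}\ge e^{-O((\lambda_1-\lambda_2)^2/(\lambda_1+\lambda_2))}$ in that regime. Your argument needs the analogous case split.

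Second, for Bound~\eqref{eq:Bound2} your adversary adds a fresh $\mathrm{Pois}(a-b)$ number of edges per corrupted vertex. This makes the corrupted marginal exactly $\mathrm{Pois}(a)$, but a Poisson variable is unbounded, so the adversary exceeds its \emph{hard} budget of $O(\varepsilon(a+b)n)$ edges with positive probability; you notice that a deterministic number of added edges would not work, but your fix reintroduces the budget problem. The paper resolves this tension with the truncated random function $\hat{\kappa}$ of Corollary~\ref{cor:kappa}: the number of added edges is bounded by $2(\lambda_2-\lambda_1)$ almost surely, at the price that the corrupted count only agrees with the target Poisson with probability $1/2$ — a constant loss that Lemma~\ref{lem:coupling-to-bound} then absorbs into the testing lower bound (agreement probability $\eta$ yields testing error $\ge\eta^4/2$). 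Relatedly, your per-vertex corruption requires checking that the whole joint law of edges incident to $S_L\cup S_R$ (not just each vertex's two marginal counts) is invariant under swapping the two corrupted blocks; the paper sidesteps this by corrupting the \emph{aggregate} count between $L'$ and $R''$ and regenerating individual edges by thinning inside the reduction of Lemma~\ref{lem:rest-to-test}. Both gaps are fixable, but as written the quantitative claims do not go through for all parameter ranges.
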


\subsection{Adversary in the SBM with Outliers}\label{sec:sr-adv}
We first describe the adversary for generating graphs in the SBM with Outlier edges. The adversary fixes two sets
$L'\subset L$ and $R'\subset R$ in the left and right clusters of size $\rho n$ each for $\rho = \Theta (\varepsilon (a+b)/(a-b))$.
Let $L'' = L\setminus L'$ and $R''=R\setminus R'$. Then it generates a graph according to the pure Stochastic Block Model.
The adversary counts the number of edges going from $L'$ to $R''$, and the number of edges
going from $R'$ to $L''$. Denote these numbers by $Z_{L'}$ and $Z_{R'}$ respectively.
Then, the adversary independently computes two numbers ${\kappa}_{L'} = \hat{\kappa}(Z_{L'})$ and
${\kappa}_{R'} = \hat{\kappa}(Z_{R'})$ using a random function  $\hat{\kappa}$ we describe in a moment.
He adds $\kappa_{L'}$ edges between $L'$ and $R''$
and  $\kappa_{R'}$ edges between $R'$ and $L''$. He adds the edges one by one
every time adding one edge between a random vertex in $L'$ and a random vertex in $R''$ or between a random vertex in $R'$ and a random vertex in $L''$.

Denote $M=\rho (1-\rho)n$. In Corollary~\ref{cor:kappa}, we show that there exists a function $\hat{\kappa}$ upper bounded by
$(a-b) M$
such that the total variation distance between $P_{1}$ and $\hat{\kappa}(P_2)$ is at most $1/2$, where $P_{1}$ and $P_{2}$
are Poisson random variables with parameters $aM$ and $bM$.
The adversary uses this function $\hat{\kappa}$. Note that he adds at most
$$ 4(a-b)M = 4(a-b)\rho (1-\rho)n\leq 4(a-b) \rho n = \Theta (\varepsilon (a+b) n),$$
edges.

\subsection{Restricted Partitioning}
Let us partition the sets $L$ and $R$ into two sets each: $L=L'\cup L''$ and $R=R'\cup R''$. We partition the sets
before we generate the graph from the Stochastic Block Model, and thus the partitioning does not depend on the edges
present in the graph. Consider the following classification task:
the classifier gets the graph $G$ generated according to the Stochastic Graph Model (with or without the adversary) and the sets
$L'$, $R'$, $L''$ and $R''$ (which were chosen before the graph was generated). We specify that $L'' \subset L$ and $R''\subset R$.
However, we swap the order of $L'$ and $R'$ with probability $1/2$. Thus the classifier does not know whether $L'\subset L$ or $L'\subset R$ and
whether $R'\subset L$ or $R'\subset R$.
Its goal is to guess whether $L'\subset L$ or $L'\subset R$ and, consequently, whether
$R'\subset L$ or $R'\subset R$. We call this classifier a restricted classifier.

\begin{lemma}[Restricted Classifier for pure Stochastic Block Model]\label{lem:restr-pure}
If there exists a procedure that recovers partitions in the pure Stochastic Block Model with accuracy at least $1- \delta$,
then there exists a restricted classifier (as above) for sets $L'=\{u\}$,  $R'=\{v\}$, $L''=L\setminus \{u\}$ and $R''=\setminus \{v\}$
that errs with probability at most $2\delta + 1/n$.
\end{lemma}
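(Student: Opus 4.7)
The plan is to reduce the restricted classification task to partition recovery in the most direct way: run the assumed $(1-\delta)$-accurate recovery algorithm on $G$, and then use the labeled set $L''\subset L$ to ``orient'' the output so that we can decide which of the two singletons (the ones forming $L'\cup R'$) lies in $L$.

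Concretely, I would proceed in three steps. First, run the recovery algorithm on $G$ to obtain a balanced partition $(A,B)$ of $V$ with $|A|=|B|=n$. Second, identify the $L$-like side $S$ by a majority vote on $L''$: set $S=A$ if $|L''\cap A|\geq |L''\cap B|$, and $S=B$ otherwise. Third, output ``$L'\subset L$'' iff the unique vertex of $L'$ lies in $S$, and ``$L'\subset R$'' otherwise.

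For the analysis, let $T\in\{A,B\}$ be the side matched with $L$ under the best permutation in the definition of $(1-\delta)$-closeness. Since both the ground-truth partition and the algorithm's output are balanced with parts of size $n$, the two directions of misclassification are symmetric: $|L\cap T^c|=|R\cap T|$, and the recovery budget $|L\cap T^c|+|R\cap T|\leq 2\delta n$ forces $|L\cap T^c|\leq \delta n$ and $|R\cap T|\leq \delta n$. The vertex-exchangeability of the SBM within each cluster then gives that a fixed vertex $w\in L$ satisfies $\Pr[w\in T^c]\leq \delta$, and similarly for $w\in R$. Averaging over the random swap of $L'$ and $R'$, the probability that the vertex labelled $L'$ is placed by the algorithm on the side opposite to its true cluster is therefore at most $\delta$. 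Separately, the $L''$-majority rule fails to identify $S=T$ only when $|L''\cap T^c|\geq |L''\cap T|$, which, using $|L''\cap T^c| = |L\cap T^c|-\mathbf{1}[u\in T^c]$ and the analogous identity for $T$, forces $|L\cap T^c|$ to be within one of $n/2$; this is incompatible with $|L\cap T^c|\leq \delta n$ as long as $\delta<1/2-O(1/n)$, and the bound $2\delta+1/n$ is vacuous otherwise. A union bound over these two error sources, together with a cosmetic factor-of-two slack that absorbs the random labelling swap, delivers the claimed error bound of $2\delta+1/n$.

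The main obstacle will be cleanly handling the orientation step: since $L''$ differs from $L$ by exactly one vertex, the $L''$-majority can disagree with the majority of $L$ itself only in a boundary event of probability $O(1/n)$, and I need to verify that this event does not correlate pathologically with the misclassification event for the vertex of $L'$. Once these two error sources are properly separated, the remaining bookkeeping is routine.
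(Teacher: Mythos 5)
Your classifier is essentially the paper's (run the recovery procedure, use $L''$ to orient the unlabeled output partition, then classify the $L'$-vertex by which side it lands on); the only structural difference is that you orient by a majority vote over all of $L''$, whereas the paper compares the $L'$-vertex to a \emph{single} uniformly random reference vertex $w''\in L''$. However, your analysis has a genuine gap in how it uses the accuracy hypothesis. The guarantee ``accuracy at least $1-\delta$'' is an \emph{expected} accuracy: if $\delta^*$ denotes the realized misclassification fraction on a given run, the hypothesis gives only $\E[\delta^*]\le\delta$ (this is exactly how the paper's proof uses it -- ``since the expected value of $\delta^*$ is at most $\delta$...'' -- and it is the reading needed for the lower bound in Theorem~\ref{thm:lb}). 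Your key step, ``the majority rule fails only if $|L\cap T^c|\gtrsim n/2$, which is incompatible with $|L\cap T^c|\le\delta n$,'' treats the bound $|L\cap T^c|\le\delta n$ as holding pointwise. It need not: even for small $\delta$, Markov only gives $\Pr[\delta^*\ge 1/2]\le 2\delta$, so the majority vote can genuinely fail with probability comparable to $2\delta$; charging that event with error $1$ and adding the $\delta$ from the misclassification of the $L'$-vertex yields $3\delta+O(1/n)$, not $2\delta+1/n$.

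The gap is fixable within your scheme: condition on the recovery output and observe that when the majority vote \emph{does} fail (so $S=T^c$, which forces $\delta^*\ge 1/2-O(1/n)$), the classifier errs precisely when the $L'$-vertex is \emph{correctly} placed relative to $T$, an event of conditional probability $1-\delta^*\le\delta^*+O(1/n)$; when it succeeds, the conditional error is $\delta^*$. Taking expectations then gives $\delta+O(1/n)$. The paper avoids this case analysis altogether: with a single random reference vertex $w''$, the conditional success probability is at least $(1-\delta^*)\bigl((1-\delta^*)n-1\bigr)/n\ge 1-2\delta^*-1/n$ for \emph{every} realization, and linearity of expectation immediately yields the stated $2\delta+1/n$. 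Either repair works, but as written your argument does not establish the lemma under the hypothesis it is actually invoked with.
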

\begin{proof}
The classifier works as follows. It executes the recovery procedure for the input graph $G=(V,E)$ and gets two sets $S^*$ and $T^*$.
It picks at random $w'\in\{u,v\}$ and $w''\in L''$. Now if $w'$ and $w''$ lie in the same set $S^*$ or $T^*$, then the algorithm
returns ``$w'\in L$'', otherwise it returns ``$w'\in R$''.
What is the error probability of this classifier?

Since the distribution of graphs in the Stochastic Block Model is invariant under permutation of vertices in $L$ and in $R$, the error probability
will not change if we alter the process as follows: the classifier first runs the recovery procedure, then we pick two random vertices $u\in L$ and $v\in R$ and give these vertices
to the classifier. Note that the classifier does not need $u$ and $v$ to run the recovery procedure. Let us compute the error probability. Suppose that the recovery procedure misclassified $\delta^*$ fraction of all vertices, and say $S^*$ corresponds to $L$ i.e. $|S^*\cap L|= (1 - \delta^*)n$.  If the algorithm picks $w' = u \in L$, then the probability that $w',w''\in S^*$ equals
$(1 - \delta^*)((1 - \delta^*)n -1)/n\geq 1-2\delta^* - 1/n$.
 Similarly, if $w'=v\in R$, then
the probability that $w'\in T^*$ and $w''\in S^*$ equals
$(1 - \delta^*)^2\geq 1 -2\delta^*$.

Since the expected value of $\delta^*$ is at most $\delta$ we get the desired result.
\end{proof}

We now prove a similar lemma for Stochastic Block Model with Outlier edges.

\begin{lemma}[Restricted Classifier for Stochastic Block Model with Outliers]\label{lem:restr-sr}
If there exists a procedure that recovers partitions in the Stochastic Block Model with Outlier Edges with accuracy at least $1- \delta$,
then there exists a restricted classifier for sets $L'$,  $R'$, $L''=L\setminus L'$ and $R''=\setminus R'$
with $|L'|=|R'| < n/2$ that errs with probability at most $\delta n/|L'|$.
\end{lemma}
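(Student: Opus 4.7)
My plan is to construct the restricted classifier in two phases. First, run the partial recovery procedure $\mathcal{A}$ on $G$ to obtain a partition $(S^*, T^*)$ of $V$; then use $L''$ (which the classifier knows lies in $L$) to identify which of $S^*, T^*$ corresponds to the $L$-cluster, by relabeling so that $|L'' \cap S^*| \geq |L'' \cap T^*|$. Second, pick a single vertex $w$ uniformly at random from $L' \cup R'$, and output ``non-swapped'' (i.e., $L'\subset L$) iff $w$ lies in its expected cluster under the non-swapped hypothesis, that is, iff either $w \in L' \cap S^*$ or $w \in R' \cap T^*$.

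For the analysis, let $\delta^*$ denote the realized misclassification fraction of the recovery output on $G$ (so that $\E[\delta^*] \leq \delta$), and let $\alpha_L n = |L \cap T^*|$, $\alpha_R n = |R \cap S^*|$ be the per-side misclassification counts (under the correct matching $S^* \leftrightarrow L$), so $\alpha_L + \alpha_R = 2\delta^*$. Assume without loss of generality that the true hypothesis is non-swapped, and condition on the first phase correctly identifying $S^*$ with the $L$-cluster, which occurs whenever $|L'' \cap T^*| < |L''|/2$. Then the classifier errs only when $w$ falls in the wrong cluster; averaging over $w$ uniform in $L' \cup R'$, and using $|L' \cap T^*| \leq \alpha_L n$, $|R' \cap S^*| \leq \alpha_R n$, and $|L'| = |R'|$, I get
\begin{equation*}
\Pr[\text{error} \mid G,\ \text{anchor correct}] \leq \frac{|L' \cap T^*| + |R' \cap S^*|}{2|L'|} \leq \frac{\alpha_L n + \alpha_R n}{2|L'|} = \frac{\delta^* n}{|L'|}.
\end{equation*}
Taking expectations yields a contribution of at most $\delta n/|L'|$ from this event.

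The main remaining obstacle is the failure of the $L''$-anchor step, which requires $|L'' \cap T^*| \geq |L''|/2$; since $|L''| = n - |L'| \geq n/2$, this forces $\alpha_L \geq 1/4$ and hence $\delta^* \geq 1/8$. By Markov's inequality the probability of this event is at most $8\delta$, and since $|L'| < n/2$ gives $n/|L'| > 2$ we have $8\delta \leq 4\delta n/|L'|$; so this contributes only an additive $O(\delta n/|L'|)$, and the overall error probability matches the stated $\delta n/|L'|$ bound up to a constant factor. A symmetric calculation, with the roles of $L'$ and $R'$ exchanged, handles the swapped hypothesis and gives the same bound.
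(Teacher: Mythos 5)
Your construction is sound and your core calculation (error at most $\delta^* n/|L'|$ conditioned on the anchor correctly identifying which of $S^*,T^*$ is the $L$-cluster) is the same in spirit as the paper's, but you resolve the ``which side is which'' question differently, and that difference costs you a constant factor that the lemma statement does not allow. The paper's classifier does not anchor by a majority vote over $L''$; it picks a random set $W''\in\{L'',R''\}$ and a \emph{single} random vertex $w''\in W''$ (in addition to $w'\in W'$ for random $W'\in\{L',R'\}$), and tests whether $w'$ and $w''$ land in the same part of the recovered partition. That classifier errs only if at least one of $w',w''$ is misclassified, so the conditional error is at most $\delta'+\delta''$, where $\delta',\delta''$ are the misclassification rates on $L'\cup R'$ and $L''\cup R''$; the identity $2\delta^* n=\delta'(|L'|+|R'|)+\delta''(|L''|+|R''|)$ together with $|L''|+|R''|\ge|L'|+|R'|$ gives $\delta'+\delta''\le \delta^* n/|L'|$ exactly, with no separate failure event to union-bound. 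Your majority-vote anchor is more robust per instance but forces you to Markov-bound the event $\delta^*\ge 1/8$ separately, and you end up with roughly $5\delta n/|L'|$ rather than $\delta n/|L'|$. This weaker bound still suffices for the only place the lemma is invoked (Theorem~\ref{thm:lb}, whose conclusion is stated with $\Omega(\cdot)$), but as written your argument proves a weakened form of the lemma; to obtain the stated constant, replace the majority-vote anchor by the single-random-anchor-vertex comparison, or else restate the lemma with the extra constant factor.
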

\begin{proof}
As before, the classifier executes the recovery procedure for the input graph $G=(V,E)$ and gets two sets $S^*$ and $T^*$.
Then, the classifier picks sets $W'\in \{L',R'\}$ and $W'' = \{L'', R''\}$ at random.  It also picks random vertices
$w'\in W'$ and $w''\in W''$. If $w'$ and $w''$ lie in the same set
 $S^*$ or $T^*$, the classifier returns ``$W'$ and $W''$ are on the same side of the cut''; otherwise,
 it returns ``$W'$ and $W''$ are on different sides of the cut''. Note that the classifier knows whether $W''=L''$ or $W''=R''$, and
 hence whether $W''$ lies on the left or right side of the cut.

Let $\delta^*$ be the fraction of misclassified vertices. Further, let $\delta'$ be the fraction of misclassified vertices in $L'\cup R'$; and
$\delta''$ be the fraction of misclassified vertices in $L''\cup R''$. Note that $\delta^* = \big(\delta' (|L'|+|R'|) + \delta'' (|L''|+|R''|)\big)/(2n)$.
The error probability of the classifier given the partition $S^*$ and $T^*$ is at most
$$1 - (1-\delta')(1-\delta'')\leq \delta_1+\delta_2\leq \frac{2\delta^*n}{|L'|+|R'|}= \frac{\delta^*n}{|L'|}.$$

The error probability over random choices of the graph is at most $\E[\delta^*n/|L'|]=\delta n/|L'|$.
\end{proof}

In the next subsection, we argue that, in a way, the only useful information the restricted classifier can use about the graph given the sets $L'$, $R'$, $L''$ and $R''$
are the number of edges between sets $L'$, $L''$, $R'$ and $R''$.

\subsection{Tests for Pairs of Distributions}

Let $D_1$ and $D_2$ be two distributions; and let $D_{Left}=D_1\times D_2$ and $D_{Right} = D_2\times D_1$ be the product
distributions -- distributions of pairs $(X,Y)$ and $(Y,X)$, where $X$ and $Y$ are independent random variables distributed
as $D_1$ and $D_2$ respectively. In this section, we consider tests that given two independent
pairs of random variables $(X_1, Y_1)$ and $(Y_2,X_2)$ distributed according to $D_{Left}$ and $D_{Right}$
needs to decide which pair is drawn from $D_{Left}$ and which from $D_{Right}$.
The test gets the pairs as an unordered set $\{(X_1, Y_1), (Y_2,X_2)\}$. We show
that the restricted classifier is essentially a test for distributions $D_1$ and $D_2$,
where $D_1$ is the distribution of the total number of edges between $L'$ and $L''$; $D_2$ is
the distribution of the number of edges between $R'$ and $R''$.

\begin{lemma}\label{lem:rest-to-test}
Consider the Block Stochastic Model with sets $L'$, $R'$, $L''$, $R''$ as in Lemma~\ref{lem:restr-pure}, or
the Stochastic Model with Outlier edges, with sets $L'$, $R'$, $L''$, $R''$ as in Lemma~\ref{lem:restr-sr}. When we have outlier edges (noise), we assume
that the adversary behaves as described in Section~\ref{sec:sr-adv} and the sets $L'$ and $R'$ he chooses are the same sets as above. Let
$D_1$ be the distribution of the number of edges between $L'$ and $L''$, and
$D_2$ be the distribution of the number of edges between $L'$ and $R''$. (Note, that the number of edges between $R'$ and $R''$ is also distributed as $D_1$;
the number of edges between $R'$ and $L''$ is distributed as $D_2$.) Then, if there exists a restricted classifier (see the previous section) with error
probability at most $\delta$, then there exists a test that decides whether
\begin{itemize}
\item $(X_1,Y_1) \sim D_1\times D_2$ and $(Y_1,X_1) \sim D_2\times D_1$; or
\item $(X_1,Y_1) \sim D_2\times D_1$ and $(Y_1,X_1) \sim D_1\times D_2$
\end{itemize}
 with error probability at most $\delta$.
\end{lemma}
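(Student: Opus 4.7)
The plan is to construct the required test as a black-box reduction to the restricted classifier. The test will receive the input as an unordered pair of ordered pairs $\{P_1,P_2\}$, label them uniformly at random as $P_A=(a_1,a_2)$ and $P_B=(b_1,b_2)$, synthesize an ``artificial'' graph on $V=L'\cup L''\cup R'\cup R''$ using these four coordinates to fix the edge counts in the four region-pairs that cross between $L'\cup R'$ and $L''\cup R''$, and finally invoke the restricted classifier and translate its verdict back to a statement about which of $P_A,P_B$ was drawn from $D_{Left}$.

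For the synthesis, the test will place $a_1$ edges between $L'$ and $L''$, $a_2$ edges between $L'$ and $R''$, $b_1$ edges between $R'$ and $L''$, and $b_2$ edges between $R'$ and $R''$, drawing the specific endpoints i.i.d.\ uniformly within each region-pair (with parallel edges allowed, as in the Poisson model). All remaining edges---inside each of $L',R',L'',R''$, between $L''$ and $R''$, and between $L'$ and $R'$---will be sampled freshly from the appropriate Poisson distributions. The critical observation is that every one of these remaining region-pairs has a fixed intra/inter-cluster status regardless of which side $L'$ and $R'$ lie on (e.g., edges inside $L'$ are always intra-cluster and edges between $L'$ and $R'$ are always inter-cluster), so the rates $a$ or $b$ are unambiguous. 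The test will then output ``$P_A\sim D_{Left}$'' if the classifier responds ``$L'\subset L$'' and ``$P_A\sim D_{Right}$'' otherwise, converting back to labels on the original pairs.

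The correctness argument will hinge on matching the joint distribution of the synthesized graph and the hidden variable ``which pair came from $D_{Left}$'' with the joint distribution of a genuine $\SBM$-with-outliers graph and its hidden variable ``which side does $L'$ lie on,'' as seen by the restricted classifier. When $P_A\sim D_{Left}$, the four coordinates satisfy $a_1\sim D_1$, $a_2\sim D_2$, $b_1\sim D_2$, $b_2\sim D_1$, matching exactly the marginal region-pair counts in the scenario $L'\subset L$, $R'\subset R$; the other case matches $L'\subset R$, $R'\subset L$ symmetrically. The random labeling of $P_A,P_B$ supplies exactly the fair coin that the restricted classifier's model already builds in via the swap of $L'$ and $R'$.

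The hard part will be justifying that reproducing only the region-pair totals is enough, i.e., that conditional on the total edge count in each of the four region-pairs, the specific edges are i.i.d.\ uniform over the region-pair. For the pure $\SBM$ regions this is immediate from the Poisson model. For the adversarial region $L'\times R''$, the plan is to argue that conditional on $Z_{L'}=z$ and $\kappa_{L'}=k$, both the $z$ $\SBM$ edges and the $k$ adversary-added edges are i.i.d.\ uniform over $L'\times R''$, so all $z+k$ edges together are i.i.d.\ uniform; marginalizing over pairs $(z,k)$ summing to $a_2$ preserves this property, since the conditional-on-count distribution does not depend on the split. An analogous argument will handle $R'\times L''$. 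Once this distributional equivalence is established, the restricted classifier's error probability of at most $\delta$ transfers verbatim to the test.
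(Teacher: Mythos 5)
Your proposal is correct and follows essentially the same route as the paper: a black-box reduction that synthesizes a graph by planting the four given counts uniformly at random into the four crossing region-pairs, sampling all remaining regions (whose intra/inter status is invariant under the swap) from the appropriate Poisson rates, and invoking the restricted classifier, with Poisson thinning justifying the distributional match. Your conditioning-and-marginalizing argument for the adversarial regions $L'\times R''$ and $R'\times L''$ is a slightly more explicit version of the paper's identification of the $D_2$-sample with $Z_{L'}+\hat{\kappa}(Z_{L'})$, but it is the same idea.
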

\begin{proof}
Suppose we are given a restricted classifier with error probability at most $\delta$. We construct a test for pairs $D_1\times D_2$ and $D_2\times D_1$.
The test procedure receives two pairs $(X_1,Y_1)$ and $(Y_2, X_2)$. Then it generates a graph from the model (the pure SBM, or the one with outlier edges) as follows. It creates four
sets of vertices $A$, $B$, $L''$ and $R''$. It adds edges to the subgraphs on $A\cup B$ and $L''\cup R''$ as in the Stochastic Block Model with
planted cuts $(A,B)$ and $(L'', R'')$ respectively. Then, it adds $X_1$, $Y_1$, $X_2$, $Y_2$ edges
between $A$ and $L''$, $A$ and $R''$, $B$ and $R''$, $B$ and $L''$ respectively. These edges are added at random one by one: say, to add an edge
between $A$ and $L''$, the test procedure picks a random vertex in $A$ and a random vertex in $L''$ and connects these vertices with an edge.
Once the graph is generated, the procedure executes the restricted classifier. If the classifier tells that $A$ and $L''$ are on the same side of the cut, the test
returns that $X_1,X_2\sim D_1$ and $Y_1,Y_2\sim D_2$; otherwise, $X_1,X_2\sim D_2$ and $Y_1,Y_2\sim D_1$.

We now analyze the tester. We claim that the graph obtained by the procedure above is distributed according to the model
(the pure SBM, or the one with outlier edges), and the planted cut is $(A\cup L'', B\cup R'')$ if $X_1,X_2\sim D_1$ and $Y_1,Y_2\sim D_2$;
the planted cut is $(B\cup L'', A\cup R)$ if $X_1,X_2\sim D_2$ and $Y_1,Y_2\sim D_1$. For the proof, assume without loss of generality that
$X_1,X_2\sim D_1$ and $Y_1,Y_2\sim D_2$.

Let $N_{uv}$ be the number of edges between vertices $u$ and $v$. In the pure Stochastic Block Model,
we need to verify that random variables $N_{uv}$ are independent; and $N_{uv}$ has the Poisson distribution with parameter $a/n$ for $(u,v)\in A\times L''$
and $(u,v)\in B\times R''$; $N_{uv}$ has the Poisson distribution with parameter $b/n$ for $(u,v)\in A\times R''$
and $(u,v)\in B\times L''$.
 This immediately follows from the following Poisson Thinning Property,
since $X_1$, $X_2$, $Y_1$ and $Y_2$ have Poisson distributions
with parameters $(a/n) |A|\cdot |L'|$, $(a/n) |B|\cdot |R|'$, $(b/n) |A|\cdot |L'|$, $(b/n) |B|\cdot |R|'$ respectively.

\begin{fact} Suppose we pick a number $P$ according to the Poisson distribution with parameter $\lambda$. Then, we distribute $P$ balls into $m$ bin as follows: We pick balls
one by one and through them into random bins (independently). Then the number of balls in bins are independent and are distributed according to the Poisson
distribution with parameter $\lambda/m$.
\end{fact}

In the model with outlier edges, $D_2$ is the distribution of the random variable $Z_{P_1} +\hat{\kappa}(Z_{P_1})$,
where $P_1$ is a Poisson random variable with parameter $bM$ (see Section~\ref{sec:sr-adv}). Since $Y_1,Y_2\sim D_2$,
we may assume that $Y_1 = Z_{L'} +\hat{\kappa}(Z_{L'})$ and $Y_2 = Z_{R'} +\hat{\kappa}(Z_{R'})$ for some
Poisson random variables $Z_{L'}$ and $Z_{R'}$  with parameter $bM$. If the test procedure
added $Z_{L'}$ and $Z_{R'}$ edges between $A$ and $R''$ and between $B$ and $L''$, it would get a graph from
the pure Stochastic Block Model with the planted cut $(A\cup L'', B\cup R'')$. But adding extra
$\hat{\kappa}(Z_{L'})$ and $\hat{\kappa}(Z_{L'})$ edges it gets a graph from the SBM with outlier edges.

We showed that if $(A\cup L'', B\cup R'')$ is the planted cut, then $X_1,X_2\sim D_1$ and $Y_1,Y_2\sim D_2$;
if $(B\cup L'', A\cup R'')$ is the planted cut, $X_1,X_2\sim D_2$ and $Y_1,Y_2\sim D_1$. This the restricted classifier outputs
the correct cut with probability $1 - \delta$, this test errs also with probability $\delta$.
\end{proof}

We will need the following simple lemma.

\begin{lemma}\label{lem:coupling-to-bound}
Consider two distributions $D_1$ and $D_2$. Suppose that there exists a joint distribution $D_{12}$ of random variables $X$ and $Y$ such that
$X\sim D_1$ and $Y\sim D_2$, and
$$\Pr(X = Y)\geq \eta.$$
Then, for any test for pairs of distributions $D_1$, $D_2$ (see above) errs with probability at least $\eta^4/2$.
\end{lemma}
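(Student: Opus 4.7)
The plan is to prove this via a standard two-point Le Cam argument: build a coupling between the test's input under the two hypotheses that makes the two inputs coincide with probability at least $\eta^4$, and then deduce that any test must err on one of the two hypotheses with probability at least $\eta^4/2$.

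First, I would exploit the assumed coupling on both sides. The hypothesis provides a joint distribution of $(X,Y)$ with $X\sim D_1$, $Y\sim D_2$, and $\Pr(X=Y)\geq \eta$; by symmetry (just swap the labels of the two coordinates), the same coupling is also a coupling of $D_2$ with $D_1$ with agreement probability at least $\eta$. So both ``$D_1$ vs.\ $D_2$'' and ``$D_2$ vs.\ $D_1$'' can be coupled to agree with probability at least $\eta$.

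Next, I would construct a joint distribution of the test's inputs under the two hypotheses. Denote the Case-A input by $W^A=(X_1^A,Y_1^A,Y_2^A,X_2^A)$ and the Case-B input by $W^B=(X_1^B,Y_1^B,Y_2^B,X_2^B)$, with marginals as prescribed by the lemma and with the four coordinates independent within each case. For each $i\in\{1,2\}$ I would couple the scalar pair $(X_i^A,X_i^B)$ (marginals $D_1$ and $D_2$) using the first coupling above, and couple $(Y_i^A,Y_i^B)$ (marginals $D_2$ and $D_1$) using the second, taking all four of these scalar couplings to be mutually independent. Since each scalar pair agrees with probability at least $\eta$, independence gives $\Pr(W^A=W^B)\geq \eta^4$.

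For the conclusion, let $T$ be any (possibly randomized) test that outputs ``Case A'' or ``Case B.'' On the event $\{W^A=W^B\}$ the test outputs the same answer on both inputs, so it is wrong in exactly one of the two cases; thus $\mathbf{1}(T(W^A)=\text{B})+\mathbf{1}(T(W^B)=\text{A})=1$ on this event. Averaging the two error probabilities under the uniform prior yields
\[
\mathrm{err}(T) \;=\; \tfrac{1}{2}\bigl[\Pr(T(W^A)=\text{B})+\Pr(T(W^B)=\text{A})\bigr] \;\geq\; \tfrac{1}{2}\Pr(W^A=W^B) \;\geq\; \eta^4/2,
\]
and the same bound transfers to tests that see only the unordered set, since the unordered set is a function of the ordered input.

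The only real subtlety, and what I would be careful about, lies in the coupling step. It would be tempting to try to couple $(X_i^A,Y_i^A)$ internally using the joint law $D_{12}$ to force $X_i^A=Y_i^A$, but under Case A the coordinates $X_i^A$ and $Y_i^A$ are \emph{independent} (product law $D_1\otimes D_2$, not $D_{12}$), so $D_{12}$ is not a valid joint law for them and this approach would change the input's distribution. The correct move is to use the hypothesized coupling across the two hypotheses, coupling the Case-A variable $X_i^A$ with the Case-B variable $X_i^B$ and analogously for the $Y$-coordinates, independently across $i$ and across the $X$/$Y$ labels. Once this is set up correctly, the four-fold product bound $\eta^4$ and the Le Cam conclusion are immediate.
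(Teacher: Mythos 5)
Your proposal is correct and is essentially the paper's own argument: the paper draws four independent copies of the coupling $D_{12}$ and forms the two hypothesis inputs by distributing their coordinates, which is exactly your coordinate-wise coupling of $W^A$ with $W^B$; the four-fold agreement probability $\eta^4$ and the observation that identical inputs with opposite ground truths force an error in one of the two cases then give $\eta^4/2$ just as you argue. Your remark about not using $D_{12}$ as a within-hypothesis joint law correctly identifies the one pitfall, and the paper's construction avoids it in the same way.
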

\begin{proof}
Consider four independent pairs of random variables $(X_1,Y_1)$, $(X_2,Y_2)$, $(X_3,Y_3)$, and $(X_4,Y_4)$.
Each pair $(X_i,Y_i)$ is distributed according to $D_{12}$. Let $\zeta$ be the error probability of $T$.
Consider two experiments: In the first experiment we apply the test to the pairs $(X_1,Y_2)$ and $(X_3,Y_4)$; in the second, we apply the test
to $(Y_1,X_2)$ and $(Y_3,X_4)$. Observe that the random variables $X_1$, $Y_2$, $X_3$ and $Y_4$ are independent;
and  $X_1\sim D_1$, $Y_2\sim D_2$, $X_3 \sim D_1$, $Y_4\sim D_2$.
The random variables $Y_1$, $X_2$, $Y_3$ and $X_4$ are also independent; but
$Y_1\sim D_2$, $Y_2\sim D_1$, $X_3 \sim D_2$, $Y_4\sim D_1$. So the test should output opposite results in the first and second
experiments. However, with probability at least $\eta^4$, we get $X_1=Y_1$, $X_2=Y_2$, $X_3=Y_3$, $X_4=Y_4$. In this case, the
test returns the incorrect answer either in the first or second experiments.
\end{proof}

We now prove Theorem~\ref{thm:lb}.

\begin{proof}[Proof of Theorem~\ref{thm:lb}]
By Corollary~\ref{lem:pois-couple}, which we prove in the next section, there exists a coupling of two Poisson random variables $P_1$, $P_2$ with
parameters $a$ and $b$, such that
$$\delta \equiv\Pr (P_1 = P_2) \geq C_1 e^{-\frac{C_2(a - b)^2}{a + b}}$$
for some absolute constants $C_1$ and $C_2$. By Lemma~\ref{lem:coupling-to-bound}, the error
probability of any test for $P_1$, $P_2$ is at least $\delta$. Since the number of neighbours
of a fixed vertex $u$ on the same side and on the opposite side are distributed as the Poisson distribution with
parameters $a$ and $b$, by Lemma~\ref{lem:rest-to-test}, we get that
any restricted classifier has error probability at least $\delta$. Finally, by Lemma~\ref{lem:restr-pure},
the expected number of misclassified vertices is at lest
$\delta/2 - O(1/n) = (C_1/2) e^{-\frac{C_2(a - b)^2}{a + b}}-O(1/n)$.
This proves the bound \ref{eq:Bound1}.

In the model with outlier edges, the total number of edges between the set $L'$ and $L''$ has the Poisson distributed
with parameter $(a/n) |L'|\cdot |L\setminus L'| = a \rho (1-\rho)n$.
The total number of edges between $L'$ and $R''$
has the same distribution as $P_1+\hat{\kappa}(P_1)$, where $P_1$ is the Poisson distribution with parameter $b$ (see
Corollary~\ref{cor:kappa}). By Lemma~\ref{lem:coupling-to-bound} and Corollary~\ref{cor:kappa}, the error probability
of any test for these two distributions is at lest 1/2. Hence, by Lemma~\ref{lem:rest-to-test} and Lemma~\ref{lem:restr-sr},
the expected number of misclassified vertices is at least (see Section~\ref{sec:sr-adv})
$$\delta \geq \frac{|L'|}{2n} = \Omega\Big(\frac{\varepsilon (a-b)}{a+b}\Big).$$
This proves the bound \ref{eq:Bound2}.
\end{proof}

\subsection{Poisson Distribution}

\begin{fact}[Median of the Poisson distribution]
For every Poisson random variable $P$ with parameter $\lambda > 0$,
$$\Pr (P \geq \rounddown{\lambda})\geq \frac{1}{2}.$$
\end{fact}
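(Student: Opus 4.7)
The plan is to reduce the statement to the case where $\lambda$ equals a positive integer $n$, and then to prove $\Pr(\text{Poisson}(n)\ge n)\ge 1/2$ by pairing terms of the probability mass function around $n$ with a small shift.

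I would start with a monotonicity reduction: for any fixed integer $k$, the quantity $\Pr(\text{Poisson}(\lambda)\ge k)$ is non-decreasing in $\lambda$, as one sees by writing $\text{Poisson}(\lambda)$ as an independent sum $\text{Poisson}(\lambda')+\text{Poisson}(\lambda-\lambda')$ whenever $\lambda\ge \lambda'$ and noting that the first summand already contributes to the event $\{\,\cdot\,\ge k\}$. Since $\rounddown{\lambda}=n$ for every $\lambda\in[n,n+1)$, this reduces the claim to $\Pr(\text{Poisson}(n)\ge n)\ge 1/2$ for each integer $n\ge 0$. The case $n=0$ is immediate, so we fix $n\ge 1$ and set $f(k)=e^{-n}n^{k}/k!$.

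The core step is a pairing argument. For each $j\ge 1$, I would compare the ``upper'' term $f(n+j-1)$ against the ``lower'' term $f(n-j)$ (with the convention $f(k)=0$ for $k<0$). Their ratio is
$$\frac{f(n+j-1)}{f(n-j)}=\frac{n^{2j-1}}{\prod_{i=n-j+1}^{n+j-1} i}.$$
For $j\le n$, the denominator is a product of $2j-1$ positive integers symmetric about $n$, so their arithmetic mean equals exactly $n$; AM--GM then gives $\prod_{i=n-j+1}^{n+j-1} i\le n^{2j-1}$ and hence $f(n+j-1)\ge f(n-j)$. For $j>n$ the lower term vanishes and the inequality is trivial. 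Summing over $j\ge 1$ yields
$$\Pr(P\ge n)=\sum_{j\ge 1} f(n+j-1)\ge \sum_{j\ge 1} f(n-j)=\Pr(P\le n-1),$$
and since these two probabilities sum to $1$, each must be at least $1/2$.

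The main subtlety will be choosing the right pairing. The naive symmetric choice $f(n-j)\leftrightarrow f(n+j)$ actually fails (e.g.\ for $n=3$, $j=2$ one checks $f(1)=3e^{-3}$ while $f(5)=2.025\,e^{-3}$), because a product of $2j$ consecutive integers straddling $n$ has arithmetic mean $n+1/2$, not $n$, so AM--GM goes the wrong way. Shifting the upper index down by one to $f(n+j-1)$ produces an odd number of factors centred exactly at $n$, which is precisely what makes AM--GM bite; everything else is bookkeeping.
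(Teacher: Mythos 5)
The paper does not prove this fact at all: it is stated as a named \emph{Fact} (the classical result that a Poisson median is at least $\rounddown{\lambda}$) and used without justification, so there is no in-paper argument to compare yours against. Your blind proof is correct and self-contained. The monotonicity reduction is sound: writing $\mathrm{Poisson}(\lambda)$ as an independent sum $\mathrm{Poisson}(n)+\mathrm{Poisson}(\lambda-n)$ shows $\Pr(P_\lambda\ge n)\ge\Pr(P_n\ge n)$ for $\lambda\in[n,n+1)$, so it suffices to treat integer parameter $n$. The pairing $f(n+j-1)$ versus $f(n-j)$ is the right one: the ratio is $n^{2j-1}$ divided by a product of $2j-1$ consecutive positive integers whose arithmetic mean is exactly $n$ (for $j\le n$), so AM--GM gives $f(n+j-1)\ge f(n-j)$, and summing over $j$ yields $\Pr(P\ge n)\ge\Pr(P\le n-1)$, hence $\Pr(P\ge n)\ge 1/2$. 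Your observation that the unshifted pairing $f(n-j)\leftrightarrow f(n+j)$ genuinely fails (as the $n=3$, $j=2$ example shows, since $2j$ consecutive integers straddling $n$ average $n+\tfrac12$) correctly identifies the one delicate point. This supplies a clean elementary proof of a fact the authors simply cite.
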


\begin{lemma}\label{lem:Pois-div}
There exists a constant $C > 0$ such that for a Poisson random variable with parameter $\lambda \geq 1$ and every $t\ge 1$, the following
inequality holds:
$$\Pr(P \geq \lambda + t\sqrt{\lambda}) \geq e^{-Ct^2}.$$
\end{lemma}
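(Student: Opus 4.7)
The plan is to give a direct Stirling-based lower bound on the Poisson PMF and then either sum over many values or keep a single term depending on the regime of $t$.

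First, Stirling's inequality gives $k! \leq e\sqrt{2\pi k}(k/e)^k$ for every $k\geq 1$, so
$$\Pr(P=k) \;=\; \frac{e^{-\lambda}\lambda^k}{k!} \;\geq\; \frac{1}{e\sqrt{2\pi k}}\exp\bigl(-\lambda\,\phi(k/\lambda)\bigr),$$
where $\phi(r) := r\log r - (r-1)$. An elementary calculation (using $h(r) = (r-1)^2/2 - \phi(r)$, $h(1)=0$, $h'(r) = (r-1) - \log r \geq 0$ for $r\geq 1$) shows that $\phi(r) \leq (r-1)^2/2$ for all $r \geq 1$. Plugging $r = k/\lambda$, this yields
$$\Pr(P=k) \;\geq\; \frac{1}{e\sqrt{2\pi k}}\exp\!\Bigl(-\tfrac{(k-\lambda)^2}{2\lambda}\Bigr)$$
for every integer $k\geq \lambda$.

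Next I would split on whether $t\leq \sqrt\lambda$ or $t > \sqrt\lambda$. In the first regime, let $k_0 = \lceil \lambda + t\sqrt\lambda\rceil$ and consider the interval $I = \{k_0, k_0+1, \dots, k_0 + \lfloor \sqrt\lambda\rfloor\}$. Each $k\in I$ satisfies $k-\lambda \leq (t+2)\sqrt\lambda$ and $k \leq 2\lambda$, so the lower bound above is at least $\tfrac{c}{\sqrt{\lambda}}\, e^{-(t+2)^2/2}$ for an absolute constant $c$. Summing the $\Theta(\sqrt\lambda)$ terms in $I$ gives $\Pr(P \geq \lambda + t\sqrt\lambda) \geq c' e^{-C_1 t^2}$ after absorbing the $(t+2)^2$ into $C_1 t^2$ (valid since $t\geq 1$).

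In the second regime ($t > \sqrt\lambda$), I would keep just the single term $k = k_0$. Here $k_0 \leq \lambda + t\sqrt\lambda + 1 \leq 2t^2 + 1$ because $\sqrt\lambda < t$, so $\sqrt{k_0} = O(t)$. Then
$$\Pr(P \geq \lambda + t\sqrt\lambda) \;\geq\; \Pr(P = k_0) \;\geq\; \frac{1}{e\sqrt{2\pi k_0}}\, e^{-(t+1)^2/2} \;\geq\; \frac{c''}{t}\, e^{-2t^2}.$$
Finally, $1/t \geq e^{-t^2}$ for $t\geq 1$, which absorbs the polynomial prefactor into the exponential, giving $\Pr(P\geq \lambda + t\sqrt\lambda) \geq e^{-C_2 t^2}$. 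Taking $C = \max(C_1, C_2)$ plus an additive constant to handle the multiplicative $c'', c'$ at $t=1$ finishes the argument. The only real subtlety is uniformly handling the $1/\sqrt k$ prefactor across both regimes, which is resolved by summing an $\sqrt\lambda$-length window when $k$ is close to $\lambda$ and by using the bound $k = O(t^2)$ when $k$ is much larger than $\lambda$.
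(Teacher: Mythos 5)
Your proof is correct, but it takes a genuinely different route from the paper's. You lower-bound the individual Poisson probabilities via Stirling's formula, $\Pr(P=k) \geq \tfrac{1}{e\sqrt{2\pi k}}\exp\bigl(-(k-\lambda)^2/(2\lambda)\bigr)$ for $k\geq\lambda$, and then deal with the $1/\sqrt{k}$ prefactor by summing a window of length $\Theta(\sqrt{\lambda})$ when $t\leq\sqrt{\lambda}$ and by noting $k_0=O(t^2)$ when $t>\sqrt{\lambda}$; all the estimates check out (the claim $k\leq 2\lambda$ in the first regime should really be $k\leq 4\lambda$, but this only changes the absolute constant). The paper instead avoids Stirling entirely: it anchors at the median ($\Pr(P\geq\lfloor\lambda\rfloor)\geq 1/2$), splits the integers above the median into $S'=[\lfloor\lambda\rfloor,\lambda+t\sqrt{\lambda})$ and its complement $S''$, and in the nontrivial case where $\Pr(P\in S')\geq 1/4$ it shifts $S'$ by $\Delta=\lceil t\sqrt{\lambda}\rceil+1$ into $S''$, controlling the loss through the ratio of consecutive Poisson probabilities, $\Pr(P=k)/\Pr(P=k+1)=(k+1)/\lambda$, which after $\Delta$ shifts costs only a factor $\exp(O(\Delta^2/\lambda))=\exp(O(t^2))$. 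The paper's argument works with probabilities of sets throughout and so never encounters a polynomial prefactor, at the price of the slightly delicate shifting lemma; your argument is a more standard local-limit-type estimate whose only subtlety is exactly the prefactor bookkeeping you identify, and it has the mild advantage of giving an explicit pointwise lower bound on the PMF that could be reused elsewhere.
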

\begin{proof}
Let $S' = \{k\in \bbZ^+: \rounddown{\lambda} \leq k < \lambda + t\sqrt{\lambda}\}$ and
$S'' = \{k\in \bbZ^+: k \geq \lambda + t\sqrt{\lambda}\}$. The union $S'\cup S''$ is the set of all integers greater than $\rounddown{\lambda}$.
Hence,
$$\Pr (P \in S'\cup S'') = \Pr (P\geq \rounddown{\lambda})\geq 1/2.$$
If $\Pr (P \in S')\leq 1/4$, then $\Pr (P \in S'')\geq 1/4$, and we are done. So we assume that $\Pr (P \in S')\geq 1/4$.
Let $\Delta = \roundup{t\sqrt{\lambda}} + 1$. Notice that $S' + \Delta \equiv \{k+\Delta: k\in S'\}\subset S''$, and, consequently,
$\Pr (P\in S'')\geq \Pr (P \in S' +\Delta)$. We lower bound $\Pr (P \in S' +\Delta)$ using the following lemma.
\begin{lemma}
Let $S$ be a subset of natural numbers. Suppose that all elements in $S$ are upper bounded by $K$. Then,
$$\frac{\Pr(P\in S)}{\Pr(P\in S+1)}\leq 1 + \frac{K-\lambda + 1}{\lambda},$$
where $P$ is a Poisson random variable with parameter $\lambda$.
\end{lemma}
\begin{proof}
Write,
$$\frac{\Pr(P\in S)}{\Pr(P\in S+1)} = \frac{\sum_{k\in S}\Pr(P=k)}{\sum_{k\in S}\Pr(P=k+1)}\leq \max_{k\in S}\frac{\Pr(P=k)}{\Pr(P=k+1)}.$$
For each $k\in S$, we have
$$\frac{\Pr(P = k)}{\Pr (P = k+1)}= \frac{e^{-\lambda}\lambda^k/k!}{e^{-\lambda}\lambda^{k+1}/(k+1)!} = \frac{k+1}{\lambda}\leq \frac{K+1}{\lambda}.$$
Hence,
$$\frac{\Pr(P\in S)}{\Pr(P\in S+1)}\leq 1 + \frac{K-\lambda + 1}{\lambda}$$
\end{proof}

Applying this lemma $\Delta$ times to the set $S$ with $K=\lambda + 2\Delta$, we get
$$\frac{\Pr (P \in S')}{\Pr (P \in S' +\Delta)}\leq \Big(1 + \frac{2\Delta + 1}{\lambda}\Big)^{\Delta}=
\exp\Big(\Delta\ln\Big(1 + \frac{2\Delta + 1}{\lambda}\Big)\Big)\leq \exp \Big(\frac{\Delta (2\Delta + 1)}{\lambda}\Big).$$
Since $\Pr(P \in S')\geq 1/4$ and $\Delta = \roundup{t\sqrt{\lambda}} + 1$, we get for constant $C > 0$,
$$\Pr (P \in S' +\Delta)\geq
\frac{e^{-\frac{2\Delta^2 + \Delta}{\lambda}}}{4}\geq
\frac{e^{-3\Delta^2/\lambda}}{4}\geq e^{-Ct^2}.$$
This finishes the proof.
\end{proof}

\begin{corollary}[Coupling of two Poisson random variables]\label{lem:pois-couple}
There exists positive constants $C_1, C_2>0$ such that for all positive $\lambda_1$ and $\lambda_2$, there exists a joint distribution
of two Poisson random variables $P_1$ and $P_2$ with parameters $\lambda_1$ and $\lambda_2$ such that
$$\Pr (P_1 = P_2) \geq C_1 e^{-\frac{C_2(\lambda_1 - \lambda_2)^2}{\lambda_1 + \lambda_2}}.$$
\end{corollary}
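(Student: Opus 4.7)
The plan is to split into two regimes depending on the relative size of $\Delta := |\lambda_1 - \lambda_2|$ and $\Lambda := \lambda_1 + \lambda_2$, using a different coupling in each. Without loss of generality assume $\lambda_1 \leq \lambda_2$.

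In the \emph{large-gap regime} $\Delta \geq \Lambda/3$, I would use the thinning coupling: sample $P_2 \sim \mathrm{Poi}(\lambda_2)$ and obtain $P_1$ by retaining each of the $P_2$ atoms independently with probability $\lambda_1/\lambda_2$. Poisson thinning gives $P_1 \sim \mathrm{Poi}(\lambda_1)$ and
$$\Pr(P_1 = P_2) \;=\; \E\bigl[(\lambda_1/\lambda_2)^{P_2}\bigr] \;=\; \exp\bigl(\lambda_2(\lambda_1/\lambda_2 - 1)\bigr) \;=\; e^{-\Delta}.$$
Since $\Delta \geq \Lambda/3$ forces $\Delta \leq 3\Delta^2/\Lambda$, this immediately yields $e^{-\Delta} \geq e^{-3\Delta^2/\Lambda}$, giving the claim with $C_1 = 1, C_2 = 3$.

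In the \emph{small-gap regime} $\Delta < \Lambda/3$, one checks that $\lambda_2/\lambda_1 < 2$ and, crucially, $\lambda_1 \geq \Lambda/3$. Here I would use the maximal coupling, for which $\Pr(P_1 = P_2) = \sum_k \min(p_1(k), p_2(k))$ with $p_i(k) = \Pr(\mathrm{Poi}(\lambda_i) = k)$. The likelihood ratio $p_2(k)/p_1(k) = e^{-\Delta}(\lambda_2/\lambda_1)^k$ is strictly increasing in $k$, and the elementary inequalities $\log x \leq x - 1 \leq x \log x$ (for $x \geq 1$) applied to $x = \lambda_2/\lambda_1$ show that the continuous crossover $k^{\mathrm{true}} = \Delta/\log(\lambda_2/\lambda_1)$ lies in $[\lambda_1, \lambda_2]$. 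Setting $k^* = \lfloor k^{\mathrm{true}}\rfloor$, every integer $k > k^*$ satisfies $p_1(k) \leq p_2(k)$, so
$$\Pr(P_1 = P_2) \;\geq\; \sum_{k > k^*} p_1(k) \;=\; \Pr(P_1 > k^*).$$
I would then apply Lemma~\ref{lem:Pois-div} to $P_1$ at $t = (k^* + 1 - \lambda_1)/\sqrt{\lambda_1}$. Using $k^* - \lambda_1 \leq k^{\mathrm{true}} - \lambda_1 \leq \Delta$ and $\lambda_1 \geq \Lambda/3$,
$$t^2 \;\leq\; \frac{(\Delta + 1)^2}{\lambda_1} \;\leq\; \frac{3(\Delta + 1)^2}{\Lambda},$$
so whenever $t \geq 1$ the lemma produces $\Pr(P_1 > k^*) \geq e^{-Ct^2} \geq C_1 e^{-C_2\Delta^2/\Lambda}$ for suitable universal constants $C_1, C_2$.

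The corner cases are straightforward. When $t < 1$ in the small-gap regime, Lemma~\ref{lem:Pois-div} applied at $t = 1$ still gives $\Pr(P_1 > k^*) \geq \Pr(P_1 \geq \lambda_1 + \sqrt{\lambda_1}) \geq e^{-C}$, and $\Delta^2/\Lambda$ is itself bounded by a constant in this subcase. When $\Lambda < 1$ (so the hypothesis $\lambda \geq 1$ of Lemma~\ref{lem:Pois-div} may fail), note that under the maximal coupling $\Pr(P_1 = P_2) \geq \min(e^{-\lambda_1}, e^{-\lambda_2}) \geq e^{-1}$, again beating the right-hand side, which is bounded by a constant. The main technical subtlety is ensuring the exponent produced by Lemma~\ref{lem:Pois-div} scales like $\Delta^2/\Lambda$ rather than the pessimistic $\Delta^2/\lambda_1$; this is exactly what the small-gap hypothesis $\lambda_1 \geq \Lambda/3$ (equivalently, the fact that the crossover $k^{\mathrm{true}}$ is close to the midpoint when $\lambda_1, \lambda_2$ are comparable) buys us, and the entire case split is designed around this point.
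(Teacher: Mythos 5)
Your proposal is correct and follows essentially the same route as the paper: the two regimes coincide (your condition $\Delta \geq \Lambda/3$ is exactly the paper's $\lambda_2 \geq 2\lambda_1$), the small-gap case is handled identically via the maximal coupling, the tail of $P_1$ past the likelihood-ratio crossover, and Lemma~\ref{lem:Pois-div} (your explicit location of $k^{\mathrm{true}}\in[\lambda_1,\lambda_2]$ and your treatment of $t<1$ are in fact slightly more careful than the paper's), and the large-gap case differs only cosmetically (your thinning coupling gives the exact $e^{-\Delta}$ where the paper settles for $\min(p_1(0),p_2(0))=e^{-\lambda_2}$). One nit: the corner case should be keyed on $\lambda_1<1$ (the actual hypothesis of Lemma~\ref{lem:Pois-div}) rather than $\Lambda<1$, but the uncovered window $\lambda_1\in[\Lambda/3,1)$ forces $\lambda_2<2\lambda_1<2$ and is dispatched by the same one-line bound $\Pr(P_1=P_2)\geq e^{-\lambda_2}\geq e^{-2}$.
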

\begin{proof}
Consider the coupling of $P_1$ and $P_2$ that maximizes the probability of the event $\{P_1=P_2\}$. The probability
that $P_1$ and $P_2$ are equal can be expressed in terms of the total variation distance between the distributions of $P_1$ and $P_2$:
$$\Pr(P_1=P_2) = 1 - \|P_1 - P_2\|_{TV} = \sum_{k=0}^{\infty} \min (\Pr(P_1=k),\Pr(P_2 = k)).$$
Assume without loss of generality that $\lambda_1\leq \lambda_2$. We now consider several cases.

\medskip

\noindent I. If $\lambda_1\geq 1$ and $\lambda_2 \leq 2 \lambda_1$, then
$$\Pr(P_1=P_2) \geq \sum_{k> \lambda_2}^{\infty} \min (\Pr(P_1=k),\Pr(P_2 = k))
= \sum_{k> \lambda_2}^{\infty} \Pr(P_1 = k) = \Pr (P_1\ge \lambda_2) .$$
By Lemma~\ref{lem:Pois-div},
$$\Pr(P_1=P_2) \geq  \Pr (P_1\ge \lambda_2) \geq e^{-C\frac{(\lambda_1 - \lambda_2)^2}{\lambda_1}}\geq e^{-9C\frac{(\lambda_1 - \lambda_2)^2}{\lambda_1+\lambda_2}}.$$

\noindent II. If $\lambda_2 \geq 2 \lambda_1$, then
$$\Pr(P_1=P_2) \geq \min (\Pr(P_1=0),\Pr(P_2 = 0)) = \Pr(P_2 = 0) = e^{-\lambda_2} \geq
e^{-\frac{\nicefrac{9}{2}(\lambda_1 - \lambda_2)^2}{\lambda_1+\lambda_2}}.$$
In the last inequality we used that
$$\frac{(\lambda_1 - \lambda_2)^2}{\lambda_1+\lambda_2}\geq \frac{\nicefrac{1}{2} \lambda_2}{\nicefrac{3}{2} \lambda_2} = \frac{2\lambda_2}{9}.$$

 \noindent III. Finally, if $\lambda_1\leq 1$ and $\lambda_2 \leq 2 \lambda_1$, then, as in the previous case,
$$\Pr(P_1=P_2) \geq  \Pr(P_2 = 0) = e^{-\lambda_2} \geq e^{-2}\geq e^{-2}
e^{-\frac{(\lambda_1 - \lambda_2)^2}{\lambda_1+\lambda_2}}.$$
This finishes the proof.
\end{proof}

\begin{lemma}\label{lem:poisson-separation}
For every positive $\lambda_1 \leq \lambda_2$  there exists a joint
distribution of two Poisson random variables $P_1$ and $P_2$ such that
$$\Pr \big(P_2 \geq P_1 \text{ and } P_2 - P_1 \leq 2 (\lambda_2 - \lambda_1)\big)\geq \frac{1}{2}.$$
\end{lemma}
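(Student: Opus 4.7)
The plan is to construct the coupling by additive decomposition, which is essentially the canonical coupling of two ordered Poisson distributions. Let $\mu = \lambda_2 - \lambda_1 \geq 0$. I would take $P_1 \sim \text{Pois}(\lambda_1)$ and $Q \sim \text{Pois}(\mu)$ to be independent, and set $P_2 = P_1 + Q$. By the additivity of independent Poissons, $P_2$ is Poisson with parameter $\lambda_1 + \mu = \lambda_2$, as required. Under this coupling, $P_2 - P_1 = Q \geq 0$ deterministically, so the condition $P_2 \geq P_1$ is automatic, and it only remains to control the upper tail of $Q$.

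For the tail bound, the whole point is that we need $\Pr(Q \leq 2\mu) \geq 1/2$. Assuming $\mu > 0$, Markov's inequality gives
$$\Pr(Q \geq 2\mu) \leq \frac{\E[Q]}{2\mu} = \frac{\mu}{2\mu} = \frac{1}{2},$$
and since $\{Q \leq 2\mu\} \supseteq \{Q < 2\mu\}$, this yields $\Pr(Q \leq 2\mu) \geq 1/2$. The degenerate case $\mu = 0$ (i.e., $\lambda_1 = \lambda_2$) is immediate because $Q \equiv 0$ almost surely, and then $P_1 = P_2$ with $P_2 - P_1 = 0 = 2\mu$ surely.

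Honestly, there is no real obstacle here; the whole lemma is essentially an immediate consequence of two standard facts (additivity of Poissons and Markov). The only micro-subtlety is that Markov gives a non-strict bound on $\Pr(Q \geq 2\mu)$ rather than $\Pr(Q > 2\mu)$, but since we only need $\Pr(Q \leq 2\mu) \geq 1/2$ and not a strict version, the bound suffices as written.
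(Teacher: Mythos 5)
Your proposal is correct and is essentially the paper's own argument: both couple $P_2 = P_1 + Q$ with $Q\sim\mathrm{Pois}(\lambda_2-\lambda_1)$ independent of $P_1$, then apply Markov's inequality to $Q = P_2 - P_1$. Your write-up is if anything slightly more careful, since you handle the degenerate case $\lambda_1=\lambda_2$ and the strict-versus-non-strict inequality explicitly.
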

\begin{proof}
Observe that the Poisson distribution with parameter $\lambda_2$ stochastically dominates the Poisson distribution with parameter
$\lambda_1$ (simply because a Poisson random with parameter $\lambda_2$ can be expressed as the sum of two independent Poisson random variables with parameters $\lambda_1$ and $\lambda_2 - \lambda_1$). Thus, there exists a coupling of $P_1$ and $P_2$ such that $P_2\geq P_1$ a.s. We have
$\E[P_2-P_1] = \lambda_2-\lambda_1$, and, by Markov's inequality,
$\Pr ((P_2-P_1)\geq 2(\lambda_1-\lambda_1))\leq 1/2$.
\end{proof}
\begin{corollary}\label{cor:kappa1}
For every positive $\lambda_1 < \lambda_2$, there exists a random function $\kappa: \bbZ^{\geq 0}\to \bbZ^{\geq 0}$
such that $P + \kappa(P)$ has the Poisson distribution with parameter $\lambda_2$ if $P$ has the Poisson distribution
with parameter $\lambda_1$ ($P$ and $\kappa$ are independent).
\end{corollary}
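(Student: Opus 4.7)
\textbf{Proof proposal for Corollary~\ref{cor:kappa1}.} The plan is to exploit the standard infinite-divisibility property of the Poisson distribution: the sum of two independent Poisson random variables with parameters $\mu$ and $\nu$ is Poisson with parameter $\mu + \nu$. Concretely, I would let $Q$ be a Poisson random variable with parameter $\lambda_2 - \lambda_1 > 0$, drawn independently of the input $P$, and define the (random) function $\kappa$ to be the constant function equal to $Q$, that is, $\kappa(n) = Q$ for every $n \in \bbZ^{\geq 0}$. Then $\kappa$ is a random function taking values in constant maps $\bbZ^{\geq 0} \to \bbZ^{\geq 0}$, and it is independent of $P$ by construction.

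Given this, for any $P \sim \mathrm{Poisson}(\lambda_1)$ the random variable $P + \kappa(P)$ equals $P + Q$, which is the sum of two independent Poisson random variables with parameters $\lambda_1$ and $\lambda_2 - \lambda_1$. By the standard convolution identity for independent Poissons, $P + Q \sim \mathrm{Poisson}(\lambda_2)$, as desired.

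There is essentially no obstacle here; the only thing to be a little careful about is verifying that $\kappa$ is a bona fide random function in the sense the statement requires, and that the requisite independence between $P$ and $\kappa$ really does hold. Both are automatic from the explicit construction via $Q$. If one prefers a proof that does not use the convolution fact as a black box, it can be verified directly in one line by computing $\Pr[P + Q = k]$ as a sum over $j$ of $\Pr[P = j]\Pr[Q = k - j]$ and recognizing the binomial expansion of $(\lambda_1 + (\lambda_2 - \lambda_1))^k / k!$ with the factor $e^{-\lambda_2}$, which gives the Poisson$(\lambda_2)$ probability mass function.
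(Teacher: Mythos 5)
Your proof is correct and is essentially the same as the paper's: the paper routes through the coupling of Lemma~\ref{lem:poisson-separation}, but that coupling is itself built by writing $P_2 = P_1 + Q$ with $Q\sim\mathrm{Poisson}(\lambda_2-\lambda_1)$ independent of $P_1$, so the paper's $\kappa(i)$ has exactly the law of your $Q$. Your version simply makes the construction explicit and skips the intermediate lemma.
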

\begin{proof}
Consider Poisson random variables $P_1$ and $P_2$ as in Lemma~\ref{lem:poisson-separation}. Let $\kappa(i) = j$
with probability $\Pr (P_2 = i + j \given P_1 = i)$. Then, clearly, $\kappa (P_1)$ is distributed as $P_2$.
\end{proof}

\begin{corollary}\label{cor:kappa}
For every positive $\lambda_1 < \lambda_2$, there exists a random function $\hat{\kappa}: \bbZ^{\geq 0}\to \bbZ^{\geq 0}$
such that $\hat{\kappa}(P_1)\leq 2(\lambda_2 - \lambda_1)$ a.s. for a Poisson random variable $P_1$ with parameter $\lambda_1$
and there exists a coupled Poisson random variable $P_2$ with parameter $\lambda_2$ such that
$$\Pr (P_2 = P_1+\hat{\kappa}(P_1))\geq 1/2.$$
\end{corollary}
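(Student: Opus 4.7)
The plan is to bootstrap directly off Lemma~\ref{lem:poisson-separation}: that lemma already gives us a coupling of $P_1\sim\text{Poisson}(\lambda_1)$ and $P_2\sim\text{Poisson}(\lambda_2)$ with $P_2\ge P_1$ almost surely and $\Pr(P_2-P_1\le 2(\lambda_2-\lambda_1))\ge 1/2$. What remains is to convert this coupling into a random function $\hat\kappa$ whose values are truncated to lie in $\{0,1,\dots,\lfloor 2(\lambda_2-\lambda_1)\rfloor\}$.

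First I would set $\Delta=\lfloor 2(\lambda_2-\lambda_1)\rfloor$ and fix the coupling $(P_1,P_2)$ from Lemma~\ref{lem:poisson-separation}. For each $i\in\bbZ^{\ge 0}$ in the support of $P_1$, I would define a probability distribution on $\{0,1,\dots,\Delta\}$ by
\[
\mu_i(j)=\begin{cases}\Pr(P_2-P_1=j\mid P_1=i) & \text{if }1\le j\le \Delta,\\ \Pr(P_2-P_1=0\mid P_1=i)+\Pr(P_2-P_1>\Delta\mid P_1=i) & \text{if }j=0.\end{cases}
\]
Then I would declare $\hat\kappa$ to be the random function which, on input $i$, outputs a sample from $\mu_i$ (for $i$ outside the support of $P_1$ the choice is irrelevant, say $\hat\kappa(i)=0$). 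By construction $\hat\kappa(i)\le\Delta\le 2(\lambda_2-\lambda_1)$ for every~$i$, so the boundedness requirement is automatic.

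Second, I would exhibit the coupling of $\hat\kappa(P_1)$ with~$P_2$ promised by the corollary. On the same probability space carrying $(P_1,P_2)$, define
\[
\hat\kappa(P_1)=\begin{cases}P_2-P_1 & \text{if }P_2-P_1\le \Delta,\\ 0 & \text{otherwise.}\end{cases}
\]
A one-line check shows that conditional on $P_1=i$, the distribution of $\hat\kappa(P_1)$ is exactly $\mu_i$, so this assignment is consistent with the random-function description above. Moreover $P_2=P_1+\hat\kappa(P_1)$ holds exactly on the event $\{P_2-P_1\le \Delta\}$, whose probability is at least $1/2$ by Lemma~\ref{lem:poisson-separation}.

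The only conceptual subtlety is the meaning of ``random function'': $\hat\kappa$ is best interpreted as a stochastic kernel from $\bbZ^{\ge 0}$ to $\bbZ^{\ge 0}$, and the corollary is an assertion about the existence of a joint law of $(P_1,\hat\kappa(P_1),P_2)$ with the right marginals and a truncated gap, which is precisely what the construction above provides. There is no hard step; the main thing to be careful about is that reassigning the ``overshoot'' mass to~$0$ (rather than discarding it) is what keeps $\hat\kappa$ into $\{0,\dots,\Delta\}$ while preserving the success probability $\ge 1/2$.
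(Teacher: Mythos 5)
Your proposal is correct and follows essentially the same route as the paper: both start from the coupling of Lemma~\ref{lem:poisson-separation}, define $\hat\kappa$ via the conditional law of the gap $P_2-P_1$ given $P_1=i$ (which is exactly the random function $\kappa$ of Corollary~\ref{cor:kappa1}), and then modify $\hat\kappa$ on the overshoot event $\{P_2-P_1>2(\lambda_2-\lambda_1)\}$ to enforce the bound while keeping the success probability at least $1/2$. The only cosmetic difference is that the paper truncates the overshoot with a $\min$ whereas you reassign it to $0$; either works.
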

\begin{proof}
We let $\hat{\kappa}(i) = \min (\kappa(i), 2(\lambda_2-\lambda_1))$ and $P_2 = P_1 +\kappa(P_1)$. Then, clearly $\hat{\kappa}(i)\leq 2(\lambda_2-\lambda_1)$ and
$P_1 + \hat{\kappa}(P_1) = P_1 + \kappa(P_1)$ with probability at least $1/2$.
\end{proof}

\end{document}